\documentclass[11pt]{article}
\usepackage{graphicx} 
\usepackage{fullpage}
\usepackage{amsfonts}
\usepackage{xcolor}
\usepackage{hyperref}
\usepackage{amsmath,amssymb,amsthm}
\usepackage{cleveref}
\usepackage{enumerate}
\usepackage{thm-restate}
\usepackage[T1]{fontenc}
\usepackage[colorinlistoftodos,prependcaption,textsize=small]{todonotes}

\newcommand{\eps}{\varepsilon}
\renewcommand{\angle}{\sphericalangle}
\newcommand{\slope}{\mathrm{slope}}

\definecolor{BrickRed}{rgb}{0.8, 0.25, 0.33}
\def\EMPH#1{\emph{\textcolor{BrickRed}{#1}}}

\theoremstyle{plain}
\newtheorem{theorem}{Theorem}
\newtheorem{lemma}[theorem]{Lemma}

\newtheorem{definition}[theorem]{Definition}

\newtheorem{problem}[theorem]{Problem}
\usepackage{palatino}

\hypersetup{
  colorlinks=true,
  linkcolor=red!70!black,
  linktoc=all,
  citecolor=purple
}

\title{Euclidean Noncrossing Steiner Spanners of\\ Nearly
Optimal Sparsity}
\author{
Sujoy Bhore \thanks{Department of Computer Science \& Engineering, Indian Institute of Technology Bombay, Mumbai, India. Work supported in part by ANRF ARG-MATRICS, Grant 002465. Email: \texttt{sujoy@cse.iitb.ac.in}}\\
\and Sándor Kisfaludi‑Bak \thanks{Aalto University, Espoo, Finland. Supported by the Research Council of Finland, Grant 363444. Email: \texttt{sandor.kisfaludi-bak@aalto.fi}}\\
\and Lazar Milenković \thanks{Tel Aviv University, Israel. Funded by a grant from the United States-Israel Binational Science Foundation (BSF), Jerusalem, Israel, and the United States National Science Foundation (NSF). Email: \texttt{milenkovic.lazar@gmail.com}}\\
\and Csaba D. Tóth \thanks{Department of Mathematics, California State University Northridge, Los Angeles, CA; and Department of Computer Science, Tufts University, Medford, MA, USA.  Research supported, in part, by the NSF award DMS-2154347. Email: \texttt{csaba.toth@csun.edu}. }\\
\and  Karol W\k{e}grzycki \thanks{Max Planck Institute for Informatics. Supported by the Deutsche Forschungsgemeinschaft (DFG, German Research Foundation) grant number 559177164. Email:     \texttt{kwegrzyc@mpi-inf.mpg.de}. }\\
\and Sampson Wong \thanks{University of Copenhagen. Supported by the European Union's Marie Skłodowska-Curie Actions Postdoctoral Fellowship, grant number 101146276. Email: \texttt{sampson.wong123@gmail.com}.}}
\date{\vspace{-3\baselineskip}}

\begin{document}

\maketitle
\begin{abstract}
A Euclidean noncrossing Steiner $(1+\eps)$-spanner for a point set $P\subset\mathbb{R}^2$ is a planar straight-line graph that, for any two points $a, b \in P$, contains a path whose length is at most $1+\eps$ times the Euclidean distance between $a$ and $b$. 
We construct a Euclidean noncrossing Steiner $(1+\eps)$-spanner with $O(n/\eps^{3/2})$ edges for any set of $n$ points in the plane. This result improves upon the previous best upper bound of $O(n/\eps^{4})$ 
obtained nearly three decades ago.
We also establish an almost matching lower bound: There exist $n$ points in the plane for which  any Euclidean noncrossing Steiner $(1+\eps)$-spanner has $\Omega_\mu(n/\eps^{3/2-\mu})$ edges for any $\mu>0$. Our lower bound uses recent generalizations of the Szemer\'edi-Trotter theorem to disk-tube incidences in geometric measure theory. 
\end{abstract}
\newpage
\setcounter{tocdepth}{2}
\tableofcontents

\newpage

\section{Introduction}\label{sec:intro}

Spanners are a classical tool for data compression in graphs and network optimization. Formally, a $t$-spanner for an edge-weighted graph $G=(V,E,w)$ and a (stretch) parameter $t\geq 1$,  is a subgraph $H$ of $G$ in which the shortest-path distance between any two vertices in $V$ is at most $t$ times larger than in $G$~\cite{AhmedBSHJKS20}. \emph{Metric spanners} can approximate distances in a finite metric space $(X,d)$ by setting $V=X$ and the edge weights to be the metric distances between the vertices.
Geometric networks~\cite{NS07} are an important class of metric spanners, with applications in the design of physical networks in low-dimensional Euclidean spaces. Results in Euclidean spaces are also applicable for other settings via metric embeddings into Euclidean spaces~\cite{AichholzerBBCFM22,Bourgain85,LeightonM10,LLR95}. 

For Euclidean spanners in the plane, research efforts have diverged into two distinct regimes: (1) \emph{$(1+\eps)$-spanners}, where the stretch $t=1+\eps$ is arbitrarily close to 1, and the minimum size and weight of a $(1+\eps)$-spanner is bounded by a function of $1/\eps$, and (2) \emph{plane} spanners, where the edges of the spanner are noncrossing line segments in $\mathbb{R}^2$. Researchers have made strides in both regimes over the last decade, and have uncovered optimal or near-optimal trade-offs between key parameters (see details below). However, very little attention was given to spanners that meet a dual objective: $1+\eps$ stretch for arbitrarily small $\eps>0$ \emph{and} noncrossing edges in $\mathbb{R}^2$. 

This paper focuses on this dual objective. The problem may have avoided scrutiny because it has seemingly trivial answers: 
On the one hand, a simple instance of four points at the vertices of a square shows that a noncrossing straight-line Euclidean spanner cannot achieve stretch less than $\sqrt2$. If we insist on noncrossing edges and stretch close to 1, then we must allow \emph{Steiner points}.  On the other hand, if Steiner points are allowed, then the planarization of an optimal $(1+\eps)$-spanner (by introducing Steiner vertices at edge crossings) provides a \emph{noncrossing} Steiner $(1+\eps)$-spanner with the same weight. However, planarization substantially increases the number of edges (hence the size of the spanner). Our goal is to find the best trade-offs between $\eps>0$ and the number of Steiner points for a Euclidean noncrossing Steiner $(1+\eps)$-spanner in $\mathbb{R}^2$. 

\begin{problem}\label{problem:1}
   Determine $s(n,\eps)$, defined as the minimum integer such that every set of $n$ points in Euclidean plane admits a Euclidean noncrossing Steiner $(1+\eps)$-spanner with at most $s(n,\eps)$ Steiner points. 
\end{problem}
This problem was also posed as Open Problem~17 in a survey by Bose and Smid~\cite[Section~4]{BoseS13} in 2013. The best bound available at that time was given by Arikati et al.~\cite{ArikatiCCDSZ96}: They constructed a Euclidean noncrossing Steiner $(1+\eps)$-spanner with $O(n/\eps^4)$ Steiner points by taking rectangular decompositions for $n$ points in $O(1/\eps)$ equally spaced directions.

\paragraph{Previous work.}
There are several possible approaches to address \Cref{problem:1}. 
For $n$ points in the plane, there are $(1+\eps)$-spanners with $O(n/\eps)$ edges (e.g., $\theta$-graphs~\cite{Clarkson87}); and this bound is the best possible. Since the edges of a spanner may pairwise cross, a na\"ive analysis of straightforward  planarization would lead to a Euclidean Steiner $(1+\eps)$-spanner with $O(n^2/\eps^2)$ Steiner points. 

Alternatively, one could try to bound the number of edge crossings in a $(1+\eps)$-spanner (without Steiner points). Given a set $P\subset \mathbb{R}^2$ of $n$ points, the greedy $(1+\eps)$-spanner of Alth\"ofer et al.~\cite{althofer1993sparse} is constructed as follows: sort the $\binom{n}{2}$ possible edges by nondecreasing length, initialize an empty graph $H=(P,\emptyset)$, and add an edge $ab$ to $H$ if $d_H(a,b)>(1+\eps)\cdot d(a,b)$. The greedy $(1+\eps)$-spanner has $O(n/\eps)$ edges~\cite{althofer1993sparse,CDNS92,ChandraDNS95}; and Eppstein and Khodabandeh~\cite{EppsteinK21} proved that every edge $ab$ crosses $O(1/\eps^8)$ edges that are longer than $ab$. Consequently, it has $O(n/\eps^9)$ crossings: planarization would create this many Steiner points. This bound is weaker than the previous bound of $O(n/\eps^4)$ by Arikati et al.~\cite{ArikatiCCDSZ96}.

For $n$ points in the plane and $k\in \mathbb{N}$, the \emph{emanation graph of grade $k$}, introduced by Hamedmohseni et al.~\cite{HamedmohseniRM23}, is constructed by shooting $2^{k+1}$ rays from each given point, where the shorter rays stop the longer ones upon collision. The emanation graph of grade $k$ is a noncrossing Steiner spanner with $O(2^k n)$ Steiner points. Hamedmohseni et al.~\cite{HamedmohseniRM23} show that the stretch factor is at most $\sqrt{10}$, but for every $k\in \mathbb{N}$, there are point sets for which the stretch factor is arbitrarily close to $\sqrt{2}$; hence this approach does not lead to $(1+\eps)$-spanners. 

\begin{figure}[!ht]
    \centering
    \includegraphics[width=0.8\textwidth]{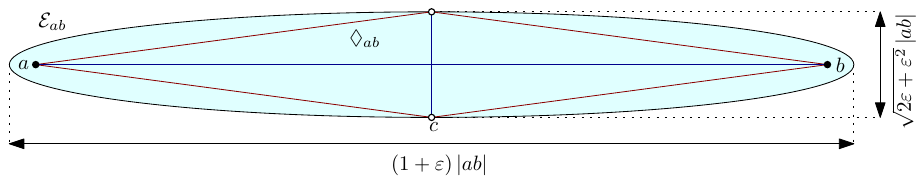}
    \caption{An ellipse $\mathcal{E}_{ab}$ with foci $a$ and $b$, and major axis of length $(1+\eps)\,|ab|$, and rhombus $\lozenge_{ab}$.
    }
    \label{fig:ellipse}
\end{figure}

\paragraph{Cone-restricted spanners.}
Let $P$ be a set of $n$ points in the plane. For any $a,b\in P$, any $ab$-path of length at most $(1+\eps)\, |ab|$ lies in an ellipse $\mathcal{E}_{a,b}$ with foci $a$ and $b$ and major axis $(1+\eps)\cdot |ab|$; see \Cref{fig:ellipse}. 
Note that for small $\eps>0$, the ellipse $\mathcal{E}_{ab}$ is long and narrow. It is known that in every $ab$-path of length at most $(1+\eps)\, |ab|$, the total length of the edges  
that make an angle $\alpha\leq O(\sqrt{\eps})$ with the line segment $ab$ is $\Omega(|ab|)$~\cite{BhoreT22}. The angle threshold $\alpha\leq O(\sqrt{\eps})$ is the best possible: 
For example, if $c$ is an intersection point of $\mathcal{E}_{ab}$ and its minor axis, the $ab$-path $(a,c,b)$ has length $|ac|+|cb|=(1+\eps)\, |ab|$, but both $ac$ and $cb$ make an angle of $\Theta(\sqrt{\eps})$ with $ab$. We define a variant of Euclidean $(1+\eps)$-spanners (with or without Steiner points), where we require an $ab$-path, for all $a,b\in P$, in which \emph{all} edges  make an angle $O(\sqrt{\eps})$ with the line segment $ab$. 

\begin{definition}\label{def:cone}
    Let $P$ be a set of $n$ points in $\mathbb{R}^d$, for constant dimension $d\in \mathbb{N}$, and let $\alpha\in (0,\pi/2]$.
    A (Steiner) graph $G=(V,E)$, with $P\subseteq V$, is a \EMPH{cone-restricted} (Steiner) $(1+\eps)$-spanner for $P$ if for every $a,b\in P$, there is an $ab$-path $(a=p_0,p_1,\ldots , p_m=b)$ in $G$ such that $\angle(\overrightarrow{ab},\overrightarrow{v_{i-1}v_i})\leq \sqrt{\eps}$ for all $i=1,\ldots ,m$.
\end{definition} 

Note that if $G$ is a cone-restricted (Steiner) $(1+\eps)$-spanner for $P$, then for every $a,b\in P$, there is an $ab$-path of length at most $(1+\eps)\, |ab|$ that lies in the rhombus $\lozenge_{ab}$ spanned by $a$, $b$, and the two intersection points of $\mathcal{E}_{ab}$ with its minor axis; see \Cref{lemma:cone_restricted}.

\subsection{Contributions and technical highlights}
\label{ssec:results}

\paragraph*{Upper bound.}
Our first contribution is a noncrossing Steiner $(1+\varepsilon)$-spanner with $O(n/ \varepsilon^{3/2})$ Steiner vertices. This improves upon the previous best result by Arikati et al.~\cite{ArikatiCCDSZ96}, which has $O(n/\varepsilon^4)$ Steiner vertices.

\begin{restatable}{theorem}{upperbound}

\label{thm:upper}
     For every $\eps>0$ and every set of $n$ points in Euclidean plane, there is a noncrossing Steiner $(1+\varepsilon)$-spanner with $O(n/ \varepsilon^{3/2})$ Steiner vertices. 
     Furthermore, there is such a spanner that is cone-restricted, and can be computed in $O((n \log n)/ \varepsilon^{3/2})$ time.
\end{restatable}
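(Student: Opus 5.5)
We will build the spanner as a union of $O(1/\sqrt{\eps})$ noncrossing ``directional'' structures, one for each direction in a set $D$ of $k=\Theta(1/\sqrt{\eps})$ equally spaced directions with angular spacing at most $\sqrt{\eps}/2$. We then planarize the union and bound the number of crossings between structures of different directions by $O(n/\eps^{3/2})$. For each direction $\theta\in D$, rotate so that $\theta$ is horizontal, and take the \emph{shadow} or \emph{stair} decomposition: shoot from every point of $P$ a horizontal ray in each direction until it hits a vertical ``wall'' of a suitable planar subdivision. Concretely, we use a rectangular (axis-parallel in rotated coordinates) decomposition $R_\theta$ of the bounding box induced by $P$, of complexity $O(n)$, in which any two points $a,b$ with $\angle(\overrightarrow{ab},\theta)\le\sqrt{\eps}/2$ are joined by an $x$-monotone path. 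We will arrange this path to use only edges of direction $\theta$ together with edges of directions $\theta\pm\sqrt{\eps}/2$. Such a path is cone-restricted in the sense of Definition~\ref{def:cone}, and by Lemma~\ref{lemma:cone_restricted} it has length at most $(1+\eps)|ab|$.

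Concretely, for fixed $\theta$ we replace the rectangles of Arikati et al.\ by \emph{parallelogram} (skewed) decompositions. For each point $p$ we draw the two segments from $p$ in directions $\theta+\sqrt{\eps}/2$ and $\theta-\sqrt{\eps}/2$, which are mutually blocking in the emanation-graph sense: a segment stops when it hits an earlier or shorter one, processed by a sweep in direction $\theta$. Any target $b$ inside the double cone of $a$ is then reachable by following a zig-zag between these two directions, staying in the rhombus $\lozenge_{ab}$. Each direction class yields a planar graph $G_\theta$ with $O(n)$ vertices and edges, and it can be computed by a sweep in $O(n\log n)$ time. Summing over the $k$ directions gives $O(n/\sqrt{\eps})$ edges and $O((n\log n)/\sqrt{\eps})$ time before planarization.

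The main step, and the expected obstacle, is bounding the number of crossings in $\bigcup_\theta G_\theta$. Each pair $(G_\theta,G_{\theta'})$ can cross $\Theta(n)$ times at worst, so a naive count gives $O(n k^2)=O(n/\eps)$ crossings, which would already beat $n/\eps^{3/2}$. The bound therefore looks easy, but the catch is that the segments of $G_\theta$ cannot simply stop when they meet walls of other directions without destroying paths. The charging argument I would try is the following. Charge a crossing between an edge $e\in G_\theta$ and an edge $f\in G_{\theta'}$ to the endpoint of the shorter edge near the crossing. Then show, using the sweep order, that each point $p\in P$ and each direction pair receives $O(1/\sqrt{\eps})$ charges. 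The reason is that a segment from $p$ of direction $\theta$ can be crossed only by segments of direction $\theta'$ emanating from points in a thin region, and packing in the angular scale gives a factor $1/\sqrt{\eps}$. Combined with $k^2=1/\eps$ pairs this gives $O(n/\eps^{3/2})$. If the charging fails for long edges, I would fall back on a hierarchical (quadtree-level) refinement in which edges are truncated at a scale proportional to their ``witness'' distance, paying for the truncations with Steiner vertices.

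Finally, after inserting a Steiner vertex at every crossing, the stretch and cone-restriction properties are preserved, since subdividing edges does not change directions. Sorting the intersections per edge with a sweep gives the running time $O((n\log n)/\eps^{3/2})$.
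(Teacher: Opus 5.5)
You have the right architecture: a union of $\Theta(1/\sqrt{\eps})$ planar structures whose edges use only two nearly parallel directions, so that staircase paths are cone-restricted, followed by planarization. That matches the paper. However, both load-bearing steps have genuine gaps.

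\textbf{First gap: the per-direction structure.} You claim that mutually blocking rays in directions $\theta\pm\sqrt{\eps}/2$ give a zig-zag path to every $b$ in the double cone of $a$. This is false, even when $\overrightarrow{ab}$ has direction exactly $\theta$. Work in coordinates where the two ray directions are the axes; a sweep in direction $\theta$ then processes points by increasing $x+y$. Take $a=(0,0)$, $b=(10,10)$, $c=(-0.02,0.01)$, $d=(10.01,5)$, processed in the order $c,a,d,b$. Then:
\begin{itemize}
\item the upward ray of $a$ stops on the rightward ray of $c$ at $(0,0.01)$, and nothing ever stops on the rightward ray of $a$;
\item the leftward ray of $d$ runs to the upward ray of $c$ at $x=-0.02$, and the downward ray of $d$ stops on the rightward ray of $c$ at $(10.01,0.01)$;
\item the downward and leftward rays of $b$ stop on the leftward ray of $d$ and the upward ray of $c$.
\end{itemize}
Every up/right path from $a$ stays at height at most $0.01$ until $x=10.01>x(b)$, so $b$ is unreachable. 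With emanation-style simultaneous growth, $c=(-0.001,0.01)$ has the same effect.

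The culprit is the long thin cell $[0,10.01]\times[0,0.01]$: ray shooting gives no control over cell shapes. The paper gets this control from the BBD. $T_i(G_i)$ is the BBD of $T_i(P)$, whose tiles and holes have aspect ratio at most $3$ and whose holes are sticky, with each tile cut by $K=100$ horizontal and vertical lines. Moreover, $G_i$ only has to serve pairs in the middle class $[(i+1)\pi/k,(i+2)\pi/k)$, strictly between the edge directions $i\pi/k$ and $(i+3)\pi/k$. $T_i$ maps these pairs to directions in $[\pi/12,5\pi/12)$ (\Cref{lemma:strip}). Bounded aspect ratio together with slopes bounded away from the axes is exactly what lets the segment $T_i(a)T_i(b)$ be replaced tile by tile by a staircase (\Cref{lemma:staircase}).

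\textbf{Second gap: the crossing count.} You rightly call this the main step, but it is not proved. Your remark that a pair $(G_\theta,G_{\theta'})$ crosses only $O(n)$ times cannot hold for any valid construction. On the instances of \Cref{thm:cone}, every cone-restricted noncrossing spanner needs $\Omega(n/\eps^{3/2})$ Steiner vertices, so the $\Theta(1/\eps)$ pairs must cross $\Omega(n/\sqrt{\eps})$ times on average. Your charging scheme is only a heuristic, and its packing step has nothing to pack, again because ray-shooting cells can be arbitrarily thin and can carry arbitrarily many edges.

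The missing ingredient is \Cref{lemma:intersecting_longer_edges}:
\begin{itemize}
\item An edge of $G_j$ that meets $e\in G_i$ and is at least as long as $e$ lies in a BBD tile whose sides are both at least $|e|/3$.
\item Under $T_j^{-1}$ these tiles become pairwise disjoint parallelograms with angle $3\pi/k$, each covering area $\Omega(|e|^2/k)$ of the disk of radius $2|e|$ around $e$.
\item Hence there are $O(k)$ such tiles, each containing $O(K^2)=O(1)$ edges.
\end{itemize}
Charging each crossing to the shorter edge then gives $O(nk)\cdot O(k^2)=O(n/\eps^{3/2})$ crossings. Your fallback (hierarchical truncation paid for by Steiner vertices) is not developed far enough to replace this.
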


Arikati et al.~\cite{ArikatiCCDSZ96} construct a set of noncrossing graphs~$G_i$ for $1 \leq i \leq k$. The final spanner is $\bigcup_{i=1}^k G_i$, where a Steiner vertex is added at every edge crossing between $G_i$ and $G_j$, $i \neq j$. Our construction improves on the construction of Arikati et al.~\cite{ArikatiCCDSZ96} in several ways. First, we use fewer graphs~$G_i$ when constructing our spanner. Specifically, we use $k = O(1/\sqrt{\varepsilon})$ instead of $k = O(1/\varepsilon)$ graphs. Whereas the previous construction~\cite{ArikatiCCDSZ96} uses $O(1/\varepsilon)$ rotated copies of the point set to approximate the $L_2$ distance with the $L_1$ distance in one of the copies, we instead use $O(1/\sqrt{\varepsilon})$ carefully chosen linear transformations to achieve cone-restricted paths between all pairs of points. Second, our graphs $G_i$ are of smaller size than those in Arikati et al.~\cite{ArikatiCCDSZ96}. Both constructions obtain $G_i$ by refining the Balanced Box Decomposition into an axis-parallel spanner under the $L_1$ metric. We are able to use the properties of our linear transformations to obtain the same stretch guarantee using a coarser refinement of the Balanced Box Decomposition. In particular, we obtain $|G_i| = O(n)$, improving on $|G_i| = O(n/\varepsilon^2)$ in~\cite{ArikatiCCDSZ96}. Third, our construction fills some details that are missing from both~\cite{ArikatiCCDSZ96} and \cite[Chapter 4]{BoseS13}.

\paragraph{Lower bounds.}
Our second contribution is an almost matching lower bound on the number of
Steiner vertices in any Euclidean noncrossing Steiner $(1+\eps)$-spanner.
\begin{restatable}{theorem}{lowerbound}
\label{thm:lower}
\label{thm:lower+}
     For every sufficiently small $\eps,\mu>0$ and every $n\in \mathbb{N}$, there exists a set of $n$ points in the Euclidean plane for which every noncrossing Steiner $(1+\varepsilon)$-spanner has $\Omega_\mu(n/\varepsilon^{3/2-\mu})$ Steiner vertices, where the constant hidden in the $\Omega_\mu(.)$ notation depends only on~$\mu$.     
\end{restatable}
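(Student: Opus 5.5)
We plan to prove \Cref{thm:lower} by a point set in which many well-separated pairs force short, nearly parallel path pieces, and many such pieces in slightly different directions must cross. By placing $\Theta(\eps^{3/2})\,n$ disjoint, far-apart copies of a single gadget, it suffices to give, for $m=\Theta(\eps^{-3/2})$ or so, a configuration of $m$ points whose noncrossing Steiner $(1+\eps)$-spanners need $\Omega_\mu(\eps^{-3+\mu})$ Steiner vertices, i.e.\ $\Omega(m/\eps^{3/2-\mu})$. Concretely, the gadget is a section of a grid-like set, say a $\sqrt{m}\times\sqrt{m}$ grid of points spaced $\delta$ apart, with $m$ chosen so that pairs at distance $\Theta(1)$ span $\Theta(1/\sqrt{\eps})$ distinguishable directions at angular resolution $\sqrt{\eps}$. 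For each selected pair $ab$, any $(1+\eps)$-path lies in the ellipse $\mathcal{E}_{ab}$, which sits inside a tube of length $|ab|$ and width $\Theta(\sqrt{\eps}\,|ab|)$. Moreover, by the fact cited from \cite{BhoreT22}, a constant fraction of its length consists of edges within angle $O(\sqrt{\eps})$ of $ab$.

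The core step is to convert crossings into Steiner vertices. Take two pairs $ab$ and $cd$ whose directions differ by more than $C\sqrt{\eps}$ and whose tubes intersect in a parallelogram region $R$. We argue that the $ab$-path and the $cd$-path must intersect inside (a neighbourhood of) $R$, since each separates the tube of the other. Because the graph is planar, every such intersection point is a vertex of the graph, namely a Steiner vertex or a shared vertex. A single vertex can, however, serve as the intersection for many pairs. So we must show that the number of distinct vertices is at least the number of "essentially distinct" tube intersections. This is an incidence count. Each vertex $v$ lies in only boundedly many tubes of each $\sqrt{\eps}$-direction class, up to a multiplicity factor that is the quantity we need to control. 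Hence the number of vertices is at least the number of tube pairs divided by the maximum number of tubes through a $\sqrt{\eps}$-scale disk, squared.

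To bound that multiplicity I would invoke the recent disk--tube incidence theorems, i.e.\ the Szemer\'edi--Trotter generalizations for $\delta$-tubes and $\delta$-balls satisfying Katz--Tao/Frostman spacing conditions (Fu--Ren, Guth--Solomon--Wang type). These bound the number of $r$-rich $\delta$-disks for a well-spaced tube family by $\delta^{-\mu}$ times the Szemer\'edi--Trotter bound. The plan is as follows. Choose the pair family so that the tubes $\{\mathcal{E}_{ab}\}$ are $\sqrt{\eps}$-tubes forming a Katz--Tao set, which the grid gives us. Then cover the vertex set by $\sqrt{\eps}$-disks. Each crossing of two transverse tubes forces a vertex in their common disk. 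The incidence bound then implies that the disks containing forced vertices must number $\gtrsim \eps^{-3/2+\mu}$ per unit of "mass". Summing over the copies gives the theorem.

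The main obstacle is this incidence step. There are three parts to it. First, we must justify rigorously that the two paths must actually meet, even though the paths are only \emph{mostly} aligned and could wander within the ellipse. For this we would use the separation argument within the ellipses together with the requirement that endpoints lie outside the other's ellipse. Second, we must verify the spacing hypotheses of the cited disk--tube theorem for our chosen family. Third, we must account for the $\eps^{-\mu}$ loss, which is exactly the source of the $\mu$ in the statement.
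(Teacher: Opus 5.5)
There is a genuine gap in your core step. You localize each $\gamma_{ab}$ only to its ellipse $\mathcal{E}_{ab}$, which is a tube of width $\Theta(\sqrt{\eps})$, and you count forced vertices at resolution $\sqrt{\eps}$. At that scale the incidence machinery gives nothing. In the basic two-sided configuration, a $\sqrt{\eps}$-disk near the middle of the square meets $\Theta(1/\sqrt{\eps})$ ellipses, coming from a constant fraction of all direction classes. So the multiplicity you want to control really is $\Theta(1/\sqrt{\eps})$, and your count yields only $\Theta(\eps^{-2})/\Theta(\eps^{-1})=\Theta(1/\eps)$ vertices, i.e.\ $\Omega(n/\sqrt{\eps})$.

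In the language of \Cref{thm:FGR22}, parallel ellipses at distance $\Theta(\sqrt{\eps})$ force $X,Y=\Theta(1/\delta)$. Then $|\mathcal{T}_{\max}|=\Theta(\delta^{-2})$, and the admissible range $r>\delta^{1-2\mu}|\mathcal{T}_{\max}|$ lies above the largest possible richness. Thinning to one ellipse per direction makes the theorem applicable, but leaves only $\Theta(1/\eps)$ crossing pairs in total.

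More fundamentally, containment in the ellipses plus transversal crossing is consistent with few Steiner points. As the paper notes in its conclusion, a grid of mesh $\Theta(\sqrt{\eps})$ routes every pair inside its ellipse using $O(n/\sqrt{\eps})$ Steiner points. In general, only $O(m/\eps)$ essentially distinct $\sqrt{\eps}$-disks lie within distance $O(1)$ of $m$ points. So your target of $\eps^{-3+\mu}$ vertices for a gadget of $m=\Theta(\eps^{-3/2})$ points cannot come from counting such disks. The $(1+\eps)$ length bound has to be used quantitatively, not just through $\gamma_{ab}\subset\mathcal{E}_{ab}$.

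The missing idea is a finer localization at the intermediate scale $\sqrt{\eps}$. Tile the middle square by $\Theta(1/\eps)$ windows of side $\Theta(\sqrt{\eps})$. Suppose that inside a window, $\gamma_{ab}$ leaves the strip of width $\Theta(\eps)$ around its entry--exit chord, or that chord deviates in slope from $ab$ by $\Omega(\sqrt{\eps})$. Then the path wastes $\Omega(\eps^{3/2})$ length in that window. Since the total excess is at most $\eps|ab|$, this happens in at most $c/\sqrt{\eps}$ windows for a small constant $c$ (\Cref{lem:lowerbasics}).

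Averaging then gives $\Omega(1/\eps)$ well-behaved windows (\Cref{lem:wellbehaved}). Each contains $\Theta(1/\sqrt{\eps})$ good paths from two transverse bundles, one per selected slope class of width $\Theta(\sqrt{\eps})$. Each good path is confined to a $\Theta(\eps)$-strip, and every positive path crosses every negative one inside the window (\Cref{lem:essenitiallydistinct}).

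Rescaling a window by $\Theta(1/\sqrt{\eps})$ turns these strips into $\delta$-tubes with $\delta=\Theta(\sqrt{\eps})$. Now there are only $\Theta(1/\delta)$ of them, with $\delta$-separated directions, so $X=\Theta(1/\delta)$ and $Y=1$. In this regime \Cref{thm:FGR22} is nontrivial for $r\gtrsim\eps^{-\mu}$. Most of the $\Theta(1/\eps)$ crossing pairs in the window then avoid $r_0$-rich disks with $r_0=\Theta_\mu(\eps^{-\mu})$. This forces $\Omega_\mu(\eps^{-1+2\mu})$ Steiner vertices per window, and hence $\Omega_\mu(\eps^{-2+2\mu})$ for the basic gadget of $\Theta(\eps^{-1/2})$ points. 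Taking $\Theta(n\sqrt{\eps})$ far-apart copies and replacing $\mu$ by $\mu/2$ finishes the proof; no dense grid gadget is needed.
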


To the best of our knowledge, the best previous unconditional lower bound
follows from the known size lower bound of $\Omega(n/\sqrt{\eps})$ for
Euclidean Steiner $(1+\eps)$-spanners~\cite{LeS25}, which also applies to
noncrossing spanners. In this paper, the canonical example to prove lower
bounds is the following construction: consider the unit square $[0,1]^2$ and let
$A$ (resp., $B$) be the set of equally spaced points on the left (resp., right)
side of $[0,1]^2$ so that the distance between any two consecutive points in $A$
(resp., $B$) is $4\sqrt{\eps}$. The point set $P$ is the union of points in $A$
and $B$; see \Cref{fig:lb} for an illustration. Observe that: (i) the minimum
pairwise distance in $P$ is $4\sqrt{\eps}$, (ii) the maximum pairwise distance
(a.k.a., diameter) in $P$ is $\sqrt{2}$, and (iii) for any $a \in A$ and $b \in
B$, the slope of the segment $ab$ is between $-1$ and $1$. Let $G$ be a
noncrossing Steiner $(1+\eps)$-spanner with the minimal number of Steiner points
for $P$.

\begin{figure}[!ht]
    \centering
    \includegraphics[width=0.35\textwidth]{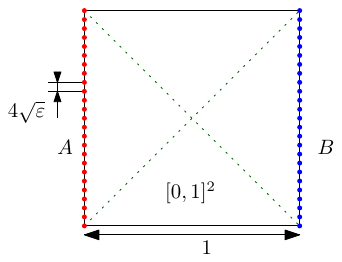}
    \caption{Point set $A\cup B$, where $A$ and $B$ lie on two opposite sides of a unit square.}
    \label{fig:lb}
\end{figure}

First, we sketch key ideas behind a weaker $\Omega(n/\eps)$ lower bound
(see~\Cref{ssec:simplebound} for the full proof).
\begin{restatable}{theorem}{lowerboundsimple}
     For every sufficiently small $\eps>0$ and every $n\in \mathbb{N}$, there exists a set of $n$ points in the Euclidean plane for which every noncrossing Steiner $(1+\varepsilon)$-spanner has $\Omega(n/\varepsilon)$ Steiner vertices.     
\label{thm:lower-}
\end{restatable}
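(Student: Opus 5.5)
We prove the $\Omega(n/\eps)$ lower bound using the canonical instance of \Cref{fig:lb}. It suffices to construct, for $m=\Theta(1/\sqrt{\eps})$, a set of $2m$ points forcing $\Omega(1/\eps^{3/2})=\Omega(m/\eps)$ Steiner vertices. For general $n$, we place $\lfloor n/(2m)\rfloor$ scaled copies of this gadget far apart along a line, at mutual distances much larger than their diameters, and put any leftover points far away. A $(1+\eps)$-path between two points of the same copy stays inside that copy's thin ellipse, which is disjoint from the other copies' regions. The Steiner vertices forced in distinct copies are therefore distinct, and summing gives $\Omega(n/\eps)$. (If $n<2m$, we use a single gadget with $n/2$ points per side, spaced $\Theta(\sqrt{\eps})$ apart.)

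For one gadget, let $A=\{a_1,\dots,a_m\}$ be the points on the left side and $B=\{b_1,\dots,b_m\}$ the points on the right side, consecutive points at distance $4\sqrt{\eps}$. Fix a pair $(a_i,b_j)$ and let $\gamma_{ij}$ be its $(1+\eps)$-path in $G$. By the ellipse observation, $\gamma_{ij}$ lies in $\mathcal{E}_{a_ib_j}$, whose minor semi-axis is about $\tfrac12\sqrt{2\eps+\eps^2}\,|a_ib_j|\le \sqrt{\eps}$ since $|a_ib_j|\le\sqrt2$. Hence $\gamma_{ij}$ stays within vertical distance less than $2\sqrt{\eps}$ of the segment $a_ib_j$.

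The key claim is this: if the segments $a_ib_j$ and $a_kb_l$ cross at an interior point, say with $i<k$ and $j>l$, then $\gamma_{ij}$ and $\gamma_{kl}$ share a vertex located near that crossing. Both paths are $x$-monotone up to small detours and run from the left side to the right side. Near $x=0$, $\gamma_{ij}$ lies below $\gamma_{kl}$, because they stay in disjoint $2\sqrt{\eps}$-neighbourhoods of $a_i$ and $a_k$. Near $x=1$ the order is reversed. The paths are in a plane graph, so they must meet at a common vertex of $G$. The common vertex is not an input point, since the paths stay away from $A\cup B$ in the interior, so it is a Steiner vertex. It also lies in $\mathcal{E}_{a_ib_j}\cap\mathcal{E}_{a_kb_l}$. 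When the two segments meet at an angle $\theta\gg\sqrt{\eps}$, this intersection region has diameter $O(\sqrt{\eps}/\theta)$ around their crossing point.

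We then count many well-separated crossings. Consider the $\Theta(1/\eps)$ pairs $(a_i,b_j)$ together with the grid of crossing points of their segments. Choose a family of disjoint disks, each of radius $c\sqrt{\eps}$ and centered at the interior crossing point of some pair of segments meeting at an angle bounded below by a constant. For example, pair slope $\approx+\tfrac12$ segments with slope $\approx-\tfrac12$ segments, so that each ellipse intersection is contained in a disk of radius $O(\sqrt{\eps})$. Inside $[1/4,3/4]\times[1/4,3/4]$, the crossings of these two families are spread at density about $(1/\sqrt{\eps})^2\cdot(1/\sqrt{\eps})$ per unit area. Since each disk has area $\Theta(\eps)$, we can pick $\Omega(1/\eps)$ disjoint disks, and then $\Omega(1/\eps)$ distinct Steiner vertices. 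That gives only $\Omega(1/\eps)$ per gadget, however, and we need $\Omega(1/\eps^{3/2})$.

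The main obstacle is this density step. To get $m/\eps$ vertices per gadget, we use the finer structure. Through each vertical line $x=t$ pass $\Theta(1/\eps)$ paths $\gamma_{ij}$, whose heights $y_{ij}(t)$ form a permutation that changes with $t$. Every change of order between two paths whose slopes differ by $\Omega(1)$ forces a shared Steiner vertex. Grouping these by vertical strips of width $\Theta(\sqrt{\eps})$, the number of Steiner vertices in a strip is at least the number of distinct heights (up to $\sqrt{\eps}$ resolution) at which a crossing happens. In each strip this is $\Omega(1/\sqrt{\eps})$ heights times $\Omega(1)$, over $1/\sqrt{\eps}$ strips, which is still too weak. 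So the first approach I would try instead is a charging argument: every Steiner vertex has degree $O(1)$ in the planar sense (we can assume degree at most some constant after splitting), yet it must serve all pairs $(i,j)$ whose ellipses contain it and whose order swaps there. By the crossing lemma, a single vertex can realize swaps among paths meeting at only $O(\deg)$ directions. Together with the $\Theta(1/\eps^{3/2})$ crossings between segments at $\Omega(1)$ angles that lie at pairwise distance $\Omega(\sqrt{\eps})$ in direction space, this would give $\Omega(\eps^{-3/2})$. Making this charging rigorous is where I expect the difficulty to be.
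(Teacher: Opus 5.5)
Your proposal has a genuine gap at exactly the point you flag. What you actually prove is that two paths whose segments cross transversally must share a Steiner vertex, and that one can pack $\Omega(1/\eps)$ disjoint disks of radius $\Theta(\sqrt{\eps})$ around such crossings. That yields only $\Omega(1/\eps)$ Steiner vertices per gadget, i.e.\ $\Omega(n/\sqrt{\eps})$ overall.

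The charging step meant to supply the missing factor $1/\sqrt{\eps}$ is not an argument. A single Steiner vertex can be the meeting point of arbitrarily many pairs of paths. Even if you grant bounded degree (which you assert without justification), paths may share edges. For example, a bundle $X$ and a bundle $Y$ can merge onto one edge and leave it in swapped order, so two vertices realize $|X|\cdot|Y|$ order changes. Nothing in your proposal controls this, so the claim that a vertex ``can realize swaps among paths meeting at only $O(\deg)$ directions'' has no basis.

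Nor is a crossing lemma available here. The degenerate crossing lemma used in \Cref{sec:cone} requires the paths to be edge-disjoint and to cross transversally at shared vertices. That is precisely what the cone restriction provides, and an arbitrary $(1+\eps)$-spanner does not give it to you.

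The missing idea is to count \emph{edges} rather than vertices, and to use the angular structure of $(1+\eps)$-paths to make those edges distinct.
\begin{itemize}
\item By \Cref{lem:BT22}, the edges of $\gamma_{ab}$ making angle less than $3\sqrt{\eps}$ with $ab$ have total length at least $\frac79|ab|$.
\item Restrict to $A_0\times B_0$: every third point, taking the bottom third on the left and the top third on the right. Then any two nonparallel pairs have directions differing by more than $6\sqrt{\eps}$, and parallel pairs have disjoint ellipses by \Cref{lem:ellipses}. Hence these near-parallel edge sets $E_{ab}$ are pairwise disjoint, no matter how many paths pass through a common Steiner vertex.
\item A constant amount of the length of $E_{ab}$ lies in a middle parallelogram $R$. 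There $\gamma_{ab}$ crosses $\Theta(1/\sqrt{\eps})$ pairwise disjoint ellipses $\mathcal{E}_{cd}$ of slope $-\frac12$.
\item No edge of the plane graph $G$ can cut across such an ellipse, since it would have to meet $\gamma_{cd}$ in its relative interior. So every edge of $E_{ab}$ inside $R$ has length $O(\sqrt{\eps})$, and $E_{ab}$ contains $\Omega(1/\sqrt{\eps})$ edges.
\item Summing over the $\Omega(1/\eps)$ pairs gives $\Omega(\eps^{-3/2})$ edges. Euler's formula converts this into $\Omega(\eps^{-3/2})=\Omega(n/\eps)$ vertices per gadget.
\end{itemize}

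Finally, drop your parenthetical case $n<2m$. For very small $n$ the bound is simply false: planarizing the complete geometric graph gives a $1$-spanner with $O(n^4)$ Steiner vertices, and none at all for $n=2$. So the statement is to be read, as in the paper's proof, for $n=\Omega(\eps^{-1/2})$, where your copying argument suffices.
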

Let $A_0 \subseteq A$ consist of every third
point from the bottom third of the left side, and $B_0 \subseteq B$ consist of
every third point from the top third of the right side. This ensures $|A_0|,
|B_0| = \Omega(1/\sqrt{\eps})$ and for each pair $(a,b)\in A_0\times B_0$ the
segment $ab$ has slope in $[\frac{1}{3}, 1]$. For each pair $(a,b)\in A_0\times
B_0$, let $\gamma_{ab}$ denote a shortest path in $G$ from $a$ to $b$. We apply
the result of Bhore and T\'oth~\cite[Lemma 4]{BhoreT22} to conclude that the set
$E_{ab}$ of edges in $\gamma_{ab}$ having angle at most $3\sqrt{\eps}$ with $ab$
satisfies $\|E_{ab}\| \geq \frac{7}{9}|ab|$, i.e., the total length of the edges in
$E_{ab}$ is at least $\frac{7}{9}|ab|$. 
Moreover, using a geometric argument, we can show that for distinct pairs $(a,b), (a',b')\in A_0\times B_0$, the sets $E_{ab}$ and
$E_{a'b'}$ are pairwise disjoint.

\begin{figure}[!ht]
    \centering
    \includegraphics[width=0.45\textwidth]{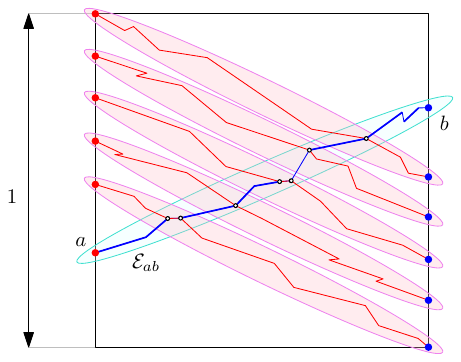}
    \caption{An $ab$-path and its intersections with other paths between point pairs of slope $-1/2$.}
    \label{fig:slices-}
\end{figure}

Now, consider the collection of all ellipses $\mathcal{E}_{cd}$ with $c,d\in A\cup B$
and slope $-\frac{1}{2}$ (see~\Cref{fig:slices-}). As these ellipses are pairwise
disjoint (\Cref{lem:ellipses}), each such ellipse creates a Steiner vertex wherever $\gamma_{ab}$ (for
$(a,b)\in A_0\times B_0$) intersects it. Since there are $\Theta(1/\sqrt{\eps})$
such disjoint ellipses, and each path $\gamma_{ab}$ must cross
$\Theta(1/\sqrt{\eps})$ of them, each path $\gamma_{ab}$ contains
$\Theta(1/\sqrt{\eps})$ Steiner points in these intersections. The bound on the
total length $|\gamma_{ab}| \leq (1+\eps)|ab| < 2$ combined with the pigeonhole
principle implies that the average length of an edge in $E_{ab}$ is 
$O(\sqrt{\eps})$. Consequently, $E_{ab}$ contains at least
$\Omega(1/\sqrt{\eps})$ edges. Since there are $\Omega(1/\eps)$ pairs in
$A_0\times B_0$ and the sets $E_{ab}$ are edge-disjoint, the total
number of edges is $\Omega(1/\eps) \cdot \Omega(1/\sqrt{\eps}) =
\Omega(1/\eps^{3/2}) = \Omega(n/\eps)$ for the basic construction. See
\Cref{ssec:simplebound} for a detailed proof.

This lower bound is already stronger than
the lower bound of $\Omega(n/\sqrt{\eps})$ that can be derived from the crossing case~\cite{LeS25}. However, the limitation of this approach is the fact that the pigeonhole argument can give only $\Omega(1/\sqrt{\eps})$ edges per path
$\gamma_{ab}$. To improve this to $\Omega_\mu(n/\eps^{3/2-\mu})$ for any $\mu >
0$, we will use techniques from geometric measure theory
(see~\Cref{ssec:improved} for the full proof of~\Cref{thm:lower+}). To use these tools, we first need to
show that most of the spanner paths have certain properties. Let $M$ be the
square of side-length $1/8$ in the middle of $[0,1]^2$. We decompose $M$ into an
$O(1/\sqrt{\eps}) \times O(1/\sqrt{\eps})$ grid consisting of square
\emph{windows} and analyze the structure of spanner paths within each window.
We will also restrict our attention to ellipses $\mathcal E_{ab}$ where the corresponding segment~$ab$ has slope $\lambda\in [1/4,3/4]$, and we say that such ellipses and spanner paths are in the \emph{positive bundle}, or $ab$ has slope $\lambda\in [-3/4,-1/4]$, where the corresponding ellipses and spanner paths belong to the \emph{negative bundle}.
The key properties we establish are:
\begin{itemize}
    \item Each window $W$ contains crossing ellipses for each slope in the positive and negative bundles (see \Cref{lem:lowerbasics}(i)).
    \item Each spanner path $\gamma_{ab}$ is \emph{adventurous} in at most $c/\sqrt{\eps}$ windows, meaning it goes outside a narrow strip $R_W(\gamma_{ab})$ of width 
    $O(\eps)$ in only a small number of windows (see \Cref{lem:lowerbasics}(ii)). Here $c$ is a small constant.
    \item Each spanner path $\gamma_{ab}$ is \emph{skewed} in at most $c/\sqrt{\eps}$ windows, meaning its direction deviates significantly from the direction of the segment $ab$ in only a small number of windows  (see \Cref{lem:lowerbasics}(iii)). Again, $c$ is a small constant.
    \item More than half of the windows in $M$ are \emph{well-behaved}, meaning they are crossed by sufficiently many non-adventurous, non-skewed spanner paths from both the positive and negative bundles (see \Cref{lem:wellbehaved}).
\end{itemize}
In a well-behaved window $W$, we can also identify collections $\Psi^+_W$ and $\Psi^-_W$ of spanner paths such that each path $\gamma \in \Psi^+_W \cup \Psi^-_W$ is non-adventurous, non-skewed, and the paths and their corresponding strips $R_W(\gamma)$ of width $O(\eps)$ have the following properties (see~\Cref{fig:tubes}):
\begin{itemize}
    \item Paths in $\Psi^+_W$ and $\Psi^-_W$ pairwise cross within the window (see~\Cref{lem:essenitiallydistinct}(i)).
    \item The directions of the strips $R_W(\gamma)$ corresponding to these paths differ by at least $\Omega(\sqrt{\eps})$ (see~\Cref{lem:essenitiallydistinct}(ii)).
    \item The strips corresponding to distinct paths have small intersection areas, at most half the area of any individual strip (see~\Cref{lem:essenitiallydistinct}(iii)).
\end{itemize}

\begin{figure}[!ht]
    \centering
    \includegraphics[width=0.5\textwidth]{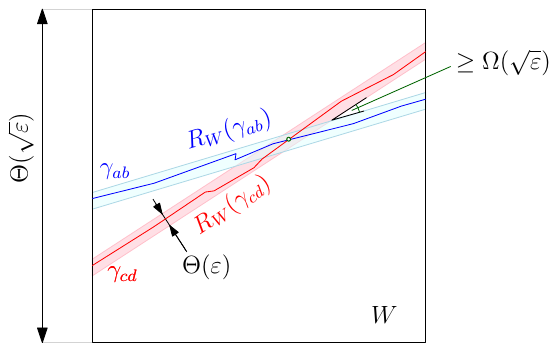}
    \caption{A well-behaved window $W$ with two non-adventurous non-skewed paths $\gamma_{ab},\gamma_{cd}\in \Psi^+_W$.}
    \label{fig:tubes}
\end{figure}

After this preparation, we want to analyze the number of Steiner vertices. A natural tool would be the classical Szemer\'edi--Trotter theorem which states that for any set of $\ell$
lines in the plane, the number of points incident to at least $r$ of these lines
is at most $O(\ell^2/r^3 + \ell/r)$. Ideally, we would like to apply this theorem to
count the number of Steiner vertices incident to many spanner paths. However,
the Szemer\'edi--Trotter theorem does not directly apply, since our
spanner paths are not straight lines.

Motivated by problems in geometric measure theory, the Szemer\'edi--Trotter theorem was generalized to \emph{disk-tube
incidences}~\cite{GSW19}. Instead of points and lines, we count intersections between $\delta$-disks (small disks of radius $\delta$) and $\delta$-tubes (long rectangles of width $\delta$). Fu, Gan and Ren~\cite{FGU22} recently proved that
only $O_\mu(|\mathcal{T}|^2/r^3)$ disjoint $r$-rich disks (each intersecting at least $r$ tubes) exist when tubes in $\mathcal{T}$ are sufficiently
well-spaced and separated in direction.\footnote{The exact definition of
\emph{sufficiently well-spaced} is quite technical and is guided by the $\mu >
0$ parameter (see~\Cref{thm:FGR22} for details).}

We apply their theorem to the tubes formed by the strips $R_W(\gamma_{ab})$ corresponding to the spanner
paths with $\delta = \Theta(\sqrt{\eps})$ for every well-behaved window $W$ (see
\Cref{fig:tubes}).  Properties (i), (ii) and (iii) of
\Cref{lem:essenitiallydistinct} guarantee that the collection of tubes in each
well-behaved window $W$ satisfies the spacing and direction separation
conditions required by the result of Fu, Gan and Ren~\cite{FGU22}.  

The theorem of Fu, Gan and Ren~\cite{FGU22} implies that only $O_\mu(1/(\eps
r^3))$ disjoint $r$-rich $\delta$-disks can intersect tubes in our collection.
In their result, there is a technical requirement that $r > \delta^{1-2\mu}
|\mathcal{T}_{\max}|$ (see~\Cref{thm:FGR22} for details), and so the smallest value of $r$ we can choose is $r = \Omega_\mu(1/\eps^\mu)$. This means that most of the
$\Theta(1/\eps)$ crossing points are not covered by the $r$-rich disks.
Therefore, there must exist $\Omega_\mu(1/\eps)$ crossing points where the
Steiner vertex is incident to fewer than $r_0 = \Theta_\mu(1/\eps^\mu)$ spanner
paths from our tube collection. Counting the tube crossings at these low-degree
Steiner vertices yields at least $\Omega_\mu(1/\eps) / r_0^2 =
\Omega_\mu(1/\eps^{1-2\mu})$ such vertices per window, and summing over all
$\Omega(1/\eps)$ well-behaved windows gives the $\Omega_\mu(1/\eps^{2-2\mu})$
lower bound on the number of Steiner vertices in $G$. Since $n =
\Theta(1/\sqrt{\eps})$, this implies the desired lower bound of
$\Omega_\mu(n/\eps^{3/2-2\mu})$. Finally, we scale~$\mu$ by a factor of 2. This concludes the sketch of the proof of \Cref{thm:lower}, see~\Cref{ssec:improved} for details. 

\Cref{thm:lower} leaves a small gap of roughly $\eps^{o(1)}$ between our upper and lower
bounds. We are able to show that this gap can be closed if we restrict our
attention to cone-restricted spanners which were used in the construction of
\Cref{thm:upper}. The proof of Theorem~\ref{thm:cone} (in \Cref{sec:cone}) uses a generalization of the celebrated Crossing Lemma to so-called degenerate crossings~\cite{AckermanP13,PachT09}.

\begin{restatable}{theorem}{conerestrictedthm}
\label{thm:cone}
     For every sufficiently small $\eps>0$ and every $n\in \mathbb{N}$, there exists a set of $n$ points in the plane for which every cone-restricted plane Steiner $(1+\varepsilon)$-spanner has $\Omega(n/\varepsilon^{3/2})$ Steiner vertices. Up to constant factors, this lower bound is the best possible (cf.~\Cref{thm:upper}).
\end{restatable}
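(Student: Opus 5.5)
We use the canonical point set $P=A\cup B$ of \Cref{fig:lb}, padded with far-away copies to reach $n$ points. It then suffices to show that one copy, with $\Theta(1/\sqrt{\eps})$ points, forces $\Omega(1/\eps^2)$ Steiner vertices in any cone-restricted noncrossing Steiner $(1+\eps)$-spanner $G$.

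\emph{Step 1: defining the curves.} For each pair $(a,b)\in A\times B$ whose slope lies in a fixed bundle, say $[1/4,3/4]$ or $[-3/4,-1/4]$, fix a cone-restricted $ab$-path $\gamma_{ab}$ in $G$. Every edge of $\gamma_{ab}$ makes an angle at most $\sqrt{\eps}$ with $ab$, so $\gamma_{ab}$ is an $x$-monotone curve inside the rhombus $\lozenge_{ab}$. There are $\Theta(1/\eps)$ such pairs in each bundle.

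\emph{Step 2: counting forced crossings.} Take a positive pair $(a,b)$ and a negative pair $(c,d)$ whose segments cross in the middle square $M$. Both curves are monotone and confined to their thin rhombi, so $\gamma_{ab}$ and $\gamma_{cd}$ must cross. The crossing point lies in $\lozenge_{ab}\cap\lozenge_{cd}$, a region of diameter $O(\sqrt{\eps})$ near $ab\cap cd$. Within one bundle, two paths whose segments cross at an angle $\gg\sqrt{\eps}$ are also forced to cross. Counting all such pairs gives $\Omega(1/\eps^2)$ forced crossings among $m=\Theta(1/\eps)$ curves.

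\emph{Step 3: turning crossings into Steiner vertices.} Since $G$ is plane, every crossing of two curves happens at a vertex of $G$ or along a shared edge. This is a degenerate crossing: many curves may pass through one vertex. We therefore invoke the crossing lemma for degenerate crossings~\cite{AckermanP13,PachT09}. We form the multigraph $H$ whose vertices are the $2m$ endpoints (one copy of each point per curve, after a small perturbation) and whose edges are the curves $\gamma_{ab}$. If the curves meet in only $k$ distinct points, and in particular if $G$ has only $k$ Steiner vertices, the degenerate crossing lemma bounds the number of crossing pairs. In the form ``the number of crossing points is $\Omega(e^3/v^2)$'', we get $k\ge\Omega(m^3/n_0^2)$, with $n_0=\Theta(1/\sqrt{\eps})$ endpoints and $m=\Theta(1/\eps)$ curves. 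This yields $k=\Omega(\eps^{-3}\cdot\eps)=\Omega(1/\eps^2)=\Omega(n/\eps^{3/2})$.

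\emph{Main obstacle.} The hard part is justifying that the curves form a valid multigraph drawing for the degenerate crossing lemma. Multiple edges between the same endpoints must be excluded or controlled; here there is one curve per pair, so the multigraph is simple. We also need that two curves crossing at a vertex count as a genuine crossing, and that curves sharing long edges do not collapse crossings. If the degenerate-crossing lemma is not directly applicable, we fall back on a local argument in the spirit of \Cref{thm:lower-}. Cone-restriction makes every curve straight up to $O(\sqrt{\eps})$ in direction. So we can use a window of size $\sqrt{\eps}$ and run a Szemerédi--Trotter-type bound for pseudo-lines, which then holds without an $\eps^{\mu}$ loss: a vertex of $G$ met by $r$ curves whose directions are $\sqrt{\eps}$-separated, each confined to a width-$O(\eps)$ strip, is controlled as for lines. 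This gives $\Omega(1/\eps)$ distinct crossing vertices per window, and summing over the $\Theta(1/\eps)$ windows of $M$ finishes the proof. The matching upper bound follows from \Cref{thm:upper}.
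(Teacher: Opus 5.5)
Your overall strategy is the paper's: draw (a subgraph of) $K(A,B)$ by realizing each edge $ab$ as $\gamma_{ab}$, and apply the Ackerman--Pinchasi degenerate crossing lemma to get $\Omega(e^3/v^2)=\Omega(\eps^{-3}/\eps^{-1})=\Omega(\eps^{-2})$ crossing points. As written, though, there are two problems.

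First, the vertex set must be $A\cup B$ itself, with $v=\Theta(\eps^{-1/2})$. Your multigraph $H$ with ``one copy of each point per curve'' has $2m$ vertices and $m$ edges, so it is a perfect matching. Then $e\ge 4v$ fails and the lemma yields nothing. Your final computation silently switches back to $n_0=\Theta(\eps^{-1/2})$, which contradicts that construction. No splitting is needed: two paths with a common endpoint $a$ meet only at $a$, because $\gamma_{ab}\subset\lozenge_{ab}$, $\gamma_{ab'}\subset\lozenge_{ab'}$, and $\lozenge_{ab}\cap\lozenge_{ab'}=\{a\}$.

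Second, and this is the real gap, you flag but never prove the hypothesis the lemma requires. The drawing must be simple: any two curves share at most one point, and that point is either a common endpoint or a proper crossing, not a touching and not a shared edge. Such points are then Steiner vertices, since $G$ is plane and $\lozenge_{ab}\cap P=\{a,b\}$. This is exactly where cone-restriction enters; without it only the $\Omega_\mu(n/\eps^{3/2-\mu})$ bound is known. Monotonicity plus confinement to rhombi, which is all your Step~1 extracts, does not suffice.

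The paper closes the gap as follows:
\begin{itemize}
\item Paths of parallel pairs lie in disjoint ellipses (\Cref{lem:ellipses}), hence are disjoint.
\item For nonparallel pairs, \Cref{lem:angles} gives $\angle(ab,a'b')>2\sqrt{\eps}$.
\item At any common vertex $v$, the four subpaths leaving $v$ lie in cones with apex $v$ and half-angle $\sqrt{\eps}$ around $\overrightarrow{ab},\overrightarrow{ba},\overrightarrow{a'b'},\overrightarrow{b'a'}$. These cones pairwise meet only in $v$, so the two paths share no edge and no other point. Since the four directions alternate around $v$, the paths cross transversally there.
\end{itemize}

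Your fallback (a Szemer\'edi--Trotter bound for pseudo-lines in windows) does not bypass this step. It needs the very same at-most-one-common-point property, and also that two vertices lie on at most one common curve; both again come from the four-cone argument. Your claim that the local bound ``is controlled as for lines'' is also left unjustified. Finally, Step~2 is superfluous, since the crossing lemma produces the crossings by itself.
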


\paragraph*{Organization.}
After covering further related works (\Cref{ssec:previous}) and the preliminaries (\Cref{sec:pre}), we provide the construction for our sparse noncrossing Steiner spanner (\Cref{sec:UB}). Due to space limitations, complete proofs are deferred to the appendix. The proofs of the lower bounds (\Cref{thm:lower,thm:lower-,thm:cone}) are also deferred to the appendix (\Cref{appendix:unconditional,sec:cone}).

\subsection{Further related previous work} 
\label{ssec:previous}

As noted above, Euclidean spanners have been studied under two independent regimes: 

\paragraph{$(1+\eps)$-Spanners.} 

The first Euclidean $(1+\varepsilon)$-spanners for arbitrary $\eps>0$ were obtained independently by Clarkson~\cite{Clarkson87} and Keil~\cite{keil1988approximating}, and these works also introduced the fixed-angle $\Theta$-graph (a close variant of the Yao graph~\cite{yao1982constructing}) as a basic construction tool. 
Ruppert and Seidel~\cite{ruppert1991approximating} extended this to $\mathbb{R}^d$, by giving $(1+\varepsilon)$-spanners with $O(n/\varepsilon^{d-1})$ edges for constant $d$. 
Le and Solomon~\cite{LeS25} proved that this dependence on $\varepsilon$ is tight: for every $\varepsilon>0$ and constant $d$, there exist point sets in $\mathbb{R}^d$ for which any $(1+\varepsilon)$-spanner must have $\Omega(\varepsilon^{-(d-1)})$ edges whenever $\varepsilon=\Omega(n^{-1/(d-1)})$.

Another key parameter of a spanner is its \emph{lightness}. For a set of points, it is the ratio between its total edge weight and the weight of the \textsf{MST} of the point set. 
Building on the greedy spanner of Alth\"ofer et al.~\cite{althofer1993sparse}, Das et al.~\cite{das1993optimally} proved that this construction achieves constant lightness and stretch $1+\varepsilon$ in $\mathbb{R}^3$, which was later extended to all $\mathbb{R}^d$ by Das et al.~\cite{narasimhan1995new}. 
Rao and Smith~\cite{rao1998approximating} established that the greedy $(1+\varepsilon)$-spanner in $\mathbb{R}^d$ has lightness $1/\varepsilon^{O(d)}$, and a long line of refinements culminated in the bound $O(\varepsilon^{-d}\log (1/\eps))$ by Le and Solomon~\cite{LeS25}.
Besides achieving small \emph{stretch} and \emph{sparsity}\footnote{The sparsity of a spanner is the ratio of the number of edges in the spanner to the size of MST.}, spanners are often required to satisfy additional desirable properties such as bounded degree or diameter. 
There has been extensive work on understanding the optimal trade-offs among these parameters for Euclidean spanners 
(see, e.g.,~\cite{arya1995euclidean,BhoreM25,ES15,dinitz2010low,LeMS23}).

\paragraph{Euclidean Steiner spanners.}
Steiner points can significantly reduce the weight of distance approximation structures (see, e.g.,~\cite{elkin2015steiner,Solomon15}).
Note that a Steiner $t$-spanner for a point set $P$ must guarantee stretch $t$ only for point pairs in $P$. 
Le and Solomon~\cite{LeS23} constructed Steiner $(1+\eps)$-spanners of sparsity $O(\varepsilon^{(1-d)/2})$ in $\mathbb{R}^d$ for all $d\geq 2$, and this bound is the best possible~\cite{BhoreT22}. In the plane, Bhore and T\'{o}th~\cite{BhoreT22} constructed Steiner $(1+\eps)$-spanners of lightness $O(1/\eps)$, this bound is also tight~\cite{LeS25}. 
In dimensions $d\geq 3$, the current best upper and lower bounds for lightness are $\widetilde{O}(1/\eps^{(d+1)/2})$ and $\Omega(1/\eps^{d/2})$~\cite{BhoreT22,LeS25}. 
Recent work has also analyzed online algorithms for Euclidean Steiner spanners and obtained several asymptotically tight bounds~\cite{BhoreFKT24, BhoreT25}.

\paragraph{Noncrossing spanners.}
Chew~\cite{Chew86} first proved the existence of a Euclidean spanner with stretch $\sqrt{10}$ and $O(n)$ noncrossing edges, the stretch was later improved to $2$~\cite{Chew89}. Keil and Gutwin~\cite{keil1992classes} showed that the Delaunay triangulation is a $2.42$-spanner. Later, Bonichon et al.~\cite{bonichon2012stretch} gave tight bounds of $\sqrt{4 + 2\sqrt{2}}$ for the  $L_1$- and $L_\infty$-Delaunay graphs. Subsequently, Bose et al.~\cite{bose2012pi} showed that the Yao graph is a $8\sqrt{2}$-spanner. See the comprehensive survey by Bose and Smid~\cite{BoseS13} for results on plane spanners up to 2013. Since then, several core questions raised in that survey on plane spanners have seen notable progress. The long-open existence of bounded-degree plane spanners has been resolved for degree $4$ through constructions of
Bonichon et al.~\cite{BonichonKPX15}. Later, Kanj et al.~\cite{KanjPT17} obtained improved stretch bounds. Bose et al.~\cite{bose2018improved} further tightened the trade-off by giving an algorithm to construct degree-$8$ plane spanners with stretch $\approx 4.414$. Dumitrescu and Ghosh~\cite{DumitrescuG16} strengthened lower bounds on plane-spanner dilation. 
For degree~$3$, progress has been obtained for restricted families of point sets, such as points in convex position~\cite{BiniazBCGMS17}. 
The framework has also been extended to constrained visibility, where van Renssen and Wong~\cite{RenssenW21} showed that visibility graphs among polygonal obstacles admit noncrossing spanners with bounded degree and constant stretch (see also~\cite{BhoreKK0LPT25, AbamBS19} for some recent work on plane spanners in polygonal domains). Further developments include refined stretch analysis for planar variants of $\theta$-like geometric graphs~\cite{bose2018improved}, as well as general geometric conditions under which sweepline-based constructions produce planar spanners~\cite{lee2024generalized}.

\paragraph{Previous work related to cone-restricted spanners.} 
We have defined \emph{cone-restricted} $(1+\eps)$-spanners for finite point sets in the plane (\Cref{def:cone}). Similar concepts have previously been used for other purposes.
In a \emph{geometric graph} $G=(V,E)$, the vertices are distinct points in the plane and the edges are straight-line segments. A geometric graph is \emph{strongly monotone}~\cite{AngeliniCBFP12,KindermannSSW14,FelsnerIKKMS16} if for every $a,b\in V$, there is a $ab$-path $(a=v_0,v_1,\ldots , v_m=b)$ in which $\angle (\overrightarrow{ab},\overrightarrow{v_{i-1}v_i}) \leq \pi/2$ for all $i=1,\ldots ,m$. However, this property does not guarantee any stretch factor. 

A geometric graph is \emph{angle-monotone with width $\gamma$}~\cite{BonichonBCKLV16,DehkordiFG15,LubiwM19} if for every $a,b\in V$, there is an $ab$-path in which the angle between any two edges is at most $\gamma$. For example, the axis-aligned grid graph induced by $n$ points in the plane is angle monotone with width $(\pi/2)$, and is a $\sqrt{2}$-spanner, however, it uses $O(n^2)$ Steiner vertices. Dehkordi et al.~\cite{DehkordiFG15} constructed, for $n$ points in $\mathbb{R}^2$, a plane angle-monotone graph of width $\pi/2$ using $O(n)$ Steiner points. Bonichon et al.~\cite{BonichonBCKLV16} showed that the half-$\theta_6$ graph~\cite{BonichonGHI10} is angle-monotone with width $2\pi/3$, and $O(n)$ edges, however, it is not necessarily planar. Lubiw and Mondal~\cite{LubiwM19} constructed an angle monotone graph with width $\pi/2$ with $O(n^2\log \log n/\log n)$ edges without Steiner vertices (but with crossings). They also consider a version of the problem with Steiner vertices, however, they require the angle-monotone property for all pairs of vertices (including Steiner vertices), and allow additional crossings. 
For every $\gamma>0$ and every set $P$ of $n$ points in the plane, they construct a Steiner angle-monotone graph of width $\gamma$ with $O(\frac{n}{\gamma}\log \Phi(P))$ edges, where $\Phi(P)$ is an unbounded parameter that depends on the point configurations.

Other constraints imposed on $ab$-path include \emph{greedy}~\cite{LeightonM10,PapadimitriouR04}, \emph{self-approaching}~\cite{AlamdariCGLP12} and \emph{increasing-chord}~\cite{CFG91} properties: A geometric graph $G=(V,E)$ is \emph{greedy} if for every $a,b\in V$, there is an $ab$-path  $(a=v_0,v_1,\ldots , v_m=b)$
that monotonically gets closest to $b$, that is, $d(v_i,b)< d(v_{i-1},b)$ for all $i=1,\ldots, m$. It is \emph{self-approaching} if $d(v_j,v_k)\leq d(v_i,v_k)$ for all $0\leq i<j<k\leq m$; and \emph{increasing-chord} if there is an $ab$-path that is self-approaching in both directions. The self-approaching and increasing-chord properties imply a stretch factor of 5.34~\cite{IKL99}, at most $2\pi/3$~\cite{Rote94} (see also~\cite{AichholzerAIKLR01}), resp., while the greedy property alone does not imply any stretch guarantee.

We also mention a couple of concepts that sound similar to cone-restricted spanners, but are different. The classical $\theta$- and Yao-graphs are constructed by connecting every vertex $v$ to a ``closest'' point in cones of apex $v$ and aperture $\theta$. They are known to be $O(\theta)$-spanners for $\theta\leq \pi/2$, but the $ab$-paths of length $O(\theta)\cdot d(a,b)$ are not necessarily cone-restricted: They may contain (short) edges that make an arbitrary angle with the line segment $ab$. 

Another concept, under a similar name, was introduced by Carmi and Smid~\cite{CarmiS12}: A geometric graph is \emph{$\theta$-angle-constrained} if for every vertex $v\in V$ the angle between any two edges incident to $v$ is \emph{at least} $\theta$. They note that the classical greedy $(1+\eps)$-spanner by Alth\"ofer et al.~\cite{althofer1993sparse} is $\Omega(\eps)$-angle constrained. For every $\theta \in (0,\pi/3)$, and $n$ points in the plane, one can construct a $\theta$-angle-constrained $(1+O(\theta))$-spanner in $O(n \log n)$ time~\cite{CarmiS12}; and this is not always possible for $\theta>\pi/3$~\cite{BakhsheshF17}.

\paragraph{Connections to incidences and geometric measure theory.}
Szemer\'edi and Trotter~\cite{SZT83} proved that $n$ points and $\ell$ lines in $\mathbb{R}^2$ determine $O(n^{2/3}\ell^{2/3}+n+\ell)$ point-line incidences, and this bound is the best possible~\cite{Erdos85}. Motivated by connections to geometric measure theory~\cite{Wolff99} significant progress was made on a generalization to disk-tube incidences, which is the number of intersections between well-spaced disks of radius $\delta$ and \emph{$\delta$-tubes}, where a $\delta$-tube is the $\delta$-neighborhood of a line, in a unit square $[0,1]^2$; see~\cite{CPZ25,DW25,FGU22,GSW19}. 

The number of disk-tube incidences (used in multiple scales) was instrumental in several recent breakthroughs in combinatorial geometry and geometric measure theory.  
For example, Heilbronn's classical problem~\cite{Roth51} asks for the minimum $h(n)>0$ such that any set $P$ of $n$ points in the unit square $[0,1]^2$ determines 
a nondegenerate triangle $\Delta(abc)$ of area at most $h(n)$. A line segment $ab$ is the base of a triangle $\Delta(abc)$ of area $A$ if and only if there exists a point $c\in P$ in the $\frac{2A}{d(a,b)}$-neighborhood of the line spanned by $ab$. 
Cohen et al.~\cite{cohen2023newupperboundheilbronn,CPZ25,Zha24} recently proved $h(n)\geq \Omega(n^{-8/7-1/2000})$, improving on the previous bound $h(n)\geqq \Omega(n^{8/7})$ by Koml\'os et al.~\cite{KPS82}. In another recent breakthrough  using this machinery, Ren and Wang~\cite{ren2025furstenbergsetsestimateplane} completely solved the Furstenberg set problem. 

\section{Preliminaries}
\label{sec:pre}

\paragraph{Balanced Box Decomposition.}

In our upper bound construction, we will use the Balanced Box Decomposition (BBD) of Arya et al.~\cite{AryaMNSW98}. Given a set of points~$P$, the BBD partitions the bounding box of $P$ into a set of tiles, such that each tile is either a rectangle or defined by an outer rectangle and a sticky inner rectangle. 

\begin{definition}[Sticky]
\label{definition:sticky}
In $\mathbb R^1$, an inner interval of width~$w$ is sticky with respect to an outer interval if its distances to the endpoints of the outer interval are either $=0$ or $\geq w$. In $\mathbb R^d$, an inner box is sticky with respect to an outer box if their projections onto each of the $d$ coordinate axes are sticky. 
\end{definition}

\begin{figure}[!ht]
    \centering
    \includegraphics[width=0.8\linewidth]{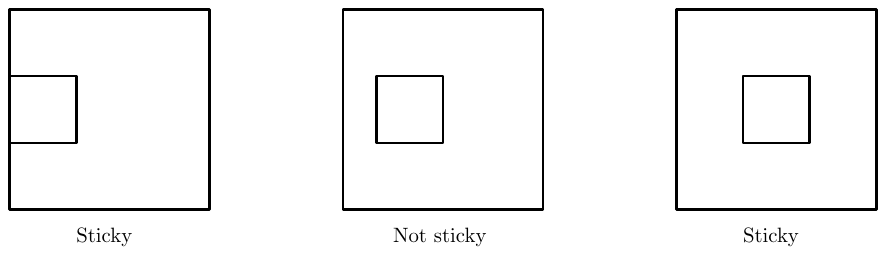}
    \caption{The middle figure shows an inner box that is not sticky, due to its $x$-projection.}
    \label{fig:sticky}
\end{figure}

\begin{theorem}[BBD~\cite{AryaMNSW98}]
\label{fact:bbd}
Given a set of points~$P$, one can partition the bounding box of $P$ into~$O(n)$ tiles such that
\begin{enumerate}[(a)]
\item each tile is either a rectangle, or 
an outer rectangle with a sticky inner rectangular hole, 
\item the rectangle, outer rectangle, and inner rectangle must have an aspect ratio of $\leq 3$,
\item each tile contains at most one point of $P$, moreover, this point lies on the tile's boundary.
\end{enumerate}
\end{theorem}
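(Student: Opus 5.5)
This statement is the Balanced Box Decomposition theorem of Arya et al., so the plan is to rebuild the decomposition tree. The construction recursively subdivides a \emph{cell}. A cell is either a box of aspect ratio at most $3$, or such a box with a sticky box-shaped hole. We start from a square (or near-square) bounding box of $P$ and apply two operations. A \emph{fair split} cuts a box by an axis-parallel hyperplane orthogonal to its longest side, chosen so that both children keep aspect ratio at most $3$. A \emph{shrink} replaces a box $B$ by an inner box $C$, obtained as a minimal quadtree-like box containing the points of $B$ after repeated midpoint halving; the outer part $B\setminus C$ becomes a leaf-level tile with a hole.

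The recursion is the following. If a cell contains at most one point, it becomes a tile. If the points of a box cannot be separated by one fair split into reasonably balanced parts, we perform a \emph{centroid shrink*}. Repeatedly halve $B$ along its longest side and keep the half containing more than a constant fraction of the points, until the inner box $C$ contains at most a constant fraction of the points, or until the points spread out. Because the halving is midpoint-based starting from $B$, the nested box $C$ is automatically sticky with respect to $B$ in each coordinate: its distance to each side of $B$ is either $0$ or at least its own width. Aspect ratio at most $3$ is preserved, because we always halve the longest side, and midpoint halving keeps side ratios in $\{1,2\}$ (and at most $3$ with fair splits interleaved).

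Cells with holes are handled by splitting the outer box along a hyperplane that avoids cutting the hole badly. That is, we extend a side of the inner box, which again preserves stickiness of the inner box in the child that contains it, or separates it off entirely.

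For property (c), each leaf cell contains at most one point. To place that point on the tile's boundary, we perform one final split through the point, or shrink onto a degenerate box at the point. Each such step adds $O(1)$ tiles per point.

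The size bound $O(n)$ follows because every internal node splits a nonempty point set into two nonempty parts, or is a shrink whose inner box has strictly fewer points than its parent or whose outer part is nonempty. Charging each node to a point partition gives $O(n)$ nodes, and hence $O(n)$ tiles.

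The main obstacle is maintaining aspect ratio and stickiness simultaneously when fair-splitting an outer box that contains a sticky hole. I would follow Arya et al.'s case analysis: split at the hole's boundary coordinate when that keeps aspect ratio at most $3$, and otherwise use a midpoint split, which stays sticky because stickiness is defined relative to the hole's width. Since the statement is cited verbatim from~\cite{AryaMNSW98}, in the paper I would simply invoke it.
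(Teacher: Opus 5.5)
The paper gives no proof of this statement. It simply quotes it from Arya et al., so your fallback of citing it is the paper's route. Read as a proof, however, your reconstruction has genuine gaps. The most serious one lies where the citation gives no cover: property (c) is \emph{not} part of Arya et al.'s theorem. A BBD leaf contains at most one point, but that point is generally interior to its cell. For instance, every box in the midpoint-split BBD of $[0,1]^2$ has dyadic corners, so for $P=\{(0,0),(1,1),(0.8,0.3)\}$ the point $(0.8,0.3)$ never lies on a tile boundary. Neither of your fixes works in general:
\begin{itemize}
\item A final split through the point violates (b). If a unit-square leaf has its point at $(\eta,\eta)$ with $\eta<1/3$, both axis-parallel cuts through it create a child of aspect ratio $1/\eta>3$.
\item Shrinking onto a degenerate box makes (b) meaningless for the inner rectangle. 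In a leaf that already has a hole, it also creates a second hole, which (a) forbids.
\end{itemize}
What does work is the following. In a hole-free leaf $Q=[x_0,x_1]\times[y_0,y_1]$ whose point $p=(x,y)$ is interior, cut out $S=[x,x+s]\times[y,y+s]$ with $0<s\le\min\{x-x_0,(x_1-x)/2,y-y_0,(y_1-y)/2\}$. Then $S$ is sticky in $Q$, and $S$ and $Q\setminus S$ are admissible tiles with $p$ on their boundaries. For a leaf $O\setminus H$ in the midpoint-split BBD (where $O$ and $H$ are quadtree boxes), first shrink to the smallest quadtree box $Q'\supseteq H\cup\{p\}$. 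By minimality, one midpoint split of $Q'$ separates $H$ from $p$, and the previous step finishes. This costs $O(1)$ extra tiles per leaf.

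The step you identify as the main obstacle is also not justified. A midpoint split need not preserve stickiness of a general sticky hole. Take the hole $[0,\frac{3}{10}]^2$, which is sticky in $[0,1]^2$. The cut along its side $x=\frac{3}{10}$ creates a slab of aspect ratio $\frac{10}{3}>3$, so your rule falls back to the midpoint split $x=\frac12$. After that split the hole is at distance $\frac15<\frac{3}{10}$ from the new side, and the same happens in $y$. Arya et al.\ avoid this by using only midpoint splits of a root cube, so all boxes are quadtree boxes: stickiness is then automatic, and no split cuts a smaller box. Your sketch carries only the invariant ``sticky hole of aspect ratio at most $3$'', which is not preserved. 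With fair splits at arbitrary positions interleaved, you also cannot fall back on dyadic structure.

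Your $O(n)$ count has a gap as well. If all points of a cell with hole $[0,\eta]^2\subset[0,1]^2$ lie near the hole, roughly $\log(1/\eta)$ consecutive hole-avoiding splits each cut off an empty tile, a number unrelated to $n$. Arya et al.'s centroid shrink compresses such chains into $O(1)$ shrinks and splits even when the cell already has an inner box. Your sketch never shrinks a cell that has a hole.

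Finally, starting from an enclosing square, as you do, is in fact necessary. An $L\times 1$ bounding box with two points at opposite corners cannot be cut into fewer than $L/3$ tiles obeying (b), so the statement's ``bounding box'' should be read as an enclosing square.
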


\paragraph{Properties of cone-restricted spanners.}
We prove here that every cone-restricted (Steiner) $(1+\eps)$-spanner is, in fact, a (Steiner) $(1+\eps)$-spanner; which justifies calling them $(1+\eps)$-spanners in \Cref{def:cone}.

\begin{restatable}{lemma}{conerestricted}
\label{lemma:cone_restricted}
 Let $a,b\in \mathbb{R}$, and let $\gamma=(a=p_0,p_1,\ldots, p_m=b)$ be 
 a polygonal path such that $\angle(\overrightarrow{ab},\overrightarrow{p_{i-1}p_i})\leq \sqrt{\eps}$ for all $i=1,\ldots ,m$. Then for $0<\eps\leq 1$,
 \begin{enumerate}
 \item the length of $\gamma$ is bounded by $|\gamma|\leq (1+\eps)\, |ab|$, and
 \item $\gamma \subset \lozenge_{ab}$, where $\lozenge_{ab}$ is  the rhombus $\lozenge_{ab}$ spanned by $a$, $b$, and the two intersection points of $\mathcal{E}_{ab}$ with its minor axis (cf.~\Cref{fig:ellipse}).
 \end{enumerate}
\end{restatable}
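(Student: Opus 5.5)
We would prove both parts in a coordinate system adapted to $ab$. Place $a=(0,0)$ and $b=(L,0)$ with $L=|ab|$, and write each edge as $\overrightarrow{p_{i-1}p_i}=\ell_i(\cos\theta_i,\sin\theta_i)$, where $\ell_i=|p_{i-1}p_i|$ and $|\theta_i|\le\sqrt{\eps}\le 1<\pi/2$. (We read the hypothesis $a,b\in\mathbb{R}$ as $a,b\in\mathbb{R}^2$.) The whole argument uses only the fact that every edge direction lies in the closed convex cone $C$ of half-angle $\sqrt{\eps}$ around $\overrightarrow{ab}$.

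\emph{Part 1 (length).} Every $\theta_i$ satisfies $|\theta_i|<\pi/2$, so every horizontal increment $\ell_i\cos\theta_i$ is positive. These increments sum to $L$. Since $\cos\theta_i\ge\cos\sqrt{\eps}$, we get
\[
|\gamma|=\sum_i \ell_i=\sum_i \frac{\ell_i\cos\theta_i}{\cos\theta_i}\le \frac{L}{\cos\sqrt{\eps}}.
\]
Next use $\cos x\ge 1-x^2/2$, which gives $|\gamma|\le L/(1-\eps/2)$. Finally, $1/(1-\eps/2)\le 1+\eps$ is equivalent to $1\le(1+\eps)(1-\eps/2)=1+\eps/2-\eps^2/2$, that is, to $\eps\le 1$. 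This proves the first claim.

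\emph{Part 2 (containment).} First compute the rhombus. The ellipse $\mathcal{E}_{ab}$ has semi-major axis $(1+\eps)L/2$ and focal half-distance $L/2$. Its minor-axis endpoints are therefore $c_\pm=(L/2,\pm h)$ with $h=\tfrac{L}{2}\sqrt{(1+\eps)^2-1}=\tfrac{L}{2}\sqrt{2\eps+\eps^2}$. Hence $\lozenge_{ab}$ is the intersection of two cones: the cone at $a$ around $\overrightarrow{ab}$ and the cone at $b$ around $\overrightarrow{ba}$, both of half-angle $\beta=\arctan\sqrt{2\eps+\eps^2}$.

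For any vertex $p_j$, the vector $p_j-a=\sum_{i\le j}\overrightarrow{p_{i-1}p_i}$ is a sum of vectors in $C$. Because $C$ is convex, $p_j-a\in C$, so it makes angle at most $\sqrt{\eps}$ with $\overrightarrow{ab}$. Symmetrically, $b-p_j$ is a sum of the remaining edge vectors, so $p_j-b$ makes angle at most $\sqrt{\eps}$ with $\overrightarrow{ba}$. The rhombus is convex, so if every vertex lies in $\lozenge_{ab}$ then every edge does as well. It therefore suffices to show $\sqrt{\eps}\le\beta$, that is,
\[
\tan x\le x\sqrt{2+x^2}\qquad\text{for } x=\sqrt{\eps}\in(0,1].
\]

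The only non-routine step is this last elementary inequality; I expect it to be the main obstacle, although it is not deep. The plan is to compare the two sides as functions of $x$. The function $\tan x/x$ is increasing on $(0,1]$, starts at $1$, and is at most $\tan 1\approx 1.557$. The function $\sqrt{2+x^2}$ is increasing, starts at $\sqrt2\approx 1.414$, and ends at $\sqrt3\approx 1.732$. A two-interval split then settles it. On $(0,0.9]$ we have $\tan x/x\le \tan(0.9)/0.9<1.41<\sqrt2$. On $[0.9,1]$ we have $\tan x/x\le\tan 1/1<1.56<\sqrt{2.81}\le\sqrt{2+x^2}$. If one wants to avoid numerics, an alternative is the Taylor bound $\tan x\le x+x^3/2$ for $x\le 1$ together with $\sqrt{2+x^2}\ge 1+x^2/2$ on $[0,1]$.
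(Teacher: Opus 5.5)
Your proof is correct and essentially the paper's: part~1 is the same projection argument, and part~2 is the same reduction to showing that the half-angle $\beta$ of $\lozenge_{ab}$ is at least $\sqrt{\eps}$. The paper obtains that without numerics from $\cos\beta=1/(1+\eps)\le 1-\eps/2\le\cos\sqrt{\eps}$; its text states the resulting angle inequality the other way round, but $\beta\ge\sqrt{\eps}$, as you have it, is what is needed and is true. Your two-interval numerical check is fine, but drop the suggested Taylor alternative: $\tan x\le x+x^3/2$ fails near $x=1$ (e.g.\ $\tan 1\approx 1.557>1.5$), so that chain does not establish the inequality, whereas the cosine comparison does.
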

\begin{proof}
Assume, without loss of generality, that $a=(0,0)$ and $b$ lies on the positive $x$-axis. Denote the $x$- and $y$-coordinates of a point $p\in \mathbb{R}^2$ by $x(p)$ and $y(p)$, respectively. We can bound the length of $\gamma$ as follows. Using the Taylor estimate  $1-\frac{x^2}{2}\leq \cos x$, we obtain  
\begin{align*}
|\gamma| 
&=\sum_{i=1}^m |p_{i-1}p_i|
= \sum_{i=1}^m \frac{|x(p_{i-1})-x(p_i)|}{\cos\angle(\overrightarrow{ab},\overrightarrow{p_{i-1}p_i})} 
\leq \sum_{i=1}^m \frac{|x(p_{i-1})-x(p_i)|}{\cos \sqrt{\eps}}\\
&\leq \frac{ \sum_{i=1}^m x(p_i)-x(p_{i-1}) }{1-(\sqrt{\eps})^2/2}
= \frac{x(b)-x(a)}{1-\eps/2} 
= |ab|\, \sum_{k=0}^\infty \left(\frac{\eps}{2}\right)^k 
<|ab|\,(1+\eps)
\end{align*}
for $0<\eps<1$, as claimed.

For the second claim, let $c_1$ and $c_2$ be the intersection points of $\mathcal{E}_{ab}$ and the minor axis of $\mathcal{E}_{ab}$. Let $o$ be the center of $\mathcal{E}_{ab}$, which is the orthogonal projection of $c_1$ and $c_2$ onto the line spanned by $ab$. By definition, the ellipse $\mathcal{E}_{ab}$ is the locus of points $p\in \mathbb{R}^2$ such that $|ap|+|pb|=(1+\eps)\, |ab|$.
In particular, we have $|ac_1|+|c_1b|=(1+\eps)\, |ab|$.
Symmetry yields $|ac_1|=|c_1b|$, hence $|ac_1|=\frac12 (1+\eps)\, |ab|$. Consequently, $|ao|/|ac_1|=1/(1+\eps)$ and $\cos \angle(\overrightarrow{ab},\overrightarrow{ac_1})=|ac_1|/|ao|=1/((1+\eps)<1-\eps/2$. The Taylor estimate of $1-\frac{x^2}{2}\leq \cos x$ gives $\angle( \overrightarrow{ab},\overrightarrow{ac_1})<\sqrt{\eps}$.
Then for every $i=1,\ldots , m$, we have 
\begin{align*}
\tan \angle (\overrightarrow{ab},\overrightarrow{ap_i})
&=\frac{|y(p_i)-y(a)|}{|x(p_i)-x(a)|}
=\frac{\sum_{j=1}^i |y(p_{j-1})-y(p_j)|}{|x(p_i)-x(a)|}\\
&=\frac{\sum_{j=1}^i |x(p_{j-1})-x(p_j)|\tan \angle (\overrightarrow{ab},\overrightarrow{p_{j-1}p_j})}{|x(p_i)-x(a)|}\\
&\leq \frac{\sum_{j=1}^i |x(p_{j-1})-x(p_j)|\tan \sqrt{\eps}}{|x(p_i)-x(a)|}
=\frac{|x(p_i)-x(a)|\tan \sqrt{\eps}}{|x(p_i)-x(a)|}
=\tan\sqrt{\eps}
\end{align*}
This implies that $\gamma$ lies in the cone $C_a$ with apex $a$, aperture $2\sqrt{\eps}$ and symmetry axes $\overrightarrow{ab}$. Similarly, $\gamma$ lies in the cone $C_b$ of apex $b$, aperture $2\sqrt{\eps}$ and axis $\overrightarrow{ba}$.
We conclude that $\gamma\subset C_a\cap C_b = \lozenge_{ab}$.
\end{proof}

\paragraph{Basic point set for lower bounds.}
Our lower bounds (\Cref{sec:unconditional} and \Cref{sec:cone}) are based on the same basic point set, which is known to give asymptotically tight bounds for lightness and sparsity for both Steiner and non-Steiner $(1+\eps)$-spanners in the plane~\cite{BhoreT22,LeS23,LeS25}.

Let $\eps\in (0,\frac{1}{16})$, and assume w.l.o.g.\ that $\eps=4^{-k}$ for some $k\in \mathbb{N}$. We first construct a point set $P$ of size $|P|=2(2^{k-2}+1)=\Theta(\eps^{-1/2})$. 
Consider the unit square $[0,1]^2$. Let $A$ be a set of $2^{k-2}+1$ equally spaced points on the left side of $[0,1]^2$; and $B$ a set of  $2^{k-2}+1$ equally spaced points on the right side of $U$. Our point set is $P=A\cup B$; see \Cref{fig:lb}.

Note that the minimum distance between any two points in $A$ (resp., $B$) is $2^{2-k}=4\,\sqrt{\eps}$;
the diameter of $P$ is $\sqrt{2}$. Note also that for any $a\in A$ and $b\in B$, the segment $ab$ makes an angle at most $\pi/4$ with a vertical line, in particular the absolute value of the slope of $ab$ is  at most 1.

We observe two easy properties of the point set $P=A\cup B$. 
\begin{restatable}{lemma}{ellipses}
\label{lem:ellipses}
If $ab$ and $a'b'$ are parallel, then the ellipses $\mathcal{E}_{ab}$ and $\mathcal{E}_{a'b'}$ are disjoint.
\end{restatable}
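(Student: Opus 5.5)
The plan is to reduce the claim to a comparison between two quantities: the distance between the two parallel lines spanned by $ab$ and $a'b'$, and the width of each ellipse measured perpendicular to its major axis. We assume $(a,b)\neq(a',b')$; otherwise there is nothing to prove. Since $A$ lies on the line $x=0$ and $B$ on $x=1$, each segment goes from the left side to the right side of $[0,1]^2$. So two parallel segments are translates of each other by a vertical vector $(0,t)$, and they have equal length $|ab|=|a'b'|\leq\sqrt2$. Because the points of $A$ are spaced $4\sqrt{\eps}$ apart and $a\neq a'$ (if $a=a'$, parallelism forces $b=b'$), we get $|t|\geq 4\sqrt{\eps}$.

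First I compute the distance between the two supporting lines. Let $\theta$ be the angle between $ab$ and the horizontal direction. As noted in the construction of $P$, $|\slope(ab)|\leq 1$, so $\theta\leq\pi/4$. A vertical shift by $t$ moves the line a perpendicular distance of $|t|\cos\theta\geq 4\sqrt{\eps}/\sqrt2=2\sqrt2\,\sqrt{\eps}$.

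Second, I bound how far each ellipse extends from its own axis. The ellipse $\mathcal{E}_{ab}$ has semi-major axis $\frac12(1+\eps)|ab|$ and focal half-distance $\frac12|ab|$. Its semi-minor axis is therefore $\frac12|ab|\sqrt{(1+\eps)^2-1}=\frac12|ab|\sqrt{2\eps+\eps^2}$. Using $|ab|\leq\sqrt2$ and $\eps<\frac1{16}$, this is at most $\sqrt{\eps+\eps^2/2}<1.1\sqrt{\eps}$. The whole ellipse lies in the strip of this half-width centered on the line through $a$ and $b$, and the same holds for $\mathcal{E}_{a'b'}$.

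Finally, the two strips are centered on parallel lines at distance at least $2\sqrt2\sqrt{\eps}\approx 2.83\sqrt{\eps}$. The sum of their half-widths is less than $2.2\sqrt{\eps}$, so the strips are disjoint, and hence so are the ellipses.

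No step is a real obstacle. The only points needing care are checking that parallelism forces a purely vertical translation with offset at least $4\sqrt{\eps}$, and that the constants leave enough slack.
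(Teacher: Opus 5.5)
Your proof is correct and is essentially the paper's argument. The paper likewise bounds the distance between the parallel major axes below by $4\sqrt{\eps}/\sqrt{2}=2\sqrt{2}\sqrt{\eps}$ and bounds the full minor axis of each ellipse above by $\sqrt{(1+\eps)^2-1}\,|ab|<\sqrt{6}\sqrt{\eps}<2\sqrt{2}\sqrt{\eps}$, so the ellipses lie in disjoint strips; your version just spells out the vertical-translation step and the half-width bookkeeping more explicitly.
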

\begin{proof}
First, $|aa'|=|bb'|\geq 4\,\sqrt{\eps}$ implies that the distance between the major axes of the two ellipses is more than $4\,\sqrt{\eps} / \sqrt{2} = 2\sqrt{2}\cdot \sqrt{\eps}$. By the Pythagorean theorem, the minor axis of $\mathcal{E}_{ab}$ is $\sqrt{(1+\eps)^2-1}\cdot |ab| \le \sqrt{2\eps+\eps^2} \cdot \sqrt{2} < \sqrt{6}\cdot \sqrt{\eps} < 2\sqrt{2}\cdot \sqrt{\eps}$. 
This means that $\mathcal{E}_{ab}$ (resp., $\mathcal{E}_{a'b'}$) lies in a parallel strip of width less than $2\sqrt{2}\cdot \sqrt{\eps}$ with symmetry axis $ab$ (resp., $a'b'$). Since the distance between two parallel major axes is more than $2\sqrt{2}\cdot \sqrt{\eps}$, then $\mathcal{E}_{ab}$ and $\mathcal{E}_{a'b'}$ are contained in disjoint strips, hence they are disjoint.  
\end{proof}

Next, we have a lower bound on the angle between two nonparallel segments $ab$ and $a'b'$.

\begin{restatable}{lemma}{angles}
\label{lem:angles}
For any $a,a'\in A$ and $b,b'\in B$, the following hold: 
\begin{enumerate}
    \item[(1)] if $ab$ and $a'b'$ are parallel, then $\mathcal{E}_{ab}$ and $\mathcal{E}_{a'b'}$ are disjoint;
    \item[(2)] if $ab$ and $a'b'$ are nonparallel, then $\angle(ab,a'b')> 2 \sqrt{\eps}$.
\end{enumerate} 
\end{restatable}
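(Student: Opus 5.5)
Part (1) is exactly \Cref{lem:ellipses}, so it needs no new argument. The work is in part (2).

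For part (2), I would parametrize the segments by their vertical offsets. Put $a=(0,\alpha)$, $a'=(0,\alpha')$, $b=(1,\beta)$ and $b'=(1,\beta')$. Every coordinate is an integer multiple of $h=4\sqrt{\eps}$ in $[0,1]$. The direction of $ab$ is then the angle $\theta=\arctan(\beta-\alpha)$, and the direction of $a'b'$ is $\theta'=\arctan(\beta'-\alpha')$. The slopes $s=\beta-\alpha$ and $s'=\beta'-\alpha'$ lie in $h\mathbb{Z}\cap[-1,1]$. The segments are nonparallel exactly when $s\neq s'$, and then $|s-s'|\geq h=4\sqrt{\eps}$.

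Next, I turn the slope gap into an angle gap with the mean value theorem for $\arctan$. The derivative is $1/(1+t^2)$, which is at least $1/2$ on $[-1,1]$. This gives
\[
|\theta-\theta'| \;\geq\; \tfrac12\,|s-s'| \;\geq\; 2\sqrt{\eps}.
\]
Since $|s|,|s'|\le 1$, both angles lie in $[-\pi/4,\pi/4]$. So $|\theta-\theta'|\le\pi/2$, and this difference really is the (unoriented) angle between the two lines, so there is no wrap-around issue.

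The only real obstacle is that the statement asks for strict inequality, while the crude bound above only gives $\geq 2\sqrt{\eps}$. Strictness holds because $1/(1+t^2)>1/2$ for every $|t|<1$. The intermediate point $\xi$ from the mean value theorem lies strictly between $s$ and $s'$, and since $s\neq s'$ and both lie in $[-1,1]$, we get $|\xi|<1$. Therefore
\[
|\theta-\theta'| \;=\; \frac{|s-s'|}{1+\xi^2} \;>\; \tfrac12\,|s-s'| \;\geq\; 2\sqrt{\eps}.
\]
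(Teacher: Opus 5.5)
Your argument is correct. It reaches the bound by a different computation than the paper, though the underlying inputs are the same.

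\textbf{The paper's route.} The paper translates $a'b'$ vertically to a segment $ab''$ with $b''$ on the line $x=1$, so that $|bb''|\ge 4\sqrt{\eps}$; this is your $|s-s'|$. It then applies the law of sines in the triangle $abb''$. It uses $|ab''|=|a'b'|\le\sqrt{2}$, and the fact that the angle at $b$ has sine at least $\sin(\pi/4)$ because $|\slope(ab)|\le 1$. This gives a lower bound on $\sin\angle(ab,a'b')$, which it converts to the angle via $\sin x\le x$.

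\textbf{Your route.} You write the angle exactly as $|\arctan s-\arctan s'|$ and apply the mean value theorem with $1/(1+\xi^2)>1/2$ on $(-1,1)$.

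\textbf{Comparison.} Both arguments use the same two facts: slope differences are nonzero multiples of $4\sqrt{\eps}$, and all slopes lie in $[-1,1]$. Both end with the lower bound $\tfrac12|s-s'|$, for the sine in the paper and for the angle itself in yours. Your version avoids the auxiliary triangle, checks explicitly that no wrap-around occurs, and makes the strict inequality transparent.

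The printed chain in the paper uses $\pi/4$ where $\sin(\pi/4)=1/\sqrt{2}$ belongs. With the correct value it gives only $\sin\angle(ab,a'b')\ge 2\sqrt{\eps}$, so the strict inequality there has to come from $\sin x<x$ for $x>0$. The claim still holds, but less visibly than in your derivation.

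Treating part (1) as \Cref{lem:ellipses}, implicitly for distinct segments, matches what the paper does.
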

\begin{proof}
We can translate $a'b'$ to a segment $ab''$, where $b''$ is on the line $x=1$ that contains $B$, and $|bb''|\geq 2^{2-k}=4\,\sqrt{\eps}$. Now the law of sines for the triangle $\Delta(abb'')$ yields 
\begin{align*}
\sin \angle(ab,a'b') 
&=\sin\angle (ab, ab'') 
=\sin \angle (ab, bb'') \frac{|bb''|}{|ab''|}\\
&=\sin \angle (ab, bb'') \frac{|bb''|}{|a'b'|}
\geq \frac{\pi}{4} \cdot\frac{4\,\sqrt{\eps}}{\sqrt{2}}
\geq \frac{\pi}{\sqrt{2}}\cdot \sqrt{\eps} > 2\cdot\sqrt{\eps}.
\end{align*}
Now the inequality $\sin(x)\leq x$ 
readily gives $\angle(ab,a'b')\geq \sin(2.2 \sqrt{\eps})>2\,\sqrt{\eps}$.
\end{proof}

\section{A sparse noncrossing Steiner spanner}
\label{sec:UB}

\paragraph{Construction.}
Our construction is based on the noncrossing Steiner $(1+\varepsilon)$-spanner of Arikati et al.~\cite{ArikatiCCDSZ96}. See also Section~4 in the survey by Bose and Smid~\cite{BoseS13}.   

We are given a set $P$ of $n$ points in the plane, and a parameter $\eps>0$. 
Let $k \in \mathbb{N}$. Note that $k = k(\varepsilon)$ and will be chosen later based on $\eps$. We will construct a set of planar straight-line graphs~$G_i$ for $i\in \{1,\ldots , k\}$. Then we will construct the final spanner $G=\bigcup_{i=1}^{k} G_i$ as the union of the graphs~$G_i$, where a Steiner point is inserted at each edge crossing between edges in $G_i$ and $G_j$ for $i \neq j$. For $i\in \{1,\ldots , k\}$, the graph $G_i$ will be such that the edges have two possible directions: they either make an angle of $i\cdot \frac{\pi}{k}$ or $(i+\delta)\cdot \frac{\pi}{k}$ with the positive $x$-axis, where $\delta$ is a positive integer\footnote{We choose $\delta = 3$, whereas Arikati et al.~\cite{ArikatiCCDSZ96} choose $\delta = \frac k 2$.}. Consider the affine transformation $T_i:\mathbb{R}^2\to\mathbb{R}^2$ that maps unit vectors of direction $i\cdot \frac{\pi}{k}$ and $(i+\delta)\cdot \frac{\pi}{k}$ to unit vectors along the positive $x$- and $y$-axes, respectively. By applying the transformation on~$G_i$, we obtain an axis-parallel graph~$T_i(G_i)$ on the point set $T_i(P)$.

It remains to construct the axis-parallel graph $T_i(G_i)$. We use the Balanced Box Decomposition (BBD), which we introduced in Section~\ref{sec:pre}. Recall that the BBD divides the bounding box of $T_i(P)$ into $O(n)$ tiles. For each of these $O(n)$ axis-aligned tiles, we will further subdivide the tile into at most $9K^2$ axis-aligned rectangles. If a tile contains no hole, we subdivide it into rectangles using~$K$ equally spaced horizontal lines and~$K$ equally spaced vertical lines. If a tile contains a hole, we first subdivide it into at most 9 rectangles using the four lines spanned by the four sides of the inner rectangle, and then subdivide each of these rectangles using~$K$ equally spaced horizontal lines and~$K$ equally spaced vertical lines. See \Cref{fig:further_subdivide}.

\begin{figure}[!ht]
    \centering
    \includegraphics[width=0.7\linewidth]{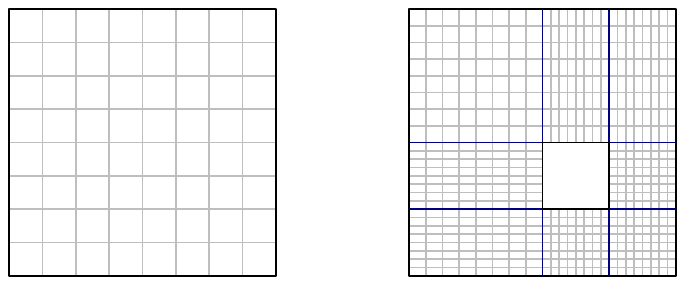}
    \caption{Further subdividing a tile into $\leq 9K^2$ rectangles, first by the four lines spanned by the sides of the inner rectangle (dark blue), and second by equally spaced axis-parallel lines (grey).}
    \label{fig:further_subdivide}
\end{figure}

This yields a partition of the bounding box of $T_i(P)$ into $O(nK^2)$ axis-aligned rectangles such that each point in $T_i(P)$ lies on the boundary of one of the rectangles. This partition defines the axis-parallel graph $T_i(G_i)$. In particular, the vertices of the graph are the points $T_i(P)$ and the vertices of the rectangles in the partition, and the edges of the graph are the edges of the rectangles in the partition, or a pair of edges connected to a vertex of $T_i(G_i)$ if the vertex lies on an edge of a rectangle. This completes the construction of~$T_i(G_i)$. We can apply the inverse transformation~$T_i^{-1}$ to obtain the graph~$G_i$. Finally, by constructing the union $G=\bigcup_{i=1}^{k} G_i$ and adding Steiner points at edge crossings, we obtain the final graph.

A key difference between our construction and that of Arikati et al.~\cite{ArikatiCCDSZ96} is that we use linear transformations, instead of rotated copies, of the BBD construction. This difference ultimately leads to our improved bound on the number of Steiner points. In particular, the properties of our linear transformations (Lemmas~\ref{lemma:strip} and~\ref{lemma:staircase}) allow us to use significantly fewer graphs and a significantly coarser refinement of the BBD, which correspond to smaller values for the parameters~$k$ and $K$, respectively. 

Next, we define the parameters $\delta$, $k$ and~$K$ and compare them to~\cite{ArikatiCCDSZ96}. Our linear transformations use $\delta = 3$, whereas the rigid motions of Arikati et al.~\cite{ArikatiCCDSZ96} correspond to $\delta = \frac k 2$. We use $k = O(\sqrt{1/\varepsilon})$ and $K = 100$, instead of the $k = O(1/\varepsilon)$,  and $K = O(1/\varepsilon)$ used in~\cite{ArikatiCCDSZ96}. Notably, we have $|G_i| = O(n)$ instead of $|G_i| = O(n/\varepsilon^2)$ in~\cite{ArikatiCCDSZ96}. We will show that, even with fewer and smaller graphs~$G_i$, we still obtain a $(1+\varepsilon)$-spanner. 

\paragraph{Stretch analysis.}

We will prove that $G$ is a $(1+ \varepsilon)$-spanner for the new construction, i.e., for the new values of~$k$, $\delta$ and $K$. Let $a , b \in P$. Define $\angle {ab}$ to be the angle between the vector~$\overrightarrow{ab}$ and the positive $x$-axis. For the remainder of this section, we will assume without loss of generality that $\angle {ab} \in [(i+1)\cdot \frac{\pi}{k}, (i+2) \cdot \frac{\pi}{k})$ where $i \in \{1, \ldots, k\}$. The next lemma is to prove that the transformation $T_i$ sends vector $\overrightarrow{ab}$ into the angle class $[\frac \pi {12}, \frac {5\pi} {12})$. See Figure~\ref{fig:angle_class}. The lemma assumes that $\varepsilon$ is sufficiently small, and thus $k$ is sufficiently large.

\begin{restatable}{lemma}{strip}
    \label{lemma:strip}
    If $\angle {ab} \in [(i+1)\cdot \frac{\pi}{k}, (i+2) \cdot \frac{\pi}{k})$, then $\angle {(T_i(a)T_i(b))} \in [\frac \pi {12}, \frac {5\pi} {12})$.
\end{restatable}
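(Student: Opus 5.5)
The plan is a direct computation: express $\overrightarrow{ab}$ in the (non-orthogonal) basis that $T_i$ sends to the standard basis, and read off the angle of the image. Let $u$ and $v$ be the unit vectors of directions $i\cdot\frac{\pi}{k}$ and $(i+\delta)\cdot\frac{\pi}{k}$, with $\delta=3$. By definition $T_i$ is the linear part mapping $u\mapsto e_1$ and $v\mapsto e_2$; translations do not affect directions. Hence if $\overrightarrow{ab}=\alpha u+\beta v$, then $\overrightarrow{T_i(a)T_i(b)}=(\alpha,\beta)$. The angle $\angle(T_i(a)T_i(b))$ is therefore determined by the signs of $\alpha,\beta$ and by the ratio $\beta/\alpha$.

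First I will fix the normalisation. Write $\angle ab = i\cdot\frac{\pi}{k}+\theta$ with $\theta\in[\frac{\pi}{k},\frac{2\pi}{k})$. The angle $\angle ab$ lives in $[0,2\pi)$, while the $k$ directions $i\pi/k$ only cover a half-turn. I will handle this by swapping $a$ and $b$ if needed, which corresponds to the usual "without loss of generality" in the paper's setup, and by reading indices modulo $k$. We need $\delta\pi/k=3\pi/k<\pi$, i.e.\ $k\ge 4$, which holds since $k$ is large for small $\eps$. Under these conditions, $\overrightarrow{ab}$ lies strictly inside the cone spanned by $u$ and $v$, because $0<\theta<3\pi/k$.

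Next I will compute $\alpha$ and $\beta$ by the law of sines in the triangle formed by $\alpha u$, $\beta v$ and $\overrightarrow{ab}$:
\[
\alpha=|ab|\,\frac{\sin(3\pi/k-\theta)}{\sin(3\pi/k)},\qquad \beta=|ab|\,\frac{\sin\theta}{\sin(3\pi/k)} .
\]
Both coefficients are positive, so the image lies in the open first quadrant and
\[
\tan\angle(T_i(a)T_i(b))=\frac{\sin\theta}{\sin(3\pi/k-\theta)}.
\]
This ratio is increasing in $\theta$ on $(0,3\pi/k)$, since the numerator increases and the denominator decreases. It therefore suffices to check the two endpoints.

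At $\theta=\pi/k$ the ratio is $\frac{\sin(\pi/k)}{\sin(2\pi/k)}=\frac{1}{2\cos(\pi/k)}\ge\frac12>\tan\frac{\pi}{12}=2-\sqrt3\approx0.268$. As $\theta\to 2\pi/k$ the ratio tends to $2\cos(\pi/k)\le 2<\tan\frac{5\pi}{12}=2+\sqrt3$, and the half-open interval means this endpoint is not attained. Since $\arctan$ is increasing, the angle of the image lies in $[\arctan\frac{1}{2\cos(\pi/k)},\arctan(2\cos(\pi/k)))\subset[\frac{\pi}{12},\frac{5\pi}{12})$. The margins are generous, so no delicate estimate on $k$ is needed beyond $k\ge 4$.

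The only real obstacle is bookkeeping rather than analysis: stating cleanly that the direction of $\overrightarrow{ab}$ can be taken modulo $\pi$ with the right orientation, and that the index $i$ is taken modulo $k$, so that the hypothesis places $\overrightarrow{ab}$ strictly between $u$ and $v$. Once that is set up, the sine-law computation and the monotonicity argument complete the proof.
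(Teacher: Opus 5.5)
Your proof is correct and is essentially the paper's computation. The paper pulls the two boundary directions $\frac{\pi}{12}$ and $\frac{5\pi}{12}$ back through $T_i^{-1}$, written as a rotation times $\left[\begin{smallmatrix}1&\cos\frac{3\pi}{k}\\0&\sin\frac{3\pi}{k}\end{smallmatrix}\right]$, and checks that they land outside the sector; this reduces to exactly your endpoint inequality $\tan\frac{\pi}{12}<\frac{1}{2\cos(\pi/k)}$. You instead push the sector forward via the sine-law coordinates, which makes explicit the monotonicity that the paper uses tacitly in its final "Therefore". Your monotonicity claim in fact holds for any $3\pi/k<\pi$, since the derivative of $\sin\theta/\sin(3\pi/k-\theta)$ is $\sin(3\pi/k)/\sin^2(3\pi/k-\theta)>0$.
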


\begin{figure}[!ht]
    \centering
    \includegraphics[width=\linewidth]{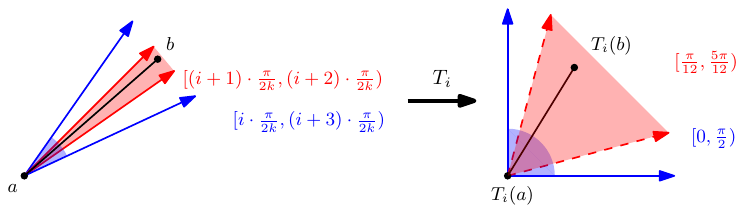}
    \caption{A visualization of the transformation~$T_i$, which sends the unit vectors in the directions $i\cdot \frac{\pi}{k}$, $(i+\delta)\cdot \frac{\pi}{k}$ (left, blue) to the unit $x$- and $y$-vectors (right, blue). Lemma~\ref{lemma:strip} states that the angle class $[(i+1)\cdot \frac{\pi}{k}, (i+2) \cdot \frac{\pi}{2k})$ (left, red) will be sent into the angle class $[\frac \pi {12}, \frac {5\pi} {12})$ (right, red).}
    \label{fig:angle_class}
\end{figure}

\begin{proof}
Recall that $T_i$ is defined to send unit vectors of direction $i\cdot \frac{\pi}{k}$ and $(i+\delta)\cdot \frac{\pi}{k}$ to unit vectors along the positive $x$- and $y$-axes, respectively. It will be much simpler to work with the inverse transformation~$T_i^{-1}$, which sends the unit $x$- and $y$-vectors to the unit vectors in the directions $i\cdot \frac{\pi}{k}$ and $(i+\delta)\cdot \frac{\pi}{k}$. In other words, 

\[
    T_i^{-1} 
        \left[ \begin{array}{c} 1 \\ 0 \end{array}\right]
     = 
        \left[\begin{array}{c} \cos(i\cdot \frac{\pi}{k}) \\ \sin(i\cdot \frac{\pi}{k}) \end{array} \right]
    , \quad
    T_i^{-1} 
        \left[ \begin{array}{c} 0 \\ 1 \end{array}\right]
     = 
        \left[\begin{array}{c} \cos((i+3)\cdot \frac{\pi}{k}) \\ \sin((i+3)\cdot \frac{\pi}{k}) \end{array}\right].
\]
So we have 
\[
    T_i^{-1} = \left[\begin{array}{c c} \cos(i\cdot \frac{\pi}{k})  & \cos((i+3)\cdot \frac{\pi}{k}) \\ \sin(i\cdot \frac{\pi}{k}) & \sin((i+3)\cdot \frac{\pi}{k}) \end{array}\right] = \left[\begin{array}{c c} \cos(i\cdot \frac{\pi}{k})  & -\sin(i\cdot \frac{\pi}{k}) \\ \sin(i\cdot \frac{\pi}{k}) & \cos(i\cdot \frac{\pi}{k}) \end{array}\right] \, \left[\begin{array}{c c} 1 & \cos \frac {3 \pi} k \\ 0 & \sin \frac {3 \pi} k \end{array}\right].
\]
Next, we consider the vector $\vec{v}$ so that $T_i(\vec{v})$ is in the direction $\frac \pi {12}$. Up to scaling, we have

\[
    v = T_i^{-1} \left[ \begin{array}{c} \cos \frac \pi {12} \\ \sin \frac \pi {12} \end{array}\right] = \left[\begin{array}{c c} \cos(i\cdot \frac{\pi}{k})  & -\sin(i\cdot \frac{\pi}{k}) \\ \sin(i\cdot \frac{\pi}{k}) & \cos(i\cdot \frac{\pi}{k}) \end{array}\right] \, \left[ \begin{array}{c} \cos \frac \pi {12} + \sin \frac \pi {12} \cos \frac {3 \pi} k \\ \sin \frac \pi {12} \sin \frac {3 \pi} k \end{array}\right].
\]
We want to show that $\vec{v}$ is in the angle class $(i\cdot \frac{\pi}{k}, (i+1) \cdot \frac{\pi}{k}]$. But $\left[\begin{smallmatrix} \cos(i\cdot \frac{\pi}{k})  & -\sin(i\cdot \frac{\pi}{k}) \\ \sin(i\cdot \frac{\pi}{k}) & \cos(i\cdot \frac{\pi}{k}) \end{smallmatrix}\right]$ is the linear transformation that rotates by an angle of $(i\cdot \frac{\pi}{k})$. So it suffices to show that $\left[ \begin{smallmatrix} \cos \frac \pi {12} + \sin \frac \pi {12} \cos \frac {3 \pi} k \\ \sin \frac \pi {12} \sin \frac {3 \pi} k \end{smallmatrix}\right]$ is in the angle class $(0, \frac \pi {k}]$. Since $\tan \theta$ is increasing for $\theta \in [0, \frac \pi 2)$, it suffices to show that

\[
0
<
\frac {\sin \frac \pi {12} \sin\frac{3\pi}{k}} {\cos \frac \pi {12} + \sin \frac \pi {12} \cos\frac{3\pi}{k}} 
\leq
\frac {\sin \frac{\pi}{k}} {\cos \frac \pi k}.
\]
The first inequality holds since all terms are positive. The second inequality follows from

\[
\begin{aligned}
&&\frac {\sin \frac \pi {12} \sin\frac{3\pi}{k}} {\cos \frac \pi {12} + \sin \frac \pi {12} \cos\frac{3\pi}{k}} 
&\leq
\frac {\sin \frac{\pi}{k}} {\cos \frac \pi k} 
\\
\iff &&
\sin \frac \pi {12} \sin \frac {3 \pi} k \cos \frac \pi k
&\leq 
\cos \frac \pi {12} \sin \frac \pi k + \sin \frac \pi {12} \cos  \frac {3 \pi} k \sin \frac \pi k
\\
\iff &&
\sin \frac \pi {12} \sin \frac {2 \pi} k
&\leq
\cos \frac \pi {12} \sin \frac \pi k \\
\iff &&
\sin \frac \pi {12} \cdot 2 \cos \frac {\pi} k 
&\leq
\cos \frac \pi {12}  \\
\impliedby &&
 \tan \frac \pi {12} 
&< \frac 1 2  \quad \text{and} \quad \cos \frac {\pi} k \leq 1 .
\end{aligned}
\]

So $T_i^{-1} \left[ \begin{smallmatrix} \cos \frac \pi {12} \\ \sin \frac \pi {12} \end{smallmatrix}\right]$ lies in the angle class $(i\cdot \frac{\pi}{k}, (i+1) \cdot \frac{\pi}{k}]$. Similarly, $T_i^{-1}  \left[ \begin{smallmatrix} \cos \frac {5\pi} {12} \\ \sin \frac {5\pi} {12} \end{smallmatrix}\right]$ lies in the angle class $((i+2)\cdot \frac{\pi}{k}, (i+3) \cdot \frac{\pi}{k}]$. Therefore, $T_i$ sends vectors in the angle class $((i+1)\cdot \frac{\pi}{k}, (i+2) \cdot \frac{\pi}{k}]$ into the angle class $[\frac \pi {12}, \frac {5\pi} {12})$, as required.
\end{proof}

Next, we use Lemma~\ref{lemma:strip} to show that there is a staircase path in the graph~$T_i(G_i)$.

\begin{restatable}{lemma}{staircase}
    \label{lemma:staircase}
    If $\angle {(T_i(a)T_i(b))} \in [\frac \pi {12}, \frac {5\pi} {12})$, then there is an $xy$-monotone axis-parallel path from $T_i(a)$ to $T_i(b)$ in the graph $T_i(G_i)$.
\end{restatable}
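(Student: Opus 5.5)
Write $a'=T_i(a)$ and $b'=T_i(b)$. Since the direction of $\overrightarrow{a'b'}$ lies in $[\frac{\pi}{12},\frac{5\pi}{12})$, $b'$ is strictly above and to the right of $a'$. Moreover, the slope of $a'b'$ is between $\tan\frac{\pi}{12}\approx 0.27$ and $\tan\frac{5\pi}{12}\approx 3.73$. So $|x(b')-x(a')|$ and $|y(b')-y(a')|$ agree up to a factor less than $4$. We want a path in $T_i(G_i)$ that only moves up and right. I would build it greedily: from the current vertex, move along the boundary of the current rectangle of the refined partition, and argue that we never get stuck before reaching $b'$. The plan is to walk inside the refined BBD and use the structure of the tiles (aspect ratio $\le 3$, stickiness, the $K\times K$ refinement with $K=100$) together with the slope bound. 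The slope bound keeps the whole staircase inside a region of bounded aspect ratio, which the coarse refinement can resolve.

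\textbf{Step 1 (leaving $a'$ and entering $b'$).} By \Cref{fact:bbd}(c), $a'$ lies on the boundary of a tile, so it is a vertex of $T_i(G_i)$. I would show that some edge incident to $a'$ goes up or right. The union of the closed rectangles containing $a'$ covers a neighbourhood of $a'$, so some rectangle $R$ of the partition contains points just to the north-east of $a'$. The boundary of $R$ contains an up- or right-going segment from $a'$. Symmetrically, $b'$ has an incoming edge from below or from the left.

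\textbf{Step 2 (greedy monotone walk).} Consider the axis-parallel box $Q$ with corners $a'$ and $b'$. I would prove that the union of the partition edges meeting $Q$ contains an $xy$-monotone path from $a'$ to $b'$. The argument is a standard one for rectangular subdivisions: any rectangle $R$ that meets the interior of $Q$ has its bottom-left corner region and top-right corner region connected monotonically along $\partial R$. Sweeping the rectangles crossed by the segment $a'b'$ in order, their boundaries can be concatenated into a monotone chain, always moving right along bottom/top edges and up along left/right edges.

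\textbf{Step 3 (the main obstacle).} The real difficulty is that a partition rectangle crossed by $a'b'$ may stick far outside $Q$. Its boundary could then force the walk beyond $x(b')$ or $y(b')$, so monotonicity toward $b'$ fails. This is exactly where $K=100$, the aspect ratio bound, and the slope window $[\frac{\pi}{12},\frac{5\pi}{12})$ enter. I would split into two cases according to whether $a'$ and $b'$ lie in a common BBD tile or in the boundary of a common rectangle of the tile's $3\times3$ split.
\begin{itemize}
\item In the first case, $a'$ and $b'$ are both on the tile boundary. Because the direction of $a'b'$ is bounded away from the axes, a short case analysis over the sides shows that a monotone path exists along the tile and hole boundaries, which are in the graph.
\item In the second case, $b'$ lies outside the tile $t$ of $a'$. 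Since $b'$ is not in $t$, $|a'b'|$ is at least comparable to the distance to the far side of $t$. The refinement cells of $t$ and of neighbouring tiles then have side at most about $\frac{3}{K}$ times the tile size, which is much smaller than both coordinate gaps of $a'b'$ (each at least $0.27\,|a'b'|$ in proportion).
\end{itemize}
Hence every cell crossed by $a'b'$ near $a'$ and $b'$ fits inside a slight enlargement of $Q$ whose overshoot lies on the far side of $b'$ only when $b'$ is itself on that cell's boundary. I expect making this precise across differing tile sizes to be the hardest part. I would first try an induction on the BBD tree, splitting at the first separating cut between $a'$ and $b'$ and routing through a cut-line vertex chosen within the $K$-refinement.
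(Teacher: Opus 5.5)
You have the right ingredients (the slope window, aspect ratio at most $3$, stickiness, $K=100$), but the argument that assembles them is missing, and the steps you do state fail as written.

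\textbf{Step 2 is not a valid general fact.} The segment $a'b'$ can cross a partition rectangle $R$ from its left side to its right side; for example, a square refinement cell crossed with slope near $\tan\frac{\pi}{12}$. Then $\partial R$ contains no $xy$-monotone route from the entry point to the exit point. This is exactly the overshoot problem you raise in Step 3, and you never resolve it.

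\textbf{Step 3 contains errors.}
\begin{enumerate}
\item If $a'$ and $b'$ lie on parallel sides of a tile, the tile and hole boundaries alone never contain a monotone path; an interior refinement line is needed.
\item The claim ``$b'\notin t$ implies $|a'b'|$ is comparable to the size of $t$'' is false, since $b'$ may lie just across a side of $t$.
\item Most importantly, you only control cells near $a'$ and $b'$. The segment may cross many intermediate tiles of unrelated sizes, and each needs its own argument.
\end{enumerate}
The induction on the BBD tree you suggest at the end is not carried out.

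\textbf{The missing idea is to localize to the unrefined BBD tiles.} For each tile $\tau$ and each component $cd$ of $a'b'\cap\tau$, replace $cd$ by an $xy$-monotone path from $c$ to $d$ in $T_i(G_i)$. Consecutive pieces share endpoints, so the concatenation is monotone automatically, and each piece stays in the box spanned by its own $c$ and $d$. Hence no overshoot can arise.

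Since $a'b'$ goes up and to the right, $c$ lies on a left or bottom side and $d$ on a top or right side (of the outer rectangle or of the hole). There are three cases.
\begin{enumerate}
\item \emph{Perpendicular sides.} Go through their common corner. For a hole side, use the line it spans, which is in the graph by the $3\times3$ split.
\item \emph{Parallel outer sides}, say bottom and top at distance $h$. The width is at most $3h$, so vertical refinement lines are spaced at most $0.03h$ apart. The slope bound gives $cd$ a horizontal extent of at least $h/\tan\frac{5\pi}{12}>0.267h$. Hence some vertical refinement line lies inside that extent and serves as the rung.
\item \emph{Hole side and parallel outer side.} Stickiness forces the gap between them to be at least the hole's width, which is at least $h/3$ for hole height $h$. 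The slope bound then gives a vertical extent of at least $\frac h3\tan\frac{\pi}{12}>0.089h$. This exceeds the spacing $h/100$ of the horizontal refinement lines joining the two sides.
\end{enumerate}
This per-tile case analysis is where the aspect ratio, stickiness and $K=100$ are actually used.
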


\begin{figure}[ht!]
    \centering
    \includegraphics[width=0.4\textwidth]{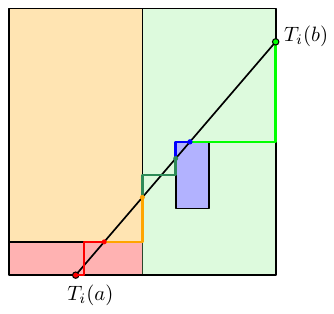}
    \caption{Replacing $T_i(a)T_i(b) \cap \tau$ with an $xy$-monotone axis-parallel path, for each tile $\tau$.}
    \label{figure:tau}
\end{figure}

\begin{proof}
    Since $a,b \in P$, we have by construction that $T_i(a)$ and $T_i(b)$ are vertices located on tile boundaries in the Balanced Box Decomposition of~$T_i(P)$. We trace the segment $T_i(a) T_i(b)$ across the tiles of the Balanced Box Decomposition, and within each tile $\tau$ we replace the segments in $T_i(a) T_i(b) \cap \tau$ with $xy$-monotone axis-parallel paths in $T_i(G_i)$. Note that $T_i(a) T_i(b) \cap \tau$ may consist of multiple segments, see the green tile in~\Cref{figure:tau}. Next, we will consider one of these segments, $cd \subseteq T_i(a) T_i(b) \cap \tau$, and perform a case analysis on the segment~$cd$.

    \begin{description}
        \item[Case 1.] Both~$c$ and~$d$ lie on the outer boundary of~$\tau$, moreover, the boundary segments containing~$c$ and~$d$ are perpendicular. The orange tile (top left) in \Cref{figure:tau} is an example. There is an $xy$-monotone axis-parallel path from~$c$ to~$d$ along the boundary of the tile~$\tau$, as required.
        
        \item[Case 2.] Both~$c$ and~$d$ lie on the outer boundary of~$\tau$, moreover, the boundary segments containing~$c$ and~$d$ are parallel. The red tile (bottom left) in~\Cref{figure:tau} is an example. Without loss of generality, assume that $c$ and $d$ lie on the top and bottom boundaries of~$\tau$, respectively. Let the height of~$\tau$, and~$cd$, be~$h$. Then the width of~$\tau$ is at most~$3h$, by \Cref{fact:bbd}. The graph $T_i(G_i)$ has $K=100$ equally spaced vertical lines, and the horizontal distance between these vertical lines is at most $\frac{3h}{100} = h \cdot 0.03$. 

        The height of the segment~$cd$ is~$h$, and by Lemma~\ref{lemma:strip}, $cd$ lies in the angle class $[\frac \pi {12}, \frac {5\pi} {12})$. Therefore, its width is at least $h / \tan \frac {5\pi} {12} = h (\frac 1 {2 + \sqrt 3}) > h \cdot 0.267$. Therefore, the width of $cd$ is greater than twice the horizontal separation between vertical lines in the graph~$T_i(G_i)$. Therefore, there is a vertical line in the $x$-span of $cd$, we can connect this vertical line to~$c$ via the top boundary and to~$d$ via the bottom boundary, to obtain an axis-parallel path from~$c$ to~$d$.
                
        \item[Case 3.] The point~$c$ lies on the inner boundary of~$\tau$ and the point~$d$ lies on the outer boundary of~$\tau$, moreover, the boundary segments containing~$c$ and~$d$ are perpendicular. The green tile (right) and the light green path (right) in \Cref{figure:tau} is an example. There is an $xy$-monotone axis-parallel path along the outer boundary segment containing~$d$ and along the line spanner by the inner boundary containing~$c$.

        \item[Case 4.] The point~$c$ lies on the inner boundary of~$\tau$ and the point~$d$ lies on the outer boundary of~$\tau$, moreover, the boundary segments containing~$c$ and~$d$ are parallel. The green tile (right) and the dark green path (middle) in \Cref{figure:tau} is an example. Without loss of generality suppose that the inner and outer boundary segments containing~$c$ and~$d$ are both vertical. The graph $T_i(G_i)$ has $K=100$ equally spaced horizontal lines with one endpoint on the inner boundary segment (that contains~$c$) and one endpoint on the outer boundary segment (that contains~$d$). The vertical distance between these horizontal lines is $\frac {h}{100} = h \cdot 0.01$.

        By the sticky property of the Balanced Box Decomposition (\Cref{definition:sticky}), the horizontal width of $cd$ is at least the width of the inner box, which is at least $\frac h 3$. By Lemma~\ref{lemma:strip}, the segment $cd$ lies in the angle class $[\frac \pi {12}, \frac {5\pi} {12})$. Therefore, the height of $cd$ is at least $\frac h 3 \tan \frac \pi {12} = h (\frac {2 - \sqrt 3} 3) > h \cdot 0.089$. The height of~$cd$ is greater than twice the vertical separation between the $K=100$ horizontal lines in the graph $T_i(G_i)$. Now, we can construct an axis-parallel path from~$c$ to~$d$ in the graph~$T_i(G_i)$. We can travel vertically from $c$ to one of these horizontal lines in $T_i(G_i)$, travel along this horizontal line, and then travel vertically to~$d$. \qedhere
    \end{description}
\end{proof}

The final step is to observe that, since there is an $xy$-monotone axis-parallel path from $T_i(a)$ to $T_i(b)$ in $T_i(G_i)$, by reversing the transformation $T_i$, we obtain a cone restricted path from $a$ to $b$ in $G$. By \Cref{lemma:cone_restricted}, the length of this path is at most $(1+\varepsilon)|ab|$. Therefore, $G$ is a $(1+\varepsilon)$-spanner. 

\paragraph{Number of Steiner points.} A Steiner vertex is inserted at each edge crossing between edges in $G_i$ and $G_j$, where $1 \leq i < j \leq k = O(\sqrt{1/\eps})$. The next lemma will help bound the number of crossings between edges in $G_i$ and $G_j$. In this lemma, we will assume that $\varepsilon$ is sufficiently small and thus $k$ is sufficiently large, so that $\sin\frac {3\pi} k = \Omega (\frac 1 k)$.

\begin{restatable}{lemma}{intersectinglongeredges}
    \label{lemma:intersecting_longer_edges}
    Let $1 \leq i <j \leq k = O(\sqrt{1/\eps})$, and~$e$ be an edge of $G_i$. Then $G_j$ contains at most $O(\sqrt{1/\eps})$ edges that both (i) intersect $e$ and (ii) are at least as long as~$e$.
\end{restatable}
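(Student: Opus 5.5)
Let $\ell=|e|$. I would reduce the count to a packing argument in $T_j$-coordinates, using three facts: edges of $G_j$ point in only two directions, $T_j$ distorts lengths by at most $O(k)$, and the BBD refinement has bounded local complexity.

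\textbf{Step 1 (reduce to a small window).} Every edge $f$ of $G_j$ that meets $e$ and satisfies $|f|\ge \ell$ can be shortened to a subsegment $f'\subseteq f$ of length exactly $\ell$ that still meets $e$. Each such $f'$ lies in the disk $D$ of radius $2\ell$ centred at the midpoint of $e$.

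\textbf{Step 2 (control $T_j$).} From the factorisation of $T_j^{-1}$ in the proof of \Cref{lemma:strip}, $T_j^{-1}$ is a rotation composed with $\bigl[\begin{smallmatrix}1&\cos\frac{3\pi}{k}\\0&\sin\frac{3\pi}{k}\end{smallmatrix}\bigr]$. Its singular values are $\Theta(1)$ and $\Theta(\sin\frac{3\pi}{k})=\Theta(1/k)$, so $T_j$ stretches lengths by a factor between $\Theta(1)$ and $O(k)$.

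Consequently, $T_j(D)$ is contained in an ellipse, hence a parallelogram $Q$, with one side of length $O(\ell)$ and the other of length $O(k\ell)$. Each $T_j(f')$ is an axis-parallel segment of length at least $\Omega(\ell)$, since the lower singular value of $T_j$ is $\Theta(1)$.

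\textbf{Step 3 (packing).} In $T_j(G_j)$, edges are sides of rectangles of the refined BBD. The refinement with $K=100$ lines, the sticky property and aspect ratio $\le 3$ (\Cref{fact:bbd}) together imply the following: any axis-parallel square of side $s$ is met by only $O(1)$ edges of length at least $\Omega(s)$. Indeed, parallel edges of length $\gtrsim s$ are separated by distances comparable to the size of their tiles, which are $\gtrsim s/K$.

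Cover $Q$ by $O(k)$ axis-parallel squares of side $\Theta(\ell)$. Each $T_j(f')$ meets one of these squares, and each square contributes $O(1)$ such edges. This gives $O(k)=O(1/\sqrt{\eps})$ edges in total.

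\textbf{Main obstacle.} The hard part is the local-complexity claim in Step 3. A long edge of $G_j$ near $e$ could run between two tiles of very different sizes, and nested tiles (the outer box with its inner hole) could stack several long parallel edges close together.

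My first attempt would be a charging argument. Charge each long edge meeting the square to the smallest BBD tile adjacent to it. Long edges of a tile of size $\sigma$ have length $O(\sigma)$. Among tiles of size $\gtrsim s$ meeting a square of side $s$, only $O(1)$ can exist by area, except through nesting. For nesting, I would use that a hole in the BBD is sticky and separated by at least its own width, so only $O(1)$ nesting levels can have boundary edges longer than $s$ within distance $s$.

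If this charging fails, the fallback is to accept $O(\log)$ levels and rely on the BBD depth structure. I expect the local packing to hold with a constant depending only on $K$ and the aspect ratio.
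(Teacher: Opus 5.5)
Your outline is correct, and it differs from the paper mainly in where the packing is done. The paper pulls each BBD tile of $T_j(P)$ back by $T_j^{-1}$. A tile that has a side of length at least $|e|$ and whose preimage meets $e$ becomes a parallelogram with sides at least $|e|/3$ and angle $3\pi/k$. It therefore occupies area $\Omega(|e|^2/k)$ of the disk of radius $2|e|$ around the midpoint of $e$. One area count gives $O(k)$ such tiles, each carrying $O(K^2)=O(1)$ edges. You instead push the disk forward, where it becomes an $O(\ell)\times O(k\ell)$ region, cover it by $O(k)$ squares of side $\Theta(\ell)$, and use a local complexity bound in each square. In both versions the factor $k$ comes from the anisotropy of $T_j$. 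Your decomposition isolates a scale-free lemma about the refined BBD that does not involve $T_j$; the paper's needs no covering step and is shorter.

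The justification of that local lemma in Step 3 needs to be replaced. The claim that parallel long edges are separated by distances comparable to $s/K$ is false. Take a tile with outer box $[0,1]^2$ and sticky hole $[0,\eta]^2$. The piece $[0,\eta]\times[\eta,1]$ is cut by $K$ vertical lines at spacing about $\eta/K$, giving vertical edges of length about $1/K$ that are arbitrarily close together. The lemma still holds, and your charging plan proves it, so you should drop the $O(\log)$ fallback, which would not give the stated bound.

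Here is how to close it.
\begin{itemize}
\item Any edge of length at least $cs$ lies in a tile whose outer box has a side of length at least $cs$.
\item The refinement of one tile lies on $O(K)$ lines, and each line carries $O(1)$ edges of length at least $cs$ meeting a square $S$ of side $s$. So it suffices that only $O(1)$ tiles with outer box of longest side at least $cs$ meet $S$.
\item Nesting is harmless. Let $p\in S\cap\tau$. If the hole of $\tau$ is much smaller than $s$, the outer box (all sides $\Omega(s)$ by aspect ratio at most $3$) minus the hole has area $\Omega(s^2)$ within distance $O(s)$ of $p$.
\item Otherwise $p$ lies in a gap between hole and outer box. By stickiness this gap is at least as wide as the hole in that coordinate, and it spans the outer box. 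So $\tau$ itself again has area $\Omega(s^2)$ near $S$.
\item Interior-disjointness of tiles then gives $O(1)$ tiles.
\end{itemize}
The paper's assertion that the parallelograms $T_j^{-1}(\tau)$ cover disjoint regions tacitly needs the same sticky observation, because outer parallelograms of nested tiles overlap.
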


\begin{proof}
    Let $\tau$ be a tile in the Balanced Box Decomposition of $T_j(P)$. Let the outer rectangle of~$\tau$ have height~$h$ and width~$w$. Recall that $T_j$ sends the unit vectors in the directions $i\cdot \frac{\pi}{k}$ and $(i+3)\cdot \frac{\pi}{k}$ to the unit $x$- and $y$-vectors. Therefore, the inverse transformation $T_j^{-1}(\tau)$ sends the axis-parallel rectangle~$\tau$ to a parallelogram. The side lengths are preserved, but the angle between the sides is not. Specifically, $T_j^{-1}(\tau)$ is a parallelogram with side lengths $h$ and $w$, and two of the four angles of the parallelogram are $\frac {3 \pi} k$. The area of the parallelogram is $hw \cdot \sin \frac {3 \pi} k$.

    Next, we will bound the number of tiles $\tau$ where $T_j^{-1}(\tau)$ intersects~$e$, and where one of the sides of $\tau$ is at least as long as~$e$. Let this set of tiles be~$J$. Each tile $\tau \in J$ has aspect ratio at most three, so both sides of~$\tau$ must be at least a third of the length of $e$. Consider the disk centered at the midpoint of~$e$, with radius twice the length of~$e$. The area of this disk is $4 \pi |e|^2$. Each parallelogram $T_j^{-1}(\tau)$, where $\tau \in J$ intersects $e$, has side lengths $\geq \frac {|e|} 3$, and has a smallest angle equal to $\sin \frac {3 \pi} k$. So each parallelogram $T_j^{-1}(\tau)$ covers a region of area $\Omega(|e|^2/k)$ in the interior of the ball, since $\sin\frac {3\pi} k = \Omega (\frac 1 k)$ for $k\geq 1$. Moreover, the parallelograms $T_j^{-1}(\tau)$ cover disjoint regions since the tiles $\tau\in J$ are disjoint. Therefore, there are $O(k)$ tiles in~$J$.

    Finally, each edge of $G_j$ that intersects $e$ and are at least as long as $e$ must lie inside some tile in~$J$. Moreover, every tile in~$J$ contains only $O(K^2) = O(1)$ edges. Therefore, since there are $O(k)$ tiles satisfying the desired property, there are also $O(k) = O(\sqrt{1/\eps})$ edges satisfying the desired property.
\end{proof}

For every edge $e$ in~$G_i$, there are $O(\sqrt{1/\eps})$ edges in $G_j$ that are longer than $e$ and cross $e$. So there are $O(1/\eps)$ edges in $\bigcup_j G_j$ longer than $e$ and crossing $e$. Next, we count the total number of edge crossings between $G_i$ and $G_j$ for $1 \leq i,j \leq k$. We charge each edge crossing to the shorter edge. There are $O(n /\sqrt{\eps})$ possible choices for the shorter edge, and $O(1/\eps)$ possible choices for the longer edge. Therefore, $\bigcup_i G_i$ has $O(n/\eps^{3/2})$ edge crossings and our noncrossing Steiner $(1+\varepsilon)$-spanner has the same number of Steiner points.

\paragraph{Running time analysis.} Computing the Balanced Box Decomposition for $n$ points in the plane takes $O(n \log n)$ time~\cite{AryaMNSW98}. Each tile can be subdivided into $O(1)$ rectangles in $O(1)$ time. So each $G_i$ can be computed in $O(n \log n)$ time. All $G_i$'s can be computed in $O((n \log n)/\sqrt{\eps})$ time. Finally, we compute the edge crossings and thus the Steiner points. All crossings among $m$ segments can be computed in $O(m\log m+s)$ time, where $s$ is the number of crossings~\cite{Balaban95,CE92}; also see~\cite[Chapter~2]{BergCKO08}. With $m=O(n/\sqrt{\eps})$ and $s=O(n/\eps^{3/2})$, the overall running time is 
bounded by $O((n\log n)/\eps^{3/2})$.

Putting this all together, we obtain the following theorem.

\upperbound*

\section{Unconditional lower bounds for noncrossing Steiner spanners}
\label{appendix:unconditional}

We start with an initial unconditional lower bound of $\Omega(n/\eps)$ (\Cref{thm:lower-} in \Cref{ssec:simplebound}) using elementary methods, and then improve it to $\Omega_\mu(n/\eps^{3/2-2\mu})$ for any constant $\mu>0$ using recent results from geometric measure theory that extend the classical Szemer\'edi--Trotter theorem to tube-disk incidences under suitable spacing conditions (\Cref{thm:lower+} in \Cref{ssec:improved}). 

\label{sec:unconditional}
\newcommand{\cE}{\mathcal{E}}
\newcommand{\cI}{\mathcal{I}}
\newcommand{\cM}{\mathcal{M}}
\newcommand{\cR}{\mathcal{R}}
\newcommand{\cW}{\mathcal{W}}
\newcommand{\cZ}{\mathcal{Z}}
\newcommand{\bd}{\partial}

\subsection{An initial bound of $\Omega(n/\eps)$}
\label{ssec:simplebound}

In \Cref{sec:cone} we proved a tight lower bound $\Omega(n/\eps^{3/2})$ in the cone-restricted setting, where all edges in a spanner path $\gamma_{ab}$ make an angle at most $\sqrt{\eps}$ with the line segment $ab$, for all  $a,b\in P$. In the unconditional setting, Bhore and T\'oth~\cite{BhoreT22} proved that at least a constant fraction of the length of $\gamma_{ab}$ consists of edges that make angle $O(\sqrt{\eps})$ with the line segment $ab$. For a set $E$ of edges in a geometric graph, let $\|E\|=\sum_{uv\in E} |uv|$ denote the total length of the edges in $E$.

\begin{lemma}[Bhore and T\'oth~\cite{BhoreT22}]\label{lem:BT22}
Let $\gamma_{ab}=(a=v_0,v_1,\ldots, v_m=b$ be a polygonal $ab$-path of length at most $(1+\eps)\,|ab|$ between points $a, b\in \mathbb{R}^d$. For an angle $\alpha\in [0,\pi/2)$, denote by $E_{ab}(\alpha)$ the set of edges $e$ in $\gamma_{ab}$ with $\angle (\overrightarrow{ab},\overrightarrow{v_{i-1}v_i}) <\alpha$. 
Then for every $i\in \{1,\ldots ,\lfloor\frac{\pi}{2}/\sqrt{\eps}\rfloor\}$, 
we have $\|E_{ab}(i\cdot \sqrt{\eps})\|\geq (1-2/i^2)\, |ab|$.
\end{lemma}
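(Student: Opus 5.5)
We prove the bound through an averaging argument on the ``excess length'' of the path. Place $a$ at the origin with $b$ on the positive $x$-axis, so $|ab|=x(b)$. For each edge $e=v_{j-1}v_j$ of $\gamma_{ab}$, let $\theta_e=\angle(\overrightarrow{ab},\overrightarrow{v_{j-1}v_j})\in[0,\pi]$. Its signed projection onto $\overrightarrow{ab}$ is $|e|\cos\theta_e$. These projections telescope, which gives the identity
\[
\sum_{e\in\gamma_{ab}} |e|\cos\theta_e = |ab|.
\]
Combining this with $\sum_e|e|\le(1+\eps)|ab|$ yields
\[
\sum_{e} |e|\,(1-\cos\theta_e)\le \eps\,|ab|.
\]
This single inequality is the whole engine of the proof. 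Every edge pays a ``penalty'' $|e|(1-\cos\theta_e)\ge 0$, and the penalties together are at most $\eps|ab|$.

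Next we bound the total length of the edges that are not in $E_{ab}(\alpha)$ for $\alpha=i\sqrt{\eps}$. Each such edge has $\theta_e\ge\alpha$. Since $\alpha\le\pi/2$ and $\theta_e\le\pi$, we get $1-\cos\theta_e\ge 1-\cos\alpha$. On $[0,\pi/2]$ we have $1-\cos x\ge 2x^2/\pi^2$, or equivalently $\sin(x/2)\ge x/\pi$ by concavity of $\sin$. Markov's inequality applied to the penalty sum then gives
\[
L:=\sum_{e\notin E_{ab}(\alpha)}|e|\le \frac{\eps|ab|}{1-\cos\alpha}.
\]
The remaining task is to bound the right-hand side by $\frac{2}{i^2}|ab|$, up to constants.

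We can do better than subtracting $L$ from the total length, by using the projection identity once more. Write $S=\|E_{ab}(\alpha)\|$. Every edge satisfies $\cos\theta_e\le 1$, and every edge outside $E_{ab}(\alpha)$ satisfies $\cos\theta_e\le\cos\alpha$. Therefore
\[
|ab| = \sum_e |e|\cos\theta_e \le S + L\cos\alpha.
\]
Using $L\le\eps|ab|/(1-\cos\alpha)$ and $\cos\alpha<1$, this gives
\[
S\ge |ab|\left(1-\frac{\eps\cos\alpha}{1-\cos\alpha}\right).
\]
It suffices to show $\frac{\eps\cos\alpha}{1-\cos\alpha}\le \frac{2}{i^2}$ when $\alpha=i\sqrt{\eps}$. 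We have $1-\cos\alpha=2\sin^2(\alpha/2)$. If we used $\sin(\alpha/2)\approx\alpha/2$, we would get $1-\cos\alpha\approx\alpha^2/2=i^2\eps/2$, and the ratio would be about $\frac{2\cos\alpha}{i^2}\le\frac{2}{i^2}$.

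The main obstacle is making this exact for every $\alpha\le\pi/2$, because the small-angle approximation $\sin(\alpha/2)\approx\alpha/2$ is not an inequality in the right direction. We need
\[
\eps\cos\alpha\le \frac{2}{i^2}\cdot 2\sin^2(\alpha/2),
\]
which, after substituting $\eps=\alpha^2/i^2$, becomes $\alpha^2\cos\alpha\le 4\sin^2(\alpha/2)=2(1-\cos\alpha)$. Equivalently we need $f(\alpha)=2-2\cos\alpha-\alpha^2\cos\alpha\ge 0$ on $[0,\pi/2]$. We verify this by checking $f(0)=0$ and
\[
f'(\alpha)=\sin\alpha\,(2+\alpha^2)-2\alpha\cos\alpha\ge 0.
\]
The derivative bound holds because $\tan\alpha\ge\alpha$ gives $\sin\alpha\ge\alpha\cos\alpha$, and hence $(2+\alpha^2)\sin\alpha\ge 2\alpha\cos\alpha$. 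For $i=1$ the claimed bound $1-2/i^2$ is negative, so the statement is trivial in that case. The argument works in $\mathbb{R}^d$ without change, since only the projection onto $\overrightarrow{ab}$ is used. This completes the plan.
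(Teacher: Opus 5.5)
Your argument is correct and complete. Note that the paper contains no proof of this lemma: it quotes it from Bhore and T\'oth~\cite{BhoreT22} and uses it only as a black box (with $i=3$) in the proof of \Cref{thm:lower-}. So there is no in-paper route to compare against, and your projection argument serves as a self-contained proof.

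The steps check out:
\begin{enumerate}
\item The telescoping identity $\sum_e |e|\cos\theta_e=|ab|$ holds verbatim in $\mathbb{R}^d$.
\item The penalty bound $\sum_e |e|(1-\cos\theta_e)\le \eps|ab|$ gives $L(1-\cos\alpha)\le\eps|ab|$.
\item The key point is the refinement $|ab|\le S+L\cos\alpha$, used in place of the naive $S\ge |ab|-L$.
\end{enumerate}
You are right that the naive route falls just short of the stated constant. It would need $1-\cos\alpha\ge\alpha^2/2$, which is false for $\alpha>0$. The extra factor $\cos\alpha$ is exactly what lets the constant $2/i^2$ come out with no Taylor-remainder loss, via $2-2\cos\alpha-\alpha^2\cos\alpha\ge 0$ on $[0,\pi/2]$. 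Your monotonicity check of that inequality is correct. You in fact prove the slightly stronger bound $\|E_{ab}(\alpha)\|\ge |ab|\bigl(1-\frac{\eps\cos\alpha}{1-\cos\alpha}\bigr)$.

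Two small tidy-ups. First, when you multiply $L\le \eps|ab|/(1-\cos\alpha)$ by $\cos\alpha$ you need $\cos\alpha\ge 0$, not merely $\cos\alpha<1$. This holds because $i\le\lfloor\frac{\pi}{2}/\sqrt{\eps}\rfloor$ forces $\alpha=i\sqrt{\eps}\le\pi/2$, so say so explicitly at that step. Second, the inequality $1-\cos x\ge 2x^2/\pi^2$ is never used and can be dropped.
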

In particular, for $i=3$, we obtain $\|E(3\cdot \sqrt{\eps})\|\geq \frac79\, |ab|$.
We can now prove the main result of this section.
\lowerboundsimple*
\begin{proof}
First, we prove the lower bound for the basic construction of $n=\Theta(\sqrt{1/\eps})$ points $P=A\cup B$ in \Cref{sec:pre}. Recall that it consists of equally spaced points $A$ and $B$ on the left and right sides of a unit square $[0,1]^2$. Specifically, the distance between consecutive points in $A$ (resp., $B$) is $4\sqrt{\eps}$. 

Let $G$ be a noncrossing Steiner $(1+\eps)$-spanner for $P$.  For every point pair $(a,b)\in A\times B$, let $\gamma_{ab}$ be a shortest $ab$-path in $G$, which has length at most $(1+\eps)\, |ab|$. Since we use \Cref{lem:BT22} with $i=2$, we use the shorthand notation $E_{ab}=E_{ab}(3\cdot \sqrt{\eps})$.

We choose a subset $A_0\times B_0\subset A\times B$ such that for all point pairs $(a,b)\in A_0\times B_0$, the line segment $ab$ has nonnegative slope, and the sets $E_{ab}$ are pairwise disjoint.
Let $A_0=\{ (0,12i\cdot \sqrt{\eps}): 0\leq 12i\cdot \sqrt{\eps}\leq \frac13\}$, that is, we choose every third point in $A$, from the bottom one-third of the left side of $[0,1]^2$. Similarly, let $B_0=\{ (1,12i\cdot \sqrt{\eps}): \frac23\leq 12i\cdot \sqrt{\eps}\leq 1\}$, that is, every third point in $B$, from the top one-third of the right side of $[0,1]^2$ to the upper-right corner of $[0,1]^2$. It is clear from the construction that $|A_0|\geq \frac13\, |A|\geq \Omega(\sqrt{1/\eps})$, $|B_0|\geq \frac13\, |B| = \Omega(\sqrt{1/\eps})$, and for all $(a,b)\in A_0\times B_0$, we have $\mathrm{slope}(ab)\geq \frac13$.

We claim that for any two distinct point pairs $(a,b),(a',b')\in A_0\times B_0$, we have $E_{ab}\cap E_{a'b'}=\emptyset$. Indeed, we can argue as in the proof of \Cref{lem:angles}:
Translate $a'b'$ to a segment $ab''$, where $b''$ is on the line $x=1$, and $|bb''|\geq 8\,\sqrt{\eps}$. The law of sines for the triangle $\Delta(abb'')$ yields 
\begin{align*}
\sin \angle(ab,a'b') 
&=\sin\angle (ab, ab'') 
=\sin \angle (ab, bb'') \frac{|bb''|}{|ab''|}\\
&=\sin \angle (ab, bb'') \frac{|bb''|}{|a'b'|}
\geq \frac{\pi}{4} \cdot\frac{12\,\sqrt{\eps}}{\sqrt{2}}
\geq \frac{3\pi}{\sqrt{2}}\cdot \sqrt{\eps} >6\cdot\sqrt{\eps}.
\end{align*}
The inequality $\sin(x)\leq x$ readily gives $\angle(ab,a'b')> 6\,\sqrt{\eps}$. For any two edges, $u_{i-1}u_i\in E_{ab}$ and $v_{j-1}v_j\in E_{a'b'}$, the triangle inequality yields 
$\angle (\overrightarrow{u_{i-1}u_i}, \overrightarrow{v_{j-1}v_j}) 
\geq \angle(\overrightarrow{ab},\overrightarrow{a'b'})-\angle(\overrightarrow{ab},\overrightarrow{u_{i-1}u_i})-\angle(\overrightarrow{ab},\overrightarrow{v_{j-1}v_j})
>(6-3-3)\cdot \sqrt{\eps}>0$. This completes the proof of the claim.

\begin{figure}[!ht]
    \centering
    \includegraphics[width=0.8\textwidth]{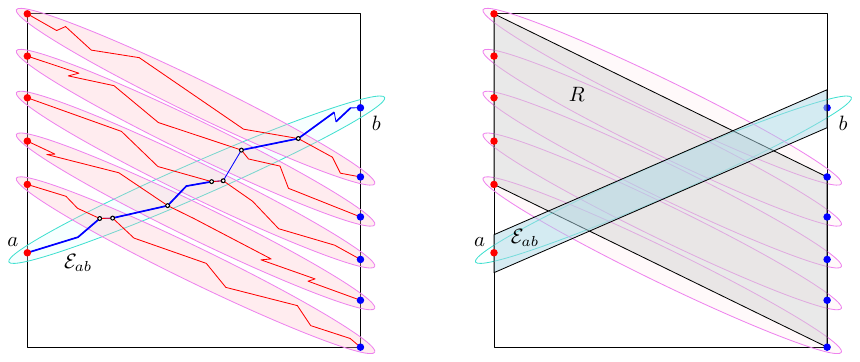}
    \caption{Left: An $ab$-path and its intersections with other paths between point pairs of slope $-1/2$. Right: Parallelogram $R$ and a strip containing the ellipse $\mathcal{E}_{ab}$.}
    \label{fig:slices}
\end{figure}

Consider the parallelogram $R$ spanned by the top half of the left edge of $[0,1]$ and the bottom half of the right edge of $[0,1]$; see \Cref{fig:slices}(right). For all $(a,b)\in A_0\times B_0$, the path $\gamma_{ab}$ crosses $R$ between the two sides of of slope $-\frac12$, and the distance between these sides is $\frac{\sqrt{3}}{4}$. 
The length of $\gamma_{ab}\cap R$ is at least $\sqrt{3}/4$. 
On the one hand, $|\gamma_{ab}\setminus R|\leq (1+\eps)\, |ab|-\frac{\sqrt{3}}{4}$.
On the other hand, $\|E_{ab}\|\geq \frac79\, |ab|$. For the set $E_{ab}\cap R=\{e\cap R: e\in E_{ab}\}$, we have 
\begin{align*}
\|E_{ab}\cap R\| 
&\geq \|E_{ab}\| - |\gamma_{ab}\setminus R|
\geq \frac79\, |ab| -\left((1+\eps)\, |ab|-\frac{\sqrt{3}}{4}\right)
\geq \frac{\sqrt{3}}{4} - \left(\frac29+\eps\right)\, |ab|\\
&\geq \frac{\sqrt{3}}{4} - \left(\frac29+\eps\right)\, \sqrt{2}
\geq \frac{\sqrt{3}}{4} -\frac13> \frac{1}{11} =\Omega(1)
\end{align*}
if $\eps>0$ is sufficiently small.

Recall that $|A_0|=|B_0|=\Omega(\sqrt{1/\eps})$, and so there are $\Omega(1/\eps)$ pairs $(a,b)\in A_0\times B_0$. 
For each $(a,b)\in A_0\times B_0$, the path $\gamma_{ab}$ crosses $\Theta(\sqrt{1/\eps})$ disjoint/parallel ellipses $\mathcal{E}_{cd}$ of slope $-\frac12$; see \Cref{fig:slices}(left). The path $\gamma_{ab}$ contains a Steiner point in the intersection with each such ellipse. 
Consequently, the length of each edge in $E_{ab}\cap R$ is at most $O(\sqrt{\eps})$. By the pigeonhole principle, $E_{ab}\cap R$ contains $\Omega(\sqrt{1/\eps})$ edges for every pair $(a,b)\in A_0\times B_0$. Overall, $\bigcup_{(a,b)\in A_0\times B_0} E_{ab}$ contains 
at least $\Omega(1/\eps^{3/2})=\Omega(n/\eps)$ edges.  

In general, for given $\eps>0$ and $n$, we use $\Theta(n/\sqrt{\eps})$ disjoint copies of the basic construction above, and obtain a lower bound $\Theta(n/\sqrt{\eps})\cdot \Omega(1/\eps^{3/2})= \Omega(n/\eps)$, as required.
\end{proof}

\subsection{An improved lower bound}
\label{ssec:improved}

For a given $\eps>0$, we use the basic construction $P=P(\eps)$ with the point sets $A$ and $B$ on the left and right of the unit square $[0,1]^2$ defined in \Cref{sec:pre}.
For a pair $(a,b)\in A\times B$, we the slope of the line segment $ab$ is $\slope(ab)=y(b)-y(a)$. Notice that $-1\leq \slope(ab)\leq 1$ for every $(a,b)\in A\times B$. Let $\mathrm{Slo}$ denote the set of possible slopes. Let $M:=[\frac{7}{16},\frac{9}{16}]^2$ be a small square in the interior of $[0,1]^2$; in this proof, we will be concerned with the intersections between spanner paths in $M$.

We say that an ellipse $\cE_{ab}$ \EMPH{crosses} an axis-aligned rectangle (or segment) $S$ if and only if both $\cE_{ab}\setminus S$ and $S\setminus \cE_{ab}$ are disconnected. The \EMPH{positive bundle} consists of the pairs $(a,b)\in A\times B$ where
$\slope(ab)\in [\frac14,\frac34]$ and
$\cE_{ab}$ crosses the middle square $M=[\frac{7}{16},\frac{9}{16}]^2$.
The negative bundle is defined analogously with target slope in $[-\frac34,-\frac14]$. Recall that we have $|A|=|B|=1/(4\sqrt{\eps})$, and they are equidistant on the left and right of $[0,1]^2$, thus there are $|\mathrm{Slo}|=1/(2\sqrt{\eps})$ slopes in total. In particular, both the positive and negative bundle has $1/(8\sqrt{\eps})$ slopes.

Let us now decompose $M$ into a $2^{-8}/\sqrt{\eps}\times 2^{-8}/\sqrt{\eps}$ grid consisting of square \EMPH{windows} of side length $32\sqrt{\eps}$; let $\cW$ be the set of windows in $M$. The \EMPH{slit} of a window $W$ is the horizontal segment of length $32\sqrt{\eps}$ connecting the midpoints of the left and right sides of~$W$, see \Cref{fig:windowslit}. Observe that there are $|\cW|=2^{-16}/\eps$ windows in $M$.

Let $G$ be a Euclidean noncrossing Steiner $(1+\eps)$-spanner for $A\cup B$, and let $\gamma_{ab}$ be a shortest path connecting $a\in A$ to $b\in B$ in $G$. Let $W$ be some window in $\cW$. Let $a',b'\in \bd Z\cap \gamma_{ab}$ be the first entry and last exit of $\gamma_{ab}$ to/from $W$, and suppose that $|a'b'|\geq 16\sqrt{\eps}$, i.e., at least half the side length of $W$. We call such a spanner path a \EMPH{core path} of $W$, see \Cref{fig:windowslit}(ii). Moreover, let $R_W(\gamma_{ab})$ denote the intersection of $W$ and the strip of width $2^{k}\eps$ centered on the line $a'b'$, where we will set $k=15$. We call $R_W(\gamma_{ab})$ the \EMPH{strip} of $\gamma_{ab}$ with respect to the window $W$. We say that $\gamma_{ab}$ is \EMPH{adventurous} in $W$ if $\gamma_{ab}\cap Z$ is not contained in $R_W(\gamma_{ab})$. We say that $\gamma_{ab}$ is \EMPH{skewed} in $W$ if the slope of the line $a'b'$ and $\slope(ab)$ differs by at least $2^{k}\sqrt{\eps}$.

\begin{figure}[t]
    \centering
    \includegraphics[scale=0.85]{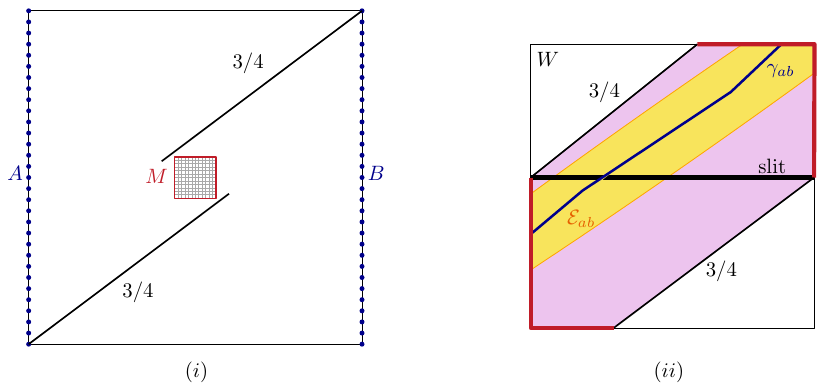}
    \caption{\textit{(i)} The base square $[0,1]^2$ with the middle square $M$, which is decomposed into windows. Any line intersecting $M$ of slope $\lambda$ where $|\lambda|<3/4$ also intersects the left and right side of $[0,1]^2$.  \textit{(ii)} A window $W$ with its slit. The path $\gamma_{ab}$ whose target slope is from the positive bundle stays inside its ellipse $\cE_{ab}$. The ellipse crosses the window slit. }
    \label{fig:windowslit}
\end{figure}

We will now make some geometric observations about the interaction of the windows, the ellipses $\cE_{ab}$, the spanner paths $\gamma_{ab}$, and the regions $R_W(\gamma_{ab})$.

\begin{lemma}\label{lem:lowerbasics}
    For every $1\leq t\leq 2k-11$, there exists a threshold $\eps_0>0$ such that for all $\eps\in (0,\eps_0)$ and set $P\subset A\cup B$, every Euclidean noncrossing Steiner $(1+\eps)$-spanner for $P$ has the following properties.
    \begin{description}
        \item[(i)] For each window $W$ of $\cW$ and each slope $\lambda \in \mathrm{Slo}$ with $|\lambda|\in [\frac14,\frac34]$, there exists a pair $(a,b)\in A\times B$ such that $\slope(ab)=\lambda$ and $\cE_{ab}$ crosses the slit of~$W$.
        \item[(ii)] Each spanner path $\gamma_{ab}$ is adventurous in at most $2^{-t}/\sqrt{\eps}$ windows.
        \item[(iii)] Each spanner path $\gamma_{ab}$ is skewed in at most $2^{-t}/\sqrt{\eps}$ windows.
    \end{description}
\end{lemma}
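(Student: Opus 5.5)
The plan is to handle (i) by a direct lattice-of-lines argument, and (ii) and (iii) by one \emph{excess-length} budget. Since $|\gamma_{ab}|\le(1+\eps)|ab|\le \sqrt2(1+\eps)$, the path has total excess at most $\eps|ab|\le 2\eps$ over the segment $ab$. We will show that every adventurous or skewed window forces a local excess of order $2^{2k}\eps^{3/2}$, on pairwise disjoint pieces of the path. Dividing the budget then gives $O(2^{-2k})/\sqrt\eps$ such windows, which is at most $2^{-t}/\sqrt\eps$ whenever $t\le 2k-11$, once the constants are tracked.

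For (i), fix $W$ and a slope $\lambda\in\mathrm{Slo}$ with $|\lambda|\in[\frac14,\frac34]$. The segments $ab$ of slope $\lambda$ lie on parallel lines whose $y$-intercepts (at $x=0$) are spaced exactly $4\sqrt\eps$ apart, since consecutive points of $A$ are $4\sqrt\eps$ apart.

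Along the slit, the quantity $y-\lambda x$ ranges over an interval of length $32|\lambda|\sqrt\eps\ge 8\sqrt\eps$. Hence some intercept lies in the middle half of this interval, at intercept-distance at least $2\sqrt\eps$ from both ends. The corresponding line therefore meets the slit at horizontal distance at least $2\sqrt\eps/|\lambda|\ge \frac83\sqrt\eps$ from its endpoints.

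Because $|\lambda|\le 3/4$ and the line passes through $M$, it meets the left and right sides of $[0,1]^2$ in actual points $a\in A$, $b\in B$, as observed in \Cref{fig:windowslit}. By the computation in \Cref{lem:ellipses}, the ellipse $\cE_{ab}$ lies within a strip of half-width $<\sqrt{3/2}\,\sqrt\eps$ around this line. Its intersection with the slit is therefore a single sub-interval strictly inside the slit, so both complements are disconnected and $\cE_{ab}$ crosses the slit.

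For (ii) and (iii), fix $\gamma_{ab}$ and a window $W$ in which it is a core path, with first entry $a'$ and last exit $b'$. Let $\gamma'$ be the subpath from $a'$ to $b'$ and $\pi$ the projection onto the direction of $ab$. I will show
\[
|\gamma'|-|\pi(a')\pi(b')|\ \ge\ c\,2^{2k}\eps^{3/2}
\]
whenever $W$ is adventurous or skewed.

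\emph{Skewed case.} We have $|\gamma'|\ge|a'b'|$, and $|a'b'|(1-\cos\theta)\ge|a'b'|\theta^2/4$, where $\theta\gtrsim 2^k\sqrt\eps$ is the angle between $a'b'$ and $ab$. With $|a'b'|\ge16\sqrt\eps$, this gives $\Omega(2^{2k}\eps^{3/2})$.

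\emph{Adventurous case.} Here $\gamma'$ visits a point at distance at least $2^{k-1}\eps$ from the line $a'b'$. Then $|\gamma'|-|a'b'|\ge 2\sqrt{(|a'b'|/2)^2+h^2}-|a'b'|\ge h^2/|a'b'|$ with $h=2^{k-1}\eps$ and $|a'b'|\le 32\sqrt{2}\sqrt\eps$, so the excess is $\Omega(2^{2k}\eps^{3/2})$. Moreover $|a'b'|\ge|\pi(a')\pi(b')|$ always holds.

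Summing over a family of windows whose subpaths $\gamma'$ are pairwise disjoint, we use that $|\gamma_{ab}|$ is at least the total projected length plus the sum of these local deficits. The remaining projected pieces have total length at least $|ab|$, because the projection of the whole path covers $\pi(a)\pi(b)$. Hence the number of bad windows in the family is at most $\eps|ab|/(c2^{2k}\eps^{3/2})$.

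The step I expect to be the main obstacle is making the subpaths disjoint: $\gamma_{ab}$ may leave and re-enter windows, so the intervals between first entry and last exit can interleave. I would first use that $\gamma_{ab}\subset\cE_{ab}$, a strip of width $O(\sqrt\eps)$, much thinner than the windows, so the windows it meets form essentially a monotone chain along the direction of $ab$. Next I would check, via the projection $\pi$, that the projected intervals $[\pi(a'),\pi(b')]$ of windows far apart in this chain are disjoint. If they overlap, the path backtracks by at least a window width, which already costs $\Omega(\sqrt\eps)\gg\eps$ excess and is impossible. Splitting the bad windows into $O(1)$ classes by position modulo a constant then gives disjoint families, and only the constant degrades, which is absorbed by the slack $t\le 2k-11$ and choosing $\eps_0$ small.
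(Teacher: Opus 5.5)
Your proposal is correct and rests on the same two ingredients as the paper. For (i), it uses the lattice of parallel lines of slope $\lambda$ together with the thin strip containing each ellipse (\Cref{lem:ellipses}). For (ii) and (iii), it uses a local excess of order $2^{2k}\eps^{3/2}$ in every adventurous or skewed core window, charged against the global budget $\eps|ab|$.

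Where you go beyond the paper is in making the first-entry-to-last-exit subpaths of the charged windows pairwise disjoint before summing. The paper simply adds the per-window excesses, which tacitly assumes this disjointness. That disjointness need not hold, e.g.\ for a path that runs close to a common edge of two windows and crosses it back and forth. So your route closes a step the paper skips.

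The price is a constant factor, and choosing $\eps_0$ small does not absorb it. It must instead be covered by your sharper local bound $h^2/|a'b'|\ge 2^{2k-2}\eps^2/(32\sqrt{2\eps})=2^{2k-7.5}\eps^{3/2}$ (the paper uses $2^{2k-10}\eps^{3/2}$), and this works only if the number of classes is small. Four classes, by the parities of the column and row indices of the windows, suffice. For two distinct windows $W_1,W_2$ of the same class that both meet $\mathcal{E}_{ab}$, the projections of $W_1\cap\mathcal{E}_{ab}$ and $W_2\cap\mathcal{E}_{ab}$ onto the line $ab$ are at distance at least $(32-\sqrt6)\sqrt\eps$. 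Meanwhile $\gamma_{ab}$ backtracks along $ab$ by at most $\eps|ab|/2$, so the corresponding subpaths are disjoint. With $\eps|ab|\le\sqrt2\,\eps$, this gives at most $4\cdot 2^{8-2k}/\sqrt\eps=2^{10-2k}/\sqrt\eps<2^{-t}/\sqrt\eps$ adventurous windows. The skewed case has ample room.

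Finally, in (i), make the last comparison in perpendicular distance: $2\sqrt\eps/\sqrt{1+\lambda^2}\ge 1.6\sqrt\eps>\sqrt{3/2}\,\sqrt\eps$. The uniform horizontal bound $\tfrac83\sqrt\eps$ is too weak for that step when $|\lambda|$ is near $\tfrac14$.
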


\begin{proof}
    (i) Consider the parallel segments $ab$ with $a\in A$ and $b\in B$ for a fixed slope $|\lambda|\in [\frac14,\frac34]$.  Recall that the distance between consecutive points of $A$ is $4\sqrt{\eps}$, so $|\lambda|\geq \frac14$ implies that any horizontal segment of length $4\cdot 4\sqrt{\eps}=16\sqrt{\eps}$ inside $M$ intersects at least one segment of slope~$\lambda$. Consider now a horizontal segment $s$ of length $32\sqrt{\eps}$, The center point $c_s$ of $s$ is sandwiched between consecutive segments $a_1b_1$ and $a_2b_2$ of slope~$\lambda$. Assume without loss of generality that $p_1:=s\cap a_1b_1$ is closer to $c_s$ than $p_2:=s\cap a_2b_2$. Consequently, the distance of $p_1$ to the center of $s$ is at most $8\sqrt{\eps}$.
    
    We claim that $\cE_{a_1b_1}$ crosses $s$. Let $a_0b_0$ be the segment of slope $\lambda$ such that $a_1$ and $b_1$ are the midpoints of $a_0a_2$ and $b_0b_2$, respectively. Let $p_0$ denote the intersection of the line of $s$ and $a_0b_0$. Then $|p_0p_1|\leq 16\sqrt{\eps}$ by our earlier argument, and the midpoint $p_{0.5}$ of $p_0p_1$ must lie on $s$. Similarly, the midpoint $p_{1.5}$ of $p_1p_2$ also lies on $s$.
    
    Since the ellipses $\cE_{ab}$ corresponding to parallel segments are pairwise disjoint by \Cref{lem:ellipses} and contain the corresponding segment $ab$, then $s\cap \cE_{a_1b_1}$ is sandwiched between $p_{0.5}$ and $p_{1.5}$, thus $\cE_{a_1b_1}$ crosses $s$. We can apply this observation on the slit of~$W$ to conclude the proof of~(i).
    \medskip

    (ii) Denote the first entry and last exit of $\gamma_{ab}$ to/from the window $W$ by $a',b'\in \bd W$, respectively. We claim that if $\gamma_{ab}$ is adventurous in $W$, then the part of $\gamma_{ab}$ between $a'$ and $b'$ (denoted by $\gamma_{ab}[a',b']$) has length at least $|a'b'|+2^{2k-10}\eps^{3/2}$. Indeed, there must be some point $p\in \gamma_{ab}$ between $a'$ and $b'$ that is outside $R_W(\gamma_{ab})$. Let $q$ denote the perpendicular projection of $p$ to the line $a'b'$, and consider the right triangles $a'qp$ and $pqb'$. Observe that $|pq|\geq 2^{k-1}\eps$ and $|a'q|+|qb'|\geq |a'b'|$. Assume without loss of generality that $|a'q|\leq |qb'|$.
    
    Suppose, for the sake of contradiction, that $|qb'|<|a'b'|+2^{2k-10}\eps^{3/2}$. This implies
    \begin{equation}\label{qb'}
        |qb'| < \mathrm{diam}(W)+2^{2k-10}\eps^{3/2} = 32\sqrt{2\eps}+2^{2k-10}\eps^{3/2} < 64\sqrt{\eps},
    \end{equation}
    where the last step uses $\eps < 2^{10-2k}$. By the triangle inequality and the Pythagorean theorem, we can bound the length of $\gamma_{ab}[a',b']$ as follows:
    \begin{align*}
        |\gamma_{ab}[a',b']| &\geq |a'p|+|pb'| = \sqrt{|a'q|^2+|pq|^2} + \sqrt{|qb'|^2+|pq|^2}\\
        &>  |a'q| + \sqrt{|qb'|^2+2^{2k-2}\eps^2}\\
        &> |a'q|+|qb'|+\frac{2^{2k-2}\eps^2}{4|qb'|}\\
        &> |a'b'|+2^{2k-10}\eps^{3/2}.
    \end{align*}
    Here the third inequality uses that $\sqrt{x^2+y^2}\geq x +\frac{y^2}{4x}$ when $y<x$, with substitution $y=2^{k-1}\eps$ and $x=|qb'|$, and the last inequality uses $|qb'|<64\sqrt{\eps}$ from \eqref{qb'}.

    If $\gamma_{ab}$ is adventurous in some set $\cW'\subset \cW$ of at least $2^{-t}/\sqrt{\eps}$ windows, then the length of $\gamma_{ab}$ is at least
    \[|ab|+\sum_{W\in \cW'} 2^{2k-10}\eps^{3/2}\geq |ab|+2^{2k-10-t}\eps\geq|ab|+2\eps > (1+\eps)|ab|\]
    as $|ab|<2$, which contradicts the fact that $\gamma_{ab}$ is a shortest $ab$-path in a $(1+\eps)$-spanner. 
    \medskip

    (iii) We will show that if $\gamma_{ab}$ is skewed in $W$, then $\gamma_{ab}[a'b']$ is significantly longer than its projection onto the segment $ab$, and we will use this to lower bound the length of any path $\gamma_{ab}$ that is skewed in many boxes. Let $\psi(\cdot)$ be the orthogonal projection to the line $ab$. We will show that $|a'b'|\geq |\psi(a')\psi(b')|+ 2^{2k+2}\eps^{3/2}$.
    First, observe that the length of $\gamma_{ab}$ between the points $a'$ and $b'$ is at least $|a'b'|$. Recall that by the core path property, $|a'b'|\geq 16\sqrt{\eps}$. We distinguish two cases.
    \begin{description}
        \item[Case 1.]  $|\psi(a')\psi(b')|< |a'b'|/2$\\
        Then we get:
        \[|a'b'|> |\psi(a')\psi(b')|+|a'b'|/2 \geq |\psi(a')\psi(b')| + 8\sqrt{\eps} \geq |\psi(a')\psi(b')| + 2^{2k+2}\eps^{3/2},\]
        where the last step uses $\eps<2^{1-2k}$.
        \item[Case 2.] $|\psi(a')\psi(b')|\geq |a'b'|/2$\\
    Let $\alpha$ be the angle between $a'b'$ and $ab$. The length of the projection $|\psi(a')\psi(b')|$ satisfies
        \[|a'b'|= \frac{|\psi(a')\psi(b')|}{\cos(\alpha)} \geq (1+\alpha^2/2)|\psi(a')\psi(b')|\]
    by the Taylor estimate of $\sec(x)=1/\cos(x)$ around $0$ as well as $\alpha<\pi/2$. Since $\gamma_{ab}$ is skewed, we have $\alpha \ge  2^{k}\sqrt{\eps}$, we get
    \[|a'b'|\geq (1+2^{2k-1}\eps)|\psi(a')\psi(b')|\geq |\psi(a')\psi(b')|+ 2^{2k+2}\eps^{3/2},\]
    where the last step uses $|\psi(a')\psi(b')|\geq |a'b'|/2\geq 8\sqrt{\eps}$.
    \end{description}

    Suppose now that $\gamma_{ab}$ is skewed in some set $\cW'\subset W$ of at least $2^{-t}/\sqrt{\eps}$ windows.
    Then its length is at least
    \[
    |\gamma_{ab}|
    \geq |ab|+\sum_{W\in \cW'} 2^{2k+2}\eps^{3/2}
    \geq |ab|+2^{2k+2-t}\eps
    >|ab|+2\eps 
    > (1+\eps)\,|ab|,
    \]
    where the last inequality used $|ab|<2$. This contradicts our assumption that $\gamma_{ab}$ is a shortest path in a $(1+\eps)$-spanner.
\end{proof}

We say that $W$ is \EMPH{well-behaved} if in both the positive and negative bundle of $W$  at least $2^{-4}/\sqrt{\eps}$ ellipses $\cE_{ab}$ cross the slit of $W$ such that the corresponding spanner paths $\gamma_{ab}$ are neither adventurous nor skewed.

\begin{lemma}\label{lem:wellbehaved}
    There are at least $|\cW|/2$ well-behaved windows in $\cW$.
\end{lemma}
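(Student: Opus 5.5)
We count the bad (path, window) incidences and show they are concentrated in few windows. Fix a bundle, say the positive one, and treat the negative bundle identically.

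By \Cref{lem:lowerbasics}(i), every window $W\in\cW$ has, for each of the $1/(8\sqrt{\eps})$ slopes $\lambda$ in the positive bundle, a pair $(a,b)$ with $\slope(ab)=\lambda$ whose ellipse $\cE_{ab}$ crosses the slit of $W$. Choose one such pair $(a_\lambda^W,b_\lambda^W)$ for every $\lambda$. Because $\cE_{ab}$ crosses the slit and contains $\gamma_{ab}$, the path $\gamma_{ab}$ also crosses the slit. Since $\cE_{ab}$ is thin (width $O(\sqrt{\eps})$) with slope in $[\frac14,\frac34]$, I would check that the first entry $a'$ and last exit $b'$ of $\gamma_{ab}$ in $W$ satisfy $|a'b'|\ge 16\sqrt{\eps}$. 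So these are core paths of $W$, and the notions of adventurous and skewed apply to them.

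Call a pair $(W,\lambda)$ \emph{bad} if the chosen path $\gamma_{a_\lambda^W b_\lambda^W}$ is adventurous or skewed in $W$. We bound the total number of bad pairs by summing over paths, not windows. Each pair $(a,b)$ in the positive bundle is chosen only for windows whose slit meets $\cE_{ab}$, and there are at most $|A|\cdot|\mathrm{Slo}| = O(1/\eps)$ such pairs in total. By \Cref{lem:lowerbasics}(ii),(iii) with a large $t$ (allowed since $k=15$ gives $t\le 19$), each path is adventurous or skewed in at most $2\cdot 2^{-t}/\sqrt{\eps}$ windows. There are at most $|A|\cdot|B| = \frac{1}{16\eps}$ pairs $(a,b)$, so
\[
\#\text{bad pairs} \le \frac{1}{16\eps}\cdot 2^{1-t}\frac{1}{\sqrt{\eps}} = 2^{-3-t}\,\eps^{-3/2}.
\]

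A window fails in the positive bundle if fewer than $2^{-4}/\sqrt{\eps}$ of its $1/(8\sqrt{\eps})$ chosen slopes are good. In that case more than $(2^{-3}-2^{-4})/\sqrt{\eps}=2^{-4}/\sqrt{\eps}$ of its pairs are bad. Hence the number of failing windows is at most
\[
\frac{2^{-3-t}\eps^{-3/2}}{2^{-4}\eps^{-1/2}} = 2^{1-t}/\eps.
\]
Since $|\cW|=2^{-16}/\eps$, taking $t=19$ makes this at most $2^{-18}/\eps=|\cW|/4$. The same bound holds for the negative bundle. A union bound then leaves at least $|\cW|/2$ windows that are good in both bundles, i.e.\ well-behaved.

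The main obstacle is the admissible range of $t$. If the condition $t\le 2k-11$ is too tight for $t=19$, I would instead tighten the count of pairs $(a,b)$. The refinement is to count only pairs whose ellipse meets $M$ with bundle slope; there are about $|\mathrm{Slo}|\cdot O(|M|/\sqrt{\eps})$ of them, which gains a constant factor. Alternatively, I would exploit that each path passes through only $O(1/\sqrt{\eps})$ windows. The one geometric point to verify is that crossing the slit forces $|a'b'|\ge 16\sqrt{\eps}$, so that the chosen paths are genuinely core paths.
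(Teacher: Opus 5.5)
Your proposal is correct and is essentially the paper's argument: fix one slit-crossing ellipse per slope in each window via Lemma~\ref{lem:lowerbasics}(i), double count the (window, path) pairs in which the chosen path is bad, and charge them to paths using Lemma~\ref{lem:lowerbasics}(ii),(iii) with $k=15$ and $t=19$ (admissible, since $2k-11=19$). The difference is organizational: you argue directly, with a union bound over the two bundles and a joint per-path budget of $2^{1-t}/\sqrt{\eps}$ for ``adventurous or skewed'', whereas the paper argues by contradiction after a WLOG reduction to one bundle and one failure type; your version is, if anything, cleaner, because a window can fail through a mix of adventurous and skewed paths without either type alone leaving fewer than $2^{-4}/\sqrt{\eps}$ good ones, and you also flag the core-path check that the paper only supplies later.
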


\begin{proof}
    Suppose for the sake of contradiction that less than $|\cW|/2$ windows are well-behaved. In each non-well-behaved window $W$ there are less than $2^{-4}/\sqrt{\eps}$ crossing ellipses from the positive or negative bundle that are neither adventurous nor skewed. Without loss of generality, assume that there are at least $|\cW|/4$ windows where the source of error is the positive bundle, and at least $|\cW|/8$ boxes where the source of error is that there are less than $2^{-4}/\sqrt{\eps}$ crossing ellipses that are not adventurous. (The negative/skewed cases can be handled analogously.) 

    Suppose that $W$ is such a bad window, i.e., that $W$ has less than $2^{-4}/\sqrt{\eps}$ non-adventurous crossing ellipses from the positive bundle.  By \Cref{lem:lowerbasics}(i) the window $W$ has at least one crossing ellipse for each slope in $[\frac14,\frac34]$. Since there are at least $2^{-3}/\sqrt{\eps}$ slopes in the positive bundle, we conclude that there are at least $2^{-4}/\sqrt{\eps}$ slopes $\lambda\in \mathrm{Slo}$ among these such that the crossing ellipse of $W$ of slope $\lambda$ is adventurous. Thus, there are at least 
    \[\frac{|\cW|}{8}\cdot \frac{2^{-4}}{\sqrt{\eps}}=\frac{2^{-16}}{2^3\eps} \cdot \frac{2^{-4}}{\sqrt{\eps}}=2^{-23}/\eps^{3/2}\]
    window--ellipse pairs of the above type.
    
    On the other hand, there are at most $|A|\cdot 2^{-3}/\sqrt{\eps}=2^{-5}/\eps$ ellipses in the positive bundle. Therefore, there is some spanner path in the positive bundle that appears at least $\frac{2^{-23}/\eps^{3/2}}{2^{-5}/\eps}=2^{-18}/\sqrt{\eps}$ times in a pair, i.e., there is some spanner path $\gamma_{ab}$ that is adventurous in at least $2^{-18}/\sqrt{\eps}$ windows. 
    By setting $t=19$ and $k=15$, this contradicts the fact that a spanner path can be adventurous at most $2^{-t}/\sqrt{\eps}$ times by \Cref{lem:lowerbasics}(ii).
\end{proof}

\paragraph{Disk-tube incidences.}
The classical theorem by Szemer\'edi and Trotter~\cite{SZT83} shows that for any set of $\ell$ lines in the plane the number of points incident to at least $r$ lines (i.e., \emph{$r$-rich lines}) is $O(\frac{\ell^2}{r^3}+\frac{\ell}{r})$, and this bound is the best possible. Applications in geometric measure theory motivated a generalization, where points  and lines are replaced by disks and tubes. For any $\delta>0$, a \EMPH{$\delta$-disk} is a disk of radius $\delta$, and a \EMPH{$\delta$-tube} is a $\delta\times 1$ rectangle of arbitrary orientation inside the unit square $[0,1]^2$. A $\delta$-disk is \EMPH{incident} to a $\delta$-tube if they have a nonempty intersection. Guth, Solomon and Wang~\cite{GSW19} were the first to prove an analogue of the Szemer\'edi-Trotter theorem for disk-tube incidences for well-spaced tubes. 

We use a stronger variant, by Fu, Gan, and Ren~\cite{FGU22}. Two $\delta$-disks, $B_1$ and $B_2$, are \EMPH{essentially distinct}  if $\mathrm{area}(B_1\cap B_2)\leq \frac12\,\mathrm{area}(B_1)$. Similarly, two $\delta$-tubes, $T_1$ and $T_2$, are \EMPH{essentially distinct}  if $\mathrm{area}(T_1\cap T_2)\leq \frac12\,\mathrm{area}(T_1)$.
For a set $\mathcal{T}$ of $\delta$-tubes and $r\in \mathbb{N}$, let $\mathcal{B}_r(\mathcal{T})$ be a set of essentially distinct $\delta$-disks 
that each intersect at least $r$ $\delta$-tubes in $\mathcal{T}$; the disks in $\mathcal{B}_r(\mathcal{T})$ are called \EMPH{$r$-rich}.

\begin{theorem}[Fu, Gan, and Ren~\cite{FGU22}]\label{thm:FGR22}
Let $1 \leq Y \leq X \leq 1/\delta$. 
Let $\mathcal{T}$ be a collection of essentially distinct
$\delta$-tubes in $[0,1]^2$. We also assume $\mathcal{T}$ satisfies the following spacing condition: every $\frac{1}{Y}$-tube contains at most $XY$ many tubes of $\mathcal{T}$, and the directions of these tubes are $\frac{1}{X}$-separated.
We denote $|\mathcal{T}_{\max}| := XY$ (as one can see that $\mathcal{T}$ contains $O(XY)$ tubes). Then for every $\mu>0$ and $r>\max(\delta^{1-2\mu} |\mathcal{T}_{\max}|, 1)$, the number of $r$-rich balls is bounded by
\[
|\mathcal{B}_r (\mathcal{T})| \leq O_\mu \left( |\mathcal{T}|\cdot |\mathcal{T}_{\max}| \cdot \frac{1}{r^2} \left(\frac{1}{r} +\frac{1}{Y}\right)\right),
\]
where the constant hidden in the $O_\mu(.)$ notation depends on $\mu$. 
\end{theorem}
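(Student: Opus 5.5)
The statement is the Fu--Gan--Ren disk--tube incidence bound, which the paper uses only as a black box. If I had to prove it, I would use the Fourier \emph{high--low} method combined with induction on scales, following Guth--Solomon--Wang~\cite{GSW19}.

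Fix $r$ and write $\mathcal{B}=\mathcal{B}_r(\mathcal{T})$. Let $f=\sum_{T\in\mathcal{T}}\mathbf{1}_T$ and $g=\sum_{B\in\mathcal{B}}\mathbf{1}_{B}$, where each indicator is replaced by a smooth bump adapted to $T$ or $B$. Essential distinctness makes these sums have bounded overlap up to constants. Each rich disk meets at least $r$ tubes, so
\[ r\,\delta^{2}|\mathcal{B}|\lesssim \int f g. \]
Pick an intermediate scale $\rho\in[\delta,1]$ and split $f=f_{\mathrm{low}}+f_{\mathrm{high}}$ at frequency $1/\rho$, where $f_{\mathrm{low}}=f*\phi_\rho$.

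\textbf{Low part.} Here $f_{\mathrm{low}}(x)$ is about $\delta/\rho$ times the number of tubes through the $\rho$-disk at $x$. Bound this number by the spacing hypothesis: a $\rho$-disk lies in a $1/Y$-tube in every direction, and there are $1/X$-separated directions. This gives $\|f_{\mathrm{low}}\|_\infty\lesssim \frac{\delta}{\rho}\cdot(\text{tubes through a }\rho\text{-disk})$. The condition $r>\delta^{1-2\mu}|\mathcal{T}_{\max}|$ is exactly what lets me choose $\rho$ slightly above $\delta$ (by a factor $\delta^{-\mu}$) so that $\int f_{\mathrm{low}}\, g\le \frac12 r\delta^2|\mathcal{B}|$. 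With that choice the high part dominates.

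\textbf{High part.} I would use Cauchy--Schwarz, $\int f_{\mathrm{high}}\, g\le \|f_{\mathrm{high}}\|_2\|g_{\mathrm{high}}\|_2$, together with $\|g\|_2^2\lesssim \delta^2|\mathcal{B}|$ from essential distinctness of the disks. The Fourier transform of $\mathbf{1}_T$ is concentrated on a $1\times\delta^{-1}$ slab perpendicular to $T$, and in the high annulus $|\xi|\gtrsim 1/\rho$ slabs from $1/X$-separated directions have overlap about $\rho X$. Hence
\[ \|f_{\mathrm{high}}\|_2^2\lesssim(\text{overlap})\cdot|\mathcal{T}|\delta. \]
Combining the two estimates and squaring gives $r^2\delta^2|\mathcal{B}|\lesssim |\mathcal{T}|\cdot(\text{overlap factor})$. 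With the correct overlap count this becomes $|\mathcal{B}|\lesssim |\mathcal{T}|\,|\mathcal{T}_{\max}|\, r^{-2}(1/r+1/Y)$. The $1/Y$ term comes from tubes packed inside a single $1/Y$-tube, and the $1/r$ term comes from the genuine point--line (Szemer\'edi--Trotter) behaviour.

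\textbf{Main obstacle.} A single high--low step does not give the sharp bound; it loses when tubes cluster at intermediate scales. I would therefore run an induction on scales. Cover by $\rho$-tubes, rescale each fat tube to the unit square, and apply the inductive hypothesis to the rescaled configuration. The spacing condition is designed to be inherited: a $1/Y$-tube with $XY$ tubes and $1/X$-separated directions rescales to an instance with new parameters $(X',Y')$. Each of the $O(\mu^{-1})$ dyadic steps costs a factor $\delta^{-O(\mu)}$, and these losses are absorbed into $O_\mu(\cdot)$ once $\rho$ is taken to be $\delta^{1-\mu}$-ish. The delicate part is checking that the rescaled families still satisfy both the spacing hypothesis and the richness threshold at every scale. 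I expect this bookkeeping, rather than the Fourier estimate itself, to be the hard part.
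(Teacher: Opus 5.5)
You have picked the natural toolbox (high--low plus induction on scales, as in \cite{GSW19}). But the paper does not prove this theorem: it imports it from \cite{FGU22} as a black box, so your outline has to stand on its own. As written it does not. The single-scale step has two errors, and the induction on scales, where the theorem actually lives, is not carried out.

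\emph{Low part.} By your own spacing count, for $\rho\le 1/Y$ a $\rho$-disk can meet $\sim X$ tubes; a bush of $X$ concurrent tubes is admissible. So at $\rho=\delta^{1-\mu}$ you only get $\|f_{\mathrm{low}}\|_\infty\lesssim \delta X/\rho=\delta^{\mu}X$. The hypothesis $r>\delta^{1-2\mu}XY$ beats this only if $Y\gtrsim\delta^{3\mu-1}$. The threshold compares $r$ with the \emph{average} density $\delta|\mathcal{T}|\le\delta XY$, so it neutralizes the low part only at the coarse end, $\rho\gtrsim\delta^{2\mu}/Y$. The failure is real. Take a bush with $Y=1$ and $\delta^{1-2\mu}X<r\ll\delta^{\mu}X$: almost all $r$-rich disks lie at distance $\approx\delta X/r\gg\rho$ from the center, and there $f_{\mathrm{low}}\approx f$.

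\emph{High part.} Cauchy--Schwarz correctly gives $r^2\delta\,|\mathcal{B}|\lesssim(\text{overlap})\cdot|\mathcal{T}|$, not $r^2\delta^2|\mathcal{B}|$. The overlap count is at least $1$, so this step can never output anything better than $|\mathcal{B}|\lesssim|\mathcal{T}|/(\delta r^2)$. The target is $|\mathcal{T}|\,X(1+Y/r)/r^2$. Under $r>\delta^{1-2\mu}XY$, the ratio of the two exceeds $1/(\delta X+\delta^{2\mu})$, which is unbounded once $X\ll 1/\delta$. For well-spaced tubes ($X=Y=W$, where necessarily $r\lesssim W$) it exceeds $\delta^{-2\mu}/2$. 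So ``with the correct overlap count this becomes\ldots'' cannot be filled in at any scale.

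Hence both the $r^{-3}$ term and the $r^{-2}Y^{-1}$ term must come from the multiscale argument you defer as bookkeeping. A working version has to run the dichotomy at every scale. When the low part wins at scale $\rho$, each rich $\delta$-disk yields an $\approx r\rho/\delta$-rich $\rho$-disk for the $\rho$-thickened tubes counted with multiplicity. You must then:
\begin{enumerate}
\item[(a)] find spacing parameters $(X',Y')$ with $1\le Y'\le X'\le 1/\rho$ that this multiset satisfies;
\item[(b)] verify the richness threshold at that scale;
\item[(c)] control how many rich $\delta$-disks a rich $\rho$-disk can contain;
\item[(d)] use the threshold to show the low alternative cannot persist up to $\rho\sim 1$.
\end{enumerate}
The induction you outline does not fit this. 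Rescaling a $\rho\times 1$ fat tube to the unit square is anisotropic, so $\delta$-disks become $(\delta/\rho)\times\delta$ ellipses rather than $(\delta/\rho)$-disks. Also, a disk's $r$ tubes are spread over many fat tubes, so no single rescaled instance sees its richness. Finally, your loss accounting is wrong: $O(1/\mu)$ steps each costing $\delta^{-O(\mu)}$ multiply to $\delta^{-O(1)}$, which is not $O_\mu(1)$. To get the stated constant, each step may lose only an absolute constant.
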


In order to apply \Cref{thm:FGR22} we need to further restrict the positive and negative bundles.

\begin{lemma}\label{lem:essenitiallydistinct}
    In each well-behaved window $W\in \cW$ there is a collection $\Psi^+_W$ of at least $|\Psi^+_W|=\Omega(1/\sqrt{\eps})$ positive-bundle paths such that each path $\gamma\in \Psi^+_W$ is a core path in $W$ that is neither adventurous nor skewed.
    The analogous claim holds for the negative bundle that contains a collection $\Psi^-_W$. Additionally, 
    \begin{description}
        \item[(i)] Each pair of spanner paths $\gamma^+\in \Psi^+_W$ and $\gamma^-\in \Psi^-_W$ cross within $W$, that is, $\emptyset \neq R_W(\gamma)\cap R_W(\gamma')\subset W$.
        \item[(ii)] For each pair of distinct $\gamma,\gamma'\in \Psi^+_W\cup \Psi^-_W$ we have that the direction of $R_W(\gamma)$ and $R_W(\gamma')$ has angle at least $2^k\sqrt{\eps}$.
        \item[(iii)] For each pair $\gamma,\gamma'\in \Psi^+_W\cup \Psi^-_W$ we have that $\mathrm{Area}\big(R_W(\gamma)\cap R_W(\gamma')\big)<\mathrm{Area}\big(R_W(\gamma))/2$.
    \end{description}
\end{lemma}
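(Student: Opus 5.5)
Start from the definition of a well-behaved window: in the positive bundle at least $2^{-4}/\sqrt{\eps}$ ellipses $\cE_{ab}$ cross the slit of $W$, and their spanner paths are neither adventurous nor skewed. The plan is to thin this family into $\Psi^+_W$, and likewise build $\Psi^-_W$ from the negative bundle.

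\textbf{Core paths and crossing (i).} Since $\gamma_{ab}\subset\cE_{ab}$ and $\cE_{ab}$ crosses the slit, $\gamma_{ab}$ must meet the slit. Its entry point $a'$ and exit point $b'$ lie on $\bd W$ on opposite sides of the slit, because the path is monotone-ish with slope in $[\frac14,\frac34]$ up to small error. A non-skewed path has chord $a'b'$ of slope within $2^k\sqrt{\eps}$ of $\slope(ab)\in[\frac14,\frac34]$, so it is bounded away from horizontal and vertical. A chord of such a slope through a point of the slit spans vertical extent at least roughly half the window, hence $|a'b'|\ge 16\sqrt{\eps}$, so $\gamma_{ab}$ is a core path. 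For (i), take $\gamma^+\in\Psi^+_W$ and $\gamma^-\in\Psi^-_W$. Their chords have slopes of opposite signs bounded away from $0$, and both pass within $O(\eps)$ of the slit. Their middle portions therefore cross inside $W$, so the strips (which contain the chords) intersect within $W$. Since strips are clipped to $W$ by definition, the inclusion is automatic; what must be checked is nonemptiness, i.e.\ that the crossing point of the chords lies in $W$. I would verify this by noting that each chord runs from the top or side boundary to the bottom or side boundary through the slit, with slopes of opposite sign.

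\textbf{Thinning for (ii).} Different slopes in $\mathrm{Slo}$ differ by multiples of $4\sqrt{\eps}$. Because the paths are non-skewed, the chord direction is within $2^k\sqrt{\eps}$ of $\slope(ab)$ in slope. Group the slopes into blocks and keep one path per slope from every $C\cdot 2^k$-th slope, with $C$ an absolute constant, say $4$. Then kept target slopes differ by at least $3\cdot 2^{k}\sqrt{\eps}$, and chord directions differ by at least $2^k\sqrt{\eps}$ (with a constant-factor conversion between slope and angle, since slopes lie in $[-1,1]$). Two paths with the same slope but different ellipses cannot both cross the slit in a way that creates a problem; still, I keep one per slope. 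This leaves $2^{-4}/\sqrt{\eps}$ divided by $O(2^k)$ paths, which is still $\Omega(1/\sqrt{\eps})$ since $k=15$ is a constant. This gives $\Psi^\pm_W$. The count of slopes available is fine because \Cref{lem:lowerbasics}(i) and well-behavedness give many distinct slopes, not just many ellipses. I must count slopes rather than ellipses, and here I would use disjointness of parallel ellipses: only $O(1)$ ellipses of one slope can cross a single slit, so $2^{-4}/\sqrt{\eps}$ ellipses give $\Omega(1/\sqrt{\eps})$ distinct slopes.

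\textbf{Area bound (iii).} Two strips of width $w=2^k\eps$ whose directions differ by angle $\theta\ge 2^k\sqrt{\eps}$ intersect in a parallelogram of area at most $w^2/\sin\theta=O(2^{k}\eps^{3/2})$. Each strip is a core strip of length at least about $16\sqrt{\eps}$ inside $W$, so its area is at least about $16\sqrt{\eps}\cdot w=2^{k+4}\eps^{3/2}$. Comparing constants gives a ratio of about $1/(16\sin\theta/\theta)<\frac12$. The main obstacle I expect is this bookkeeping: making sure the strip truly has length $\Omega(\sqrt{\eps})$ inside $W$ (clipping near corners) and that the constants work out with $k=15$. If they are tight, I would enlarge the thinning factor $C$ to increase $\theta$.
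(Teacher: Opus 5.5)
Your proposal is correct in substance and follows the paper's proof closely. You thin the good paths to one representative per well-separated slope class, and use non-skewness to transfer the target-slope separation to the strip directions for (ii). For (i) you use the interleaving of entry and exit points on $\partial W$, and for (iii) you compare the rhombus area $w^2/\sin\theta$ with the strip area, which is at least about $16\sqrt{\eps}\,w$.

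Two details need tightening.

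First, a progression of every $C\cdot 2^k$-th slope fixed in advance may contain no good slope at all. Its offset must be chosen by pigeonhole, as the paper does with $\rho\in\{0,1,2,3\}$. That pigeonhole step is what actually yields your count $\Omega(2^{-k}/\sqrt{\eps})$.

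Second, you derive the core property from non-skewness and non-adventurousness. However, the paper introduces $R_W$, ``adventurous'' and ``skewed'' under the standing assumption $|a'b'|\ge 16\sqrt{\eps}$, and uses that assumption in the proof of Lemma~\ref{lem:lowerbasics}(iii). Under that reading your step is circular. It is cleaner to argue as the paper does, directly from $\gamma_{ab}\subset\mathcal{E}_{ab}$. This thin ellipse has slope in $[\frac14,\frac34]$ and crosses the slit, so it meets $\partial W$ in two far-apart pieces, one containing $a'$ and the other containing $b'$. Also, such a chord can have vertical extent only about $8\sqrt{\eps}$ when the slope is near $\frac14$, so your ``half the window'' claim is wrong. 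It is the horizontal extent, at least about $21\sqrt{\eps}$, that gives $|a'b'|\ge 16\sqrt{\eps}$.

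On the other side of the ledger, two of your points repair places the paper's proof glosses over. You note that several (but $O(1)$) parallel ellipses can cross one slit, so one must count distinct slopes rather than ellipses. You also handle the slope-to-angle conversion by taking a larger $C$.
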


\begin{proof}
    For a fixed well-behaved window $W$ consider the positive-bundle ellipses that (i) cross the slit of $W$ and (ii) whose corresponding spanner paths are non-skewed and non-adventurous in $W$. We claim that all such paths are core paths. Observe that any positive-bundle ellipse that crosses the slit of a mid-height window must have a slope between $1/4$ and $3/4$. Thus in particular its intersection with window $W$ must stay below the line of slope $3/4$ through the left endpoint of the slit, see \Cref{fig:windowslit}(ii). Similarly, it must stay above the line of slope $3/4$ through the right endpoint of the slit. Notice that any such path is a core path as the strip given by the above two lines intersects $\bd W$ in two connected sets whose distance is more than $8\sqrt{\eps}$, i.e., more than a quarter of the side length of~$W$.

    Let us partition the positive bundle target slopes into intervals of length $2^k \sqrt{\eps}$, i.e., $[1/4,3/4]$ is split into $\tau:=2^{-k-1}/\sqrt{\eps}$ intervals of equal length; let $\mathrm{Slo}^+_1,\dots,\mathrm{Slo}^+_\tau$ denote these slope intervals. We partition the negative bundle target slopes in a similar fashion, resulting in the intervals $\mathrm{Slo}^-_1,\dots,\mathrm{Slo}^-_\tau$. Let $\cI$ denote the set of $2^{-k}/\sqrt{\eps}$ slope intervals in both bundles together. Each slope interval has $\frac{2^k \sqrt{\eps}}{4\sqrt{\eps}}=2^{k-2}$ different slopes.
    
    Since $W$ is well-behaved, there are at least $\frac{1}{16\sqrt{\eps}}$ such paths crossing the slit of $W$ and each slope interval has at most $2^{k-2}$ non-skewed non-adventurous crossing ellipses, there must be at least $2^{-k-2}/\sqrt{\eps}$ slope intervals that contain some non-adventurous non-skewed ellipses through the slit of~$W$. Consequently, there are at least $2^{-k-4}/\sqrt{\eps}$ such slope intervals among  $\mathrm{Slo}^+_{4i+\rho}$ for some fixed $\rho\in \{0,1,2,3\}$. Let $\Psi^+_W$ contain one spanner path through $W$ that is non-adventurous and non-skewed from each slope interval $\mathrm{Slo}^+_{4i+\rho}$ if there is such a path. We define $\Psi^-_W$ analogously. Notice that we immediately get $|\Psi^+_W|\geq 2\cdot 2^{-k-4}/\sqrt{\eps}=\Omega(1/\sqrt{\eps})$.
    
    Moreover, observe that due to slope restrictions, any path in $\Psi^+_W$ enters $W$ on either the bottom half of the left side of $W$ or the left half of the bottom side of $W$, and exits on either the top half of the right side or the right half of the top side of $W$ (see \Cref{fig:windowslit}(ii)). The symmetric claim holds for any path in $\Psi^-_W$, thus any pair of such paths must have their intersection inside $W$, concluding the proof of (i).

    Next, we prove that for every pair of distinct paths $\gamma,\gamma'\in \Psi^+_W \cup \Psi^-_W$, the corresponding tubes $R_W(\gamma)$ and $R_W(\gamma')$ have sufficiently different angles and that their intersection has small diameter.
    Observe that the slopes in $\mathrm{Slo}^+_{4i+\rho}$ and  $\mathrm{Slo}^+_{4i+4+\rho}$ differ by at least three times the size of $\mathrm{Slo}_i$, thus the slopes of any two ellipses with slope from $\mathrm{Slo}^+_{4i+\rho}$ and $\mathrm{Slo}^+_{4i+4+\rho}$ must differ by at least $3\cdot 2^k/\sqrt{\eps}$. In particular, this holds for any pair of distinct spanner paths $\gamma,\gamma'\in \Psi^+_W \cup \Psi^-_W$. Because $\gamma$ is non-adventurous, it is contained in a tube $R_W(\gamma)$ of width $2^k\eps$. Furthermore, because $\gamma$ is non-skewed, its slope differs from the slope of $R_W(\gamma)$ by at most $2^k\sqrt{\eps}$. Similarly, $\gamma'$ is contained in a tube $R_W(\gamma')$ of width $2^k\sqrt{\eps}$ whose slope differs from the slope of $\gamma'$ by at most $2^k\sqrt{\eps}$. Hence, the slopes of $R_W(\gamma)$ and $R_W(\gamma')$ differ by at least $3\cdot 2^k\sqrt{\eps} - 2\cdot 2^k\sqrt{\eps}=2^k\sqrt{\eps}$, concluding the proof of~(ii).

    Notice first that in case $R_W(\gamma)$ and $R_W(\gamma')$ are disjoint the claim (iii) trivially holds, so assume that they intersect.
    Consequently, $R_W(\gamma)\cap R_W(\gamma')$ is a rhombus where the distance between any two opposite sides is $w:=2^k\eps$ and the smaller angle $\alpha$ is at least $2^k\sqrt{\eps}$. Any rhombus of height $w$ and angle $\alpha<\pi/2$ of side length $s$ satisfies $w=s\cdot \sin \alpha$, and so its area can be bounded as
    \[ws=w^2/\sin(\alpha) < 2^{2k}\eps^2/\sin(2^k\sqrt{\eps}) < 2^{k+1}\eps^{3/2}\]
    when $\eps<2^{-2k}$.
    This implies the desired inequality about the area of the intersection: since $\gamma$ is a core path, the corresponding region $R_W(\gamma)$ has a central line of length least $2^9\sqrt{\eps}$, thus $R_W(\gamma)$ has area at least $2^9\sqrt{\eps}\cdot 2^k\eps>2\cdot 2^{k+1}\eps^{3/2}$.
\end{proof}

We apply \Cref{thm:FGR22} in each well-behaved window $W\in \mathcal{W}$.
  \begin{lemma}\label{cl:tubes} 
            Every well-behaved window $W\in \mathcal{W}$ contains $\Omega_\mu (1/\eps^{1-\mu})$ Steiner points. 
    \end{lemma}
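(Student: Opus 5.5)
Fix a well-behaved window $W$ and take the collections $\Psi^+_W,\Psi^-_W$ from \Cref{lem:essenitiallydistinct}, each of size $\Theta(1/\sqrt{\eps})$.

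\emph{Step 1: a quadratic supply of crossings.} Since $G$ is noncrossing, every pair $(\gamma^+,\gamma^-)\in\Psi^+_W\times\Psi^-_W$ shares a vertex of $G$ inside $W$. Indeed, by \Cref{lem:essenitiallydistinct}(i) the two paths have interleaving entry and exit points on $\partial W$, so as curves they must meet. Their common point is a vertex of $G$; it is not a point of $P$, because $W\subset M$ lies far from $A\cup B$, so it is a Steiner vertex. Charging each pair to such a vertex gives $\Omega(1/\eps)$ incidences between pairs and Steiner vertices in $W$. A Steiner vertex $v$ that lies on $d_v$ paths of $\Psi^+_W\cup\Psi^-_W$ absorbs at most $d_v^2$ pairs.

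\emph{Step 2: rescale and apply \Cref{thm:FGR22}.} Rescale $W$ by the factor $1/(32\sqrt{\eps})$ so that it becomes the unit square. The strips $R_W(\gamma)$ then become tubes of width $\delta=\Theta(\sqrt{\eps})$; extend each to a full $\delta$-tube. Each path $\gamma$ lies in its tube because $\gamma$ is non-adventurous. The hypotheses of \Cref{thm:FGR22} hold as follows:
\begin{itemize}
\item the tubes are essentially distinct by \Cref{lem:essenitiallydistinct}(iii);
\item after rescaling, \Cref{lem:essenitiallydistinct}(ii) gives directions $\Omega(\sqrt{\eps})$-separated, so take $X=\Theta(1/\sqrt{\eps})=\Theta(1/\delta)$ and $Y=1$.
\end{itemize}
Then $|\mathcal{T}_{\max}|=XY=\Theta(1/\sqrt{\eps})$, which dominates the total $|\mathcal{T}|=O(1/\sqrt{\eps})$. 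Choose $r=C\delta^{-2\mu}\asymp\eps^{-\mu}$, so that $r>\delta^{1-2\mu}|\mathcal{T}_{\max}|$. The theorem then bounds the number of essentially distinct $r$-rich $\delta$-disks by $O_\mu(\eps^{-1}r^{-2})$.

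\emph{Step 3: high-degree vertices carry few crossings.} Let $V_{\mathrm{hi}}$ be the Steiner vertices of degree $d_v\ge r$, i.e., lying on at least $r$ paths, hence in at least $r$ tubes. Place a $\delta$-disk at each such vertex and greedily extract an essentially distinct subfamily; each extracted disk represents $O(1)$ others. By Step 2 this leaves $O_\mu(1/(\eps r^2))$ disks, each $r$-rich.

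We must bound how many pairs these absorb, not just how many disks there are. A dyadic decomposition in richness handles this. Let $N_j$ be the number of essentially distinct $2^j$-rich disks for $2^j\ge r$; Step 2 gives $N_j=O_\mu(\eps^{-1}2^{-3j}+\eps^{-1}2^{-2j})$, where the $1/Y$ term contributes the $2^{-2j}$ part.

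The main obstacle is here. With $Y=1$ the $1/Y$ term dominates, so the pairs absorbed are $\sum_j N_j4^{j}$. Each $\delta$-disk of richness $d$ contains the pairs of the at most $d$ tubes meeting it, so it absorbs at most $d^2$ pairs. The sum is $\sum_j O(\eps^{-1})$ over $O(\log(1/\eps))$ scales. This is the same order as the total and not yet a small fraction.

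\emph{Handling the obstacle.} I would first exploit the second pigeonhole already used in the overview: two tubes with directions differing by $\theta$ meet in a region of diameter $O(\delta/\theta)$. Hence a $\delta$-disk meets only pairs with angle $\gtrsim$ constant times their share, and most pairs (angle $\Theta(1)$) are counted once. If this still fails, I would shrink $|\Psi^\pm_W|$ by a constant factor, or choose $Y$ larger, so that the $1/r$ term dominates. I will aim to show that vertices with $d_v\ge r$ absorb at most half of the $\Omega(1/\eps)$ pairs.

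\emph{Step 4: count the remaining vertices.} The remaining $\Omega(1/\eps)$ pairs sit at vertices with $d_v<r$, each absorbing fewer than $r^2$ pairs. Thus $W$ contains at least $\Omega(1/\eps)/r^2=\Omega_\mu(\eps^{-1+2\mu})$ Steiner vertices. Renaming $2\mu$ as $\mu$ gives $\Omega_\mu(1/\eps^{1-\mu})$, as the statement claims.
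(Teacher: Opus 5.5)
\textbf{There is a genuine gap, at exactly the step you flagged, and none of your proposed repairs closes it.} Your Steps 1, 2 and 4 follow the paper's route: crossing pairs in $\Psi^+_W\times\Psi^-_W$, rescaling to $\delta$-tubes, \Cref{thm:FGR22} with $X=\Theta(1/\sqrt{\eps})$ and $Y=1$, a dyadic count of the pairs absorbed by rich disks, and then $\Omega(1/\eps)/r_0^2$ low-degree vertices.

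With $Y=1$ the theorem only gives $|\mathcal{B}_r(\mathcal{T})|\lesssim_\mu|\mathcal{T}|^2/r^2$. This is sharp for families with every property you verified. Take a pencil of tubes through one $\delta$-disk, one tube per separated direction. It is essentially distinct, direction-separated and pairwise crossing. It has $\asymp|\mathcal{T}|^2/r^2$ essentially distinct $r$-rich disks for $1\le r\le|\mathcal{T}|$. It also lets every pair in $\Psi^+_W\times\Psi^-_W$ cross at one and the same Steiner vertex.

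This is not hypothetical, because $\Psi^\pm_W$ consists of one \emph{arbitrary} admissible path per slope interval. Consider the planarization of the complete geometric graph on $P$. It is a noncrossing Steiner $1$-spanner in which all paths are straight, so every window is well-behaved. A point $p=(4/9,y)$ on a slit satisfies $9y\in 4\sqrt{\eps}\,\mathbb{Z}$. Hence $p$ lies on segments $ab$ of all slopes in an arithmetic progression with step $36\sqrt{\eps}\ll 2^k\sqrt{\eps}$. So every interval $\mathrm{Slo}^\pm_j$ has a candidate through $p$, and $\Psi^\pm_W$ may be chosen as a pencil at $p$. Your target claim, that vertices of degree at least $r$ absorb at most half of the pairs, is then false.

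Your three repairs fail against this example:
\begin{itemize}
\item The angle--diameter remark does not help, since in a pencil each pair is already counted only once, at the centre.
\item Passing to a constant-fraction subfamily does not help, since a subfamily of a pencil is still a pencil.
\item Enlarging $Y$ does not help. The directions are only $\Theta(\delta)$-separated, which forces $X\asymp 1/\delta$ and $|\mathcal{T}_{\max}|\asymp Y/\delta$. The term $|\mathcal{T}|\,|\mathcal{T}_{\max}|/(Yr^2)\asymp|\mathcal{T}|^2/r^2$ therefore remains.
\end{itemize}

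\textbf{What is missing.} You need a non-concentration input that a one-path-per-direction family cannot provide. One possibility is to work with all paths through $W$: paths of equal slope lie in disjoint ellipses, so a $\delta$-disk meets at most one of them per slope. You would also need an incidence bound that can be summed over all dyadic richness scales, which \Cref{thm:FGR22} with $Y=1$ does not give.

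The paper's written proof does not supply this either. It replaces $O_\mu\bigl(|\mathcal{T}|^2r^{-2}(r^{-1}+1)\bigr)$ by $O_\mu\bigl(|\mathcal{T}|^2r^{-3}\bigr)$, which is false for $r>1$. Its dyadic bound on the pairs absorbed by rich disks depends on that step. The pencil above is a configuration in which its intermediate conclusion fails: it is not true that at least half of the pairs in $\Psi^-_W\times\Psi^+_W$ avoid all $r_0$-rich disks. You were right to refuse that step, but as it stands your proposal does not prove the lemma.
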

    \begin{proof}
    In order to use \Cref{thm:FGR22}, we scale $W$ up to the square $[\frac38,\frac58]^2$ of side length $\frac14$, and consider the square $[0,1]^2$ as the \EMPH{frame} of the window. 
    Note the scaling factor is $\Theta(\sqrt{1/\eps})$. By \Cref{lem:essenitiallydistinct} in both the positive and the negative bundles, at least $\Omega(\sqrt{1/\eps})$ ellipses $\cE_{ab}$ cross the slit of $W$ such that the corresponding spanner paths $\gamma_{ab}$ are neither adventurous nor skewed. Each such path $\gamma_{ab}$ has a corresponding strip $R_W(\gamma_{ab})$ of width $2^k\eps=\Theta(\eps)$, which is scaled up and extended to a tube of width $\delta$, where $\delta=\Theta(\sqrt{\eps})$.
    We observe that each strip $R_W(\gamma)$ can be extended to a $\delta\times 1$ rectangle within the frame that covers $R_W(\gamma)$, thus each strip corresponds to a $\delta$-tube in $[0,1]^2$.

    Let $\mathcal{T}^+$ and $\mathcal{T}^-$ be the set of these $\delta$-tubes corresponding to the positive and negative bundles through the slit of $W$, resp., and let $\mathcal{T}=\mathcal{T}^-\cup\mathcal{T}^+$. The number of $\delta$-tubes is $|\mathcal{T}|=\Theta(\sqrt{1/\eps})$. Scaling does not change the angles: the angle between any two tubes remains $\Omega(\sqrt{1/\eps})$.
    Note that the $\delta$-tubes in $\mathcal{T}$ are essentially distinct by \Cref{lem:essenitiallydistinct}.

    We use \Cref{thm:FGR22} with parameter $\delta=\Theta(\sqrt{\eps})$, $X=\Theta(\sqrt{1/\eps})$,  $Y=1$, and $|\mathcal{T}|=\Theta(\sqrt{1/\eps})$. Observe that $|\mathcal{T}_{\max}|\cdot \Theta(\sqrt{1/\eps})$. In this setting, \Cref{thm:FGR22} states that for all $r>1/\eps^\mu$, the number of essentially distinct $r$-rich $\delta$-disks is 
    \begin{equation}\label{eq:FGR22}
    |\mathcal{B}_r (\mathcal{T})|  
    \leq O_\mu \left( |\mathcal{T}|^2 \cdot \frac{1}{r^2} \left(\frac{1}{r}+1\right)\right)
    \leq O_\mu \left(\frac{|\mathcal{T}|^2}{r^3} \right)
    \leq O_\mu \left(\frac{1}{\eps r^3} \right) .
    \end{equation}

    For an integer $r\geq \Omega_\mu(1/\eps^\mu)$, consider a set $\mathcal{B}_r(\mathcal{T})$ of essentially distinct $r$-rich $\delta$-disks.
    For every $i\in \mathbb{N}$, let $\mathcal{D}_{r,i}\subseteq \mathcal{B}_r(\mathcal{T})$ be the set of disks $D\in \mathcal{B}_r(\mathcal{T})$ that intersect at least $2^{i-1}r$ but less than $2^i r$ tubes in $\mathcal{T}$. Note that $\mathcal{B}_r(\mathcal{T})=\bigcup_{i=1}^\infty \mathcal{D}_{r,i}$;
    and $\mathcal{D}_{r,i}$ is a set of $(2^{i-1}r)$-rich disks. 
    By \Cref{eq:FGR22}, the number of pairs of tubes $(T_1,T_2)\in \mathcal{T}^2$ 
    that both intersect some disk in $\mathcal{B}_r(\mathcal{T})$ is bounded by 
    \begin{equation}\label{eq:rich}
    \sum_{i=1}^\infty  (2^i r)^2 \cdot |\mathcal{D}_{r,i} |
    \leq \sum_{i=1}^\infty  (2^i r)^2 \cdot O_\mu \left(\frac{1}{\eps\, (2^{i-1}r)^3}\right) 
    \leq \sum_{i=1}^\infty  O_\mu \left(\frac{1}{\eps\, 2^ir} 
    \right) 
    =O_\mu \left(\frac{1}{\eps\, r} 
    \right) .
    \end{equation}

    Let $S$ be the set of Steiner vertices of $G$ in the frame $[0,1]^2$. We say that a vertex $v\in S$ is $r$-rich if $v$ is a vertex of at least $r$ paths $\gamma_{ab}$ that each correspond to some tube in $\mathcal{T}$. For every $r\in \mathbb{N}$, we greedily cover the $r$-rich Steiner vertices with disks of radius $\delta$ as follows: Initialize $\mathcal{B}_r(\mathcal{T}):=\emptyset$; while there is a $r$-rich Steiner vertex $v$ that is not covered by the disks in $\mathcal{B}_r(\mathcal{T})$, add a new $\delta$-disk centered at $v$ to $\mathcal{B}_r(\mathcal{T})$. By construction, the distance between the centers of any two disks in $\mathcal{B}_r(\mathcal{T})$ is at least $\delta$, and so the disks in $\mathcal{B}_r(\mathcal{T})$ are essentially distinct. 

    By construction, every tube in $\mathcal{T}^-$ crosses every tube in $\mathcal{T}^+$, and so there are at least $|\mathcal{T}^-|\cdot |\mathcal{T}^+|=\Theta(1/\eps)$ pairs of tubes in $\mathcal{T}^-\times \mathcal{T}^+$ that cross in $[0,1]^2$.
    By \Cref{eq:rich}, there exists an integer $r_0= \Theta_\mu(1/\eps^\mu)$ such that at most $\frac12\cdot |\mathcal{T}^-|\cdot |\mathcal{T}^+|$ pairs of tubes in $\mathcal{T}\times \mathcal{T}$ 
    are incident to a common $r$-rich disk in $\mathcal{B}_r(\mathcal{T})$.

    This means that there are at least $\frac12\cdot |\mathcal{T}^-|\cdot |\mathcal{T}^+|$ 
    pairs of tubes $(T_1,T_2)\in \mathcal{T}^-\times \mathcal{T}^+$ such that 
    $T_1\cap T_2$ is disjoint from all $\delta$-disks in $\mathcal{B}_r(\mathcal{T})$.
    Consequently, $T_1\cap T_2$ contains a Steiner vertex that is incident to less than $r_0$ paths $\gamma_{ab}$ corresponding to tubes in $\mathcal{T}$.

   Let $S_0\subset S$ be the set of Steiner vertices that are each incident to less than $r_0$ paths $\gamma_{ab}$ corresponding to tubes in $\mathcal{T}$.
   Counting the pairs of tubes that pass through a common Steiner vertex in $S_0$, we have 
   \begin{align*}
   \sum_{s\in S_0} \binom{r_0}{2} & \geq \frac12\cdot |\mathcal{T}^-|\cdot |\mathcal{T}^+|\\
   |S_0| \cdot r_0^2 &\geq \Theta \left(\frac{1}{\eps}\right).
    \end{align*}    
    Combined with $r_0= O_\mu(1/\eps^\mu)$, we obtain 
    $|S|\geq |S_0|\geq \Omega_\mu(1/\eps^{1-2\mu})$. Scaling the parameter $\mu$ by a factor of 2 yields $|S|\geq \Omega_\mu(1/\eps^{1-\mu})$, as required.  
    \end{proof}
\lowerbound*
\begin{proof}
    Consider $n\sqrt{\eps}/2$ copies of an axis-aligned unit square with the point sets $A,B$ on their left and right hand sides from the basic construction, and suppose that these copies have minimum distance at least $2$. Observe that for $\eps<0.1$ the spanner paths in each copy must remain disjoint. The resulting point set has exactly $\frac{n\sqrt{\eps}}{2}\cdot {2/\sqrt{\eps}}=n$ points. In each copy we have at least $|\cW|/2=\Omega(1/\eps)$ well-behaved windows by \Cref{lem:wellbehaved}. Thus, \Cref{cl:tubes} implies that for any $\mu>0$, there are at least $\Omega(1/\eps)\cdot \Omega_\mu(1/\eps^{1-\mu})=\Omega_\mu(1/\eps^{2-\mu})$ Steiner points in each copy of the unit square. Overall, any noncrossing Steiner $(1+\eps)$-spanner must have $\frac{n\sqrt{\eps}}{2}\cdot \Omega_\mu(1/\eps^{2-\mu})=\Omega_\mu(n/\eps^{3/2-\mu})$ Steiner vertices, as claimed. 
\end{proof}

\section{Conditional lower bound for cone-restricted noncrossing spanners}
\label{sec:cone}

In this section, we prove the lower bound in \Cref{thm:cone}. The core construction is a point set $P=A\cup B$, where $A$ and $B$ consist of $\Theta(\eps^{-1/2})$ equally spaced points on two opposite sides of a unit square $[0,1]^2$. Then a cone-restricted plane Steiner $(1+\eps)$-spanner $G$ contains $ab$-paths $\gamma_{ab}$ of length at most $(1+\eps)\, |ab|$, where all edges $e$ make an angle at most $\sqrt{\eps}$ with $ab$. By Lemma~\ref{lem:angles}, for all $a,a'\in A$ and $b,b'\in B$, if $ab\neq a'b'$, then the paths $\gamma_{ab}$ and $\gamma_{a'b'}$ are edge-disjoint, and if they cross (at a Steiner point), then they cross transversally. Then we consider the Steiner points in $G$ where two or more paths $\gamma_{ab}$ cross each other, and use a lower bound on degenerate crossing numbers~\cite{AckermanP13,PachT09} to derive a lower bound on the number of Steiner points. 

\paragraph{Degenerate Crossing Number.}
Given a graph $G=(V,E)$, the (classical) \textbf{crossing number} ${\rm cr}(G)$ of $G$ is the minimum number of edge crossings in a \emph{good drawing} of $G$: A good drawing of $G$ maps the vertices $v\in V$ to distinct points in the plane, and the edges $e\in E$ to Jordan arcs between the points corresponding to the endpoints of $e$ such that (i) the relative interior of any two Jordan arcs have finitely many intersection points, each of which is a transversal crossing, (ii) no vertex lies in the relative interior of a Jordan arc, and (iii) no point lies in the relative interior of three or more Jordan arcs. A \textbf{crossing} is an intersection point between two Jordan arcs that is not a vertex. The celebrated \emph{Crossing Lemma} states ${\rm cr}(G)\geq \Omega(|E|^3/|V|^2)$ if $|E|\geq 4|V|$.
The \textbf{degenerate crossing number}, denoted ${\rm dcr}(G)$, of a graph $G$ is defined analogously to the (classical) crossing number, except that condition (iii) of a good drawing is dropped~\cite{PachT09}: Arbitrarily many edges can cross (transversely) at any given crossing. If $k$ edges pairwise cross at a point $p\in \mathbb{R}^2$, then we could perturb the Jordan arcs to obtain a good drawing with $\binom{k}{2}$ pairwise crossings---but this $k$-fold crossing (i.e., a degenerate crossing) is counted only once in the definition of ${\rm dcr}(G)$. Importantly,  Ackerman et al.~\cite{AckermanP13} proved that
${\rm dcr}(G)\geq \Omega(|E|^3/|V|^2)$ if $|E|\geq 4|V|$, that is, the classical Crossing Lemma generalizes to degenerate crossings.

\begin{lemma}\label{lem:LB:cone}
     For every sufficiently small $\eps>0$ and every $n\in \mathbb{N}$, there exists a set of $n$ points in Euclidean plane for which every cone-restricted noncrossing Steiner $(1+\eps)$-spanner has $\Omega(n/\varepsilon^{3/2})$ Steiner vertices. 
\end{lemma}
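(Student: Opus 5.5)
We will work with the basic point set $P=A\cup B$ from \Cref{sec:pre}, which has $\Theta(1/\sqrt{\eps})$ points, and show that every cone-restricted noncrossing Steiner $(1+\eps)$-spanner $G$ for it has $\Omega(1/\eps^{2})$ Steiner vertices. The general case then follows as in the proof of \Cref{thm:lower}: place $\Theta(n\sqrt{\eps})$ far-apart copies, giving $\Theta(n\sqrt{\eps})\cdot\Omega(1/\eps^2)=\Omega(n/\eps^{3/2})$.

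For each pair $(a,b)\in A\times B$, fix a cone-restricted path $\gamma_{ab}$ in $G$, so every edge makes angle at most $\sqrt{\eps}$ with $ab$. We build an auxiliary abstract graph $H$. Its vertex set is $P$ together with a few auxiliary vertices, and it has one edge per pair $(a,b)$, drawn along $\gamma_{ab}$. Take two distinct pairs. If $ab\parallel a'b'$, then \Cref{lem:angles}(1) and \Cref{lemma:cone_restricted} put the paths in disjoint ellipses, so they are disjoint. Otherwise $\angle(ab,a'b')>2\sqrt{\eps}$ by \Cref{lem:angles}(2), so no edge of $G$ can lie on both paths. Hence two distinct paths share only vertices of $G$, and at such a shared vertex they either cross or touch. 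There is one issue: many pairs share an endpoint $a$, so the curves are not yet a drawing of a simple graph with distinct vertices. To fix this, we restrict to pairs whose paths are guaranteed to cross transversally in a central region. Following the idea behind \Cref{fig:slices}, take the pairs with slope in $[\frac14,\frac34]$ and in $[-\frac34,-\frac14]$. We then use only the portions of the paths inside a middle region such as $M$, whose endpoints on $\partial M$ are distinct, and take those portions as the edges of $H$.

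The next step is to count crossings. Because all edges of a cone-restricted path point in nearly the same direction, a positive-bundle path is a $y$-monotone curve through $M$ and a negative-bundle path is a curve with the opposite monotonicity. The two therefore cross, in the topological sense, an odd number of times, so at least once. Every such crossing occurs at a Steiner vertex of $G$, since $G$ is plane. Many paths may pass through the same Steiner vertex, and this is exactly where the degenerate crossing number is needed. We must check that when two paths meet at a vertex without crossing, a small local perturbation removes the contact. We also must check that several paths crossing at one vertex form a single degenerate crossing. After these perturbations we have a drawing in which the number of distinct crossing points is at most the number of Steiner vertices.

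Now we apply Ackerman et al.'s bound ${\rm dcr}(H)\ge\Omega(|E|^3/|V|^2)$. The graph $H$ has $|E|=\Theta(1/\eps)$ edges. If we take its vertices to be the endpoints on $\partial M$, we get $|V|=\Theta(1/\eps)$, and the bound gives only $\Omega(1/\eps)$, which is too weak. So we must keep $|V|=\Theta(1/\sqrt{\eps})$ by using the original points $A\cup B$ as vertices and whole paths as edges. Then ${\rm dcr}\ge\Omega(\eps^{-3}/\eps^{-1})=\Omega(1/\eps^{2})$, and $|E|\ge4|V|$ holds for small $\eps$. The main obstacle is precisely this choice. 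Paths with a common endpoint, or paths with no common endpoint that touch without crossing, create intersections that are not legitimate crossings in the sense of a degenerate drawing. We plan to handle this by locally rerouting the arcs near each non-transversal shared vertex and near each endpoint $a\in A$. The key fact making this possible is that cone-restriction forces all paths leaving $a$ to emanate in distinct angular sectors: the directions of $ab$ and $ab'$ differ by more than $2\sqrt{\eps}$, so the paths can be separated near $a$ without new crossings. We then argue that every remaining crossing point of the drawing is a Steiner vertex of $G$, which gives the desired $\Omega(1/\eps^2)$ bound.
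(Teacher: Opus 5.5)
Your overall route matches the paper's: realize each edge $ab$ of $K(A,B)$ by $\gamma_{ab}$, apply the degenerate Crossing Lemma with $|E|=\Theta(1/\eps)$ and $|V|=\Theta(1/\sqrt{\eps})$ to get $\Omega(1/\eps^2)$ crossing points, then take $\Theta(n\sqrt{\eps})$ far-apart copies. The detour through the bundles, the square $M$ and the parity argument is unnecessary, because the Crossing Lemma produces the crossings by itself.

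The gap is in the step you flag as the main obstacle: you never show that the curves form a valid degenerate drawing, and your proposed fix does not work as stated. You allow two paths to ``touch'' at a shared Steiner vertex and plan to remove such contacts by local perturbation. Suppose that at some Steiner vertex $v$ some pairs of paths touch while others cross. Separating the touching pairs then generally splits $v$ into several distinct crossing points. So your final claim, that every crossing point of the drawing is a distinct Steiner vertex of $G$, is unjustified. You also never rule out that two paths meet at several vertices. The degenerate Crossing Lemma of Ackerman et al.\ is stated for simple drawings, in which two edges share at most one point, either a common endpoint or a proper crossing.

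The missing idea is to apply your angular-sector fact at every shared vertex, not only at $a$. Suppose $\gamma_{ab}$ and $\gamma_{a'b'}$, with $ab\not\parallel a'b'$, share a Steiner vertex $v$. By cone-restriction, the four subpaths $\gamma_{ab}[a,v]$, $\gamma_{ab}[v,b]$, $\gamma_{a'b'}[a',v]$ and $\gamma_{a'b'}[v,b']$ lie in cones with apex $v$ and aperture $2\sqrt{\eps}$ around $\overrightarrow{ba}$, $\overrightarrow{ab}$, $\overrightarrow{b'a'}$ and $\overrightarrow{a'b'}$ respectively. Since $\angle(ab,a'b')>2\sqrt{\eps}$ and both slopes lie in $[-1,1]$, these cones pairwise meet only at $v$ and alternate around $v$. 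Hence any two paths through $v$ cross transversally there, so touching is impossible, and they share no other vertex.

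Likewise, $\lozenge_{ab}\cap\lozenge_{ab'}=\{a\}$ shows that adjacent paths meet only at their common endpoint, so no rerouting near $a$ is needed. Sharing an endpoint is simply what adjacent edges of $K(A,B)$ do. This is exactly what the paper verifies. With these observations the paths form a simple degenerate drawing whose crossing points are Steiner vertices, and your count goes through without any perturbation.
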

\begin{proof}
Consider the basic construction $P=A\cup B$ defined above. 
Let $G$ be a cone-restricted plane Steiner $(1+\eps)$-spanner for $P$. 
For every pair $(a,b)\in A\times B$, let $\gamma_{ab}$ be a shortest path in $G$ in which every edge $e$ satisfies $\angle (e,ab)\leq \sqrt{\eps}$. We may also assume that shortest paths in $G$ are unique (by artificially perturbing the edge weights). Since $G$ is a $(1+\eps)$-spanner, then $|\gamma_{ab}|\leq (1+\eps)\, |ab|$.
In particular, as noted in \Cref{sec:pre}, $\gamma_{ab}$ lies in the ellipse $\mathcal{E}_{ab}$, with foci $a,b$ and major axis $(1+\eps)\, |ab|$. 

\Cref{lem:ellipses,lem:angles} imply that for any $a,a'\in A$ and $b,b'\in B$, the paths $\gamma_{ab}$ and $\gamma_{a'b'}$ are disjoint if $ab$ and $a'b'$ are parallel, and $\gamma_{ab}$ and $\gamma_{a'b'}$ are edge-disjoint otherwise. 
We claim, furthermore, that if $(a,b)\neq (a'b')$, then $\gamma_{ab}$ and $\gamma_{a'b'}$ share at most one vertex, and any common vertex is either a common endpoint ($a=a'$ or $b=b'$) or a point where $\gamma_{ab}$ and $\gamma_{a'b'}$ cross transversely.   
(1) First, assume that $\gamma_{ab}$ and $\gamma_{a'b'}$ have a common endpoint, w.l.o.g., $a=a'$. Then $\gamma_{ab}\subset \lozenge_{ab}$ and $\gamma_{ab'}\subset \lozenge_{ab'}$. Since $\lozenge_{ab}\cap \lozenge_{ab'}=\{a\}$, then $\gamma_{ab}$ and $\gamma_{a'b'}$ share only the vertex $a=a'$. 
(2) Next assume that $\gamma_{ab}$ and $\gamma_{a'b'}$ share a Steiner vertex $v$.
Since $a\neq a'$, $b\neq b'$, and $\mathcal{E}_{ab}\cap \mathcal{E}_{a'b'}\neq \emptyset$, then \Cref{lem:ellipses} implies that $ab$ and $a'b'$ are nonparallel.
By construction, this means that $\angle (\overrightarrow{ab},\overrightarrow{a'b'})>2\pi$. Note that $v$ decomposes $\gamma_{ab}$ into subpaths $\gamma_{ab}[a,v]$ and $\gamma_{ab}[v,b]$; and similarly $\gamma_{a'b'}=\gamma_{a'b'}[a',v]\cup \gamma_{a'b'}[v,b']$.
These four subpaths lie in cones with apex $v$ and aperture $2\sqrt{\eps}$ centered at rays of directions $\overrightarrow{ab}$, $\overrightarrow{ba}$, $\overrightarrow{a'b'}$, and $\overrightarrow{b'a'}$, respectively. 
Since any two cones intersect  only in $v$, then $\gamma_{ab}$ and $\gamma_{a'b'}$ do not share any other vertex, and they cross transversely at $v$.

Now $G$ yields a drawing of the complete bipartite graph $K(A,B)$ on partite sets $A$ and $B$ as follows: For every $a\in A$ and $b\in B$, the edge $ab$ is drawn as the path $\gamma_{ab}$. This drawing satisfies properties (i) and (ii) of good drawings; but it need not satisfy property (iii) as three or more paths $\gamma_{ab}$ may pairwise cross at a Steiner vertex of $G$. Let $s$ be the number of Steiner vertices of $G$ where two or more paths $\gamma_{ab}$ cross each other. Note that $K(A,B)$ has $|A|\cdot |B|=\Theta(\eps^{-1})$ edges and $|V|=|A|+|B|=\Theta(\eps^{-1/2})$ vertices. By the lower bound on the degenerate crossing number, we have 
$s\geq \Omega((|A|\cdot |B|)^3/(|A|+|B|)^2) = \Omega(\eps^{-2}) = \Omega(\eps^{3/2})\cdot |V|$. 

In the general case, we are given $\eps>0$ and $n\in \mathbb{N}$. We assume w.l.o.g.\ that $\eps=4^{-k}$ for some $k\in \mathbb{N}$. Our basic construction above used $|P_0|=2(2^{k-2}+1)=\Theta(\eps^{1/2})$ points. In general, we construct $P$ as $\lfloor n/|P_0|\rfloor = \Theta(\sqrt{\eps}\, n)$ disjoint copies of $P_0$, where the unit squares are aligned along the $x$-axis at unit distance apart. 
Importantly, the ellipses $\mathcal{E}_{ab}$ corresponding to point pairs $ab$ in disjoint copies of $P_0$ are disjoint. Since the Steiner points are in the paths $\gamma_{ab}\subset \mathcal{E}_{ab}$,
then the set of Steiner vertices needed for distinct copies of $P_0$ are disjoint.  
Since each copy of $P_0$ requires $\Omega(\eps^{-2})$ Steiner vertices, 
then $\Theta(\sqrt{\eps}\, n)$ copies require $\Theta(\eps^{-2}\cdot \sqrt{\eps}\, n)=\Theta(\eps^{3/2}\, n)$ Steiner vertices. 
\end{proof}

\section{Conclusion and Future Directions}

We studied Problem~\ref{problem:1} on the minimum number of Steiner points required to construct a noncrossing Euclidean $(1+\varepsilon)$-spanner, thereby almost resolving Open Problem~17 from the survey of Bose and Smid~\cite{BoseS13}. Our lower and upper bounds, $\Omega_\mu(n/\varepsilon^{3/2-\mu})\leq s(n,\varepsilon)\leq O(n/\varepsilon^{3/2})$, match up to a subpolynomial factor in $\eps$.

An important direction for future work is to understand the trade-off between planarity, \emph{total weight}, and the number of Steiner points. In particular, known constructions have not been analyzed in terms of lightness, and it remains open whether one can achieve near-optimal weight while retaining asymptotically optimal sparsity in noncrossing Steiner $(1+\varepsilon)$-spanners.

As discussed in \Cref{sec:intro}, the number of \emph{edge crossings} is another structural measure of complexity for geometric (\emph{non-Steiner}) spanners. It remains an open problem whether a tighter analysis of the number of crossings in a (greedy) $(1+\eps)$-spanner could potentially match our upper bound $O(n/\varepsilon^{3/2})$: 
   Determine the minimum $c(n,\eps)$ such that every set of $n$ points in the plane admits a (non-Steiner) $(1+\eps)$-spanner with at most $c(n,\eps)$ edge crossings.

We observe that the $(1+\varepsilon)$-stretch constraint is essential for obtaining meaningful bounds. 
If the stretch requirement is relaxed and 
one merely requires that each $ab$-path lies inside the ellipse $\mathcal{E}_{ab}$, 
then for the basic example (\Cref{sec:pre}) with $n=\Theta(\sqrt{1/\eps})$ points, a grid of side length $\sqrt{\varepsilon}$ already yields a solution with $n/\sqrt{\varepsilon}$ Steiner points, but the stretch increases to $\sqrt{2}$. The approximation of straight-line segments using grid paths that satisfy additional nondegeneracy conditions is related to the literature on \emph{digital segments}~\cite{ChiuKST22,ChristPS12}.

Finally, the connection to disk-tube incidences raises another intriguing open problem: How many Steiner points are required to construct a noncrossing geometric graph for a set $P$ of $n$ points in the plane if, instead of the spanner condition, for every $a,b\in P$, one requires an $ab$-path whose Fr\'echet distance from the straight-line segment $ab$ is at most $\varepsilon$?

\bibliographystyle{plainurl}
\bibliography{refs}

\end{document}